\documentclass[cleveref,autoref,thm-restate]{lipics-v2021}
\RequirePackage{amsmath, amssymb, graphics, graphicx,amsthm,mathtools}

\nolinenumbers
\RequirePackage[T1]{fontenc}
\RequirePackage{subcaption,complexity} 
\RequirePackage{wrapfig}
\RequirePackage{subcaption}
\RequirePackage{todonotes}
\RequirePackage{comment}
\RequirePackage{soul}
 \RequirePackage{makecell}
\RequirePackage{todonotes}
\RequirePackage{array}
\RequirePackage{color}
\RequirePackage{cases}
\RequirePackage{xparse}
\RequirePackage{xargs}
\RequirePackage{appendix}
\RequirePackage{xcolor}
\RequirePackage{relsize}
\RequirePackage{xspace}
\RequirePackage{tabularx}
\usepackage{boxedminipage}

\renewcommand{\deg}{\ensuremath{\mathrm{deg}}}

\theoremstyle{remark}

\newtheorem{algorithm}[theorem]{Algorithm}
\newtheorem{reduction}[theorem]{Reduction}
\crefname{observation}{Observation}{Observations}
\Crefname{observation}{Observation}{Observations}

  \usepackage{import}
\hideLIPIcs

\newcommand{\sdmstp}{\textsc{Specified Degree MST}\xspace}
\newcommand{\setmstp}{\textsc{Set of Degrees MST}\xspace}

\newcommand{\weightf}{\ensuremath{\omega}}
 \DeclareMathOperator{\ctw}{ctw}
\DeclareMathOperator{\pw}{pw}
\DeclareMathOperator{\tw}{tw}
\DeclareMathOperator{\cw}{cw}
\DeclareMathOperator{\td}{td}
\DeclareMathOperator{\vcn}{vcn}
\DeclareMathOperator{\rk}{rk}

 \newcommand{\req}{D}

\usepackage{drawings}

\graphicspath{{./graphics/}}

 \usepackage[most]{tcolorbox}
\usepackage{xcolor}

\newtcolorbox{nicequote}{
  enhanced,
  colback=gray!5,
  colframe=gray!50,
  boxrule=0.5pt,
  arc=2pt,
  left=3em,      right=1em,
  top=1em,
  bottom=1em,
  overlay={
    \node[
      anchor=north west,
      text=gray!45,
      font=\fontsize{40}{40}\selectfont
    ] at ([xshift=0.35em,yshift=-0.25em]frame.north west) {\textbf{``}};
  }
}

 \title{Not All Degree Constraints Are Created Equal\\ when Computing Spanning Trees}
\titlerunning{Not All Degree Constraints Are Created Equal when Computing Spanning Trees}  \author{Narek Bojikian}{Humboldt Universität zu Berlin, Germany}{bojikian@hu-berlin.de}{https://orcid.org/0000-0003-1072-4873}{}
\author{Alexander Firbas}{TU Wien, Austria}{alexander.firbas@tuwien.ac.at}{https://orcid.org/0009-0007-2049-2144}{FWF Project 10.55776/Y1329 and WWTF Project 10.47379/ICT22029}
\author{Robert Ganian}{TU Wien, Austria}{rganian@gmail.com}{https://orcid.org/0000-0002-7762-8045}{FWF Project 10.55776/Y1329 and WWTF Project 10.47379/ICT22029}
\author{Hung P. Hoang}{TU Wien, Austria}{phoang@ac.tuwien.ac.at}{https://orcid.org/0000-0001-7883-4134}{FWF Projects 10.55776/Y1329 and ESP1136425}
\author{Krisztina Szil\'{a}gyi}{Czech Technical University, Czechia}{krisztina.szilagyi@fit.cvut.cz}{https://orcid.org/0000-0003-3570-0528}{Supported under the project Robotics and advanced industrial production (reg. no. CZ.02.01.01/00/22\_008/0004590) and CTU Global Postdoc Fellowship Program.}

\Copyright{Narek Bojikian, Alexander Firbas, Robert Ganian, Hung P. Hoang, Krisztina Szil\'{a}gyi}

\authorrunning{N. Bojikian, A. Firbas, R. Ganian, H. P. Hoang, K. Szil\'{a}gyi}

\keywords{degree-constrained spanning trees, parameterized complexity, treedepth}

\ccsdesc[500]{Theory of computation~Parameterized complexity and exact algorithms}

\begin{document}

\maketitle

\begin{abstract}
We study the computation of minimum spanning trees subject to local degree constraints. Recent work (ICALP 2026) established that three natural formalizations of this problem share the exact same parameterized complexity under standard structural graph parameters, including treewidth, pathwidth and clique-width. This applies to the cases where every vertex has a single target degree (\textsc{Specified Degree MST}), or a degree upper bound (\textsc{Bounded Degree MST}), or is equipped with a set of admissible degrees (\textsc{Set of Degrees MST}).

In this paper, we investigate these problems under more restrictive parameterizations and reveal that their complexity landscapes fundamentally diverge on bounded-treedepth graphs. Specifically, we prove that the former two problems are fixed-parameter tractable when parameterized by the treedepth of the input graph. In sharp contrast, we show that \textsc{Set of Degrees MST} remains $\W[1]$-hard parameterized by treedepth, even when combined with the feedback vertex number (i.e., deletion distance to treewidth $1$). Finally, we show that this divergence seems to be specific to treedepth: we exclude an analogous $\W[1]$-hardness result for \textsc{Set of Degrees MST} w.r.t.\ the vertex cover number and also rule out fixed-parameter algorithms for the former two problems w.r.t.\ deletion distance to constant pathwidth.
\end{abstract}

\section{Introduction}
\label{sec:intro}
The computation of minimum-cost spanning trees is a foundational topic in computer science, with classical methods like Prim's and Kruskal's algorithms serving as staple introductions to graph theory. Yet, in some settings we require spanning trees which possess additional properties (beyond connectivity and minimum weight). 
 In this article, we build on the recent work of Bojikian, Firbas, Ganian, Hoang and Szil\'agyi who investigated three natural formalizations of the task of finding minimum-cost spanning trees which satisfy prescribed degree constraints~\cite{BFGHS26}:

\begin{nicequote}
    Given a graph $G$ with polynomially-bounded edge weights and a constraint function $\req \colon V(G)\rightarrow 2^{\mathbb{N}}$, our aim is to determine whether there is a spanning tree $T$ of $G$ such that for each $v\in V(G)$, $\deg_T(v)\in \req(v)$---and if the answer is positive, output one of minimum cost. Depending on the form of the constraint function, we distinguish between the following three computational problems:		
    \begin{enumerate}
    \item in \textsc{Set of Degrees Minimum Spanning Tree (MST)}, $\req$ maps each vertex to a set of integers;
    \item in \textsc{Bounded Degree MST}, all sets in the image of $\req$ are of the form $\{1,\dots,d\}$ for some $d\in\mathbb{N}$;
    \item in \textsc{Specified Degree MST}, the image of $\req$ is a set of singletons.
\end{enumerate}
\end{nicequote}

Although these problems form a hierarchy from most to least general\footnote{\textsc{Specified Degree MST} can easily be reduced to \textsc{Bounded Degree MST} without changing $G$~\cite{BFGHS26}.}, each constraint type is fundamentally important. \textsc{Bounded Degree MST} serves as a crucial base case for the \emph{Thin Tree Conjecture}~\cite{Goddyn,DBLP:conf/focs/KleinO23} and is widely studied in approximation settings~\cite{FurerR92,DBLP:conf/focs/Goemans06,SinghL15}. Meanwhile, \textsc{Specified} and \textsc{Set of Degrees MST} act as natural counterparts to the classical \textsc{$f$-Factor} and \textsc{General $f$-Factor} problems~\cite{tutte1952factors,Cornuejols88,GabowS21,GabowS21a}, whose \NP-hard connected variants are also well-established~\cite{ellingham2002connected,CornelissenHMNR18,GanianNORR19}.

The aforementioned preceding paper~\cite{BFGHS26} established a comprehensive classification of fine-grained running times for these degree-constrained spanning tree problems across various structural graph parameters under the Exponential Time Hypothesis (ETH) and its strong variant (SETH)~\cite{ImpagliazzoP01,ImpagliazzoPZ01}. Remarkably, the authors demonstrated that all three problems share the exact same complexity for each parameter considered in their work: their lower bounds apply to the most restrictive \textsc{Specified Degree MST}, while their algorithmic upper bounds accommodate the most general \textsc{Set of Degrees MST}. 
 For clique-width ($\cw$), they provided an ETH-tight $n^{\mathcal{O}(\cw)}$-time algorithm on unweighted graphs. For treewidth ($\tw$) and pathwidth ($\pw$), they established single-exponential \XP-algorithms along with an (almost-matching) SETH lower bound which, among others, excludes fixed-parameter tractability.
 Finally, they proved that when parameterized by cutwidth ($\ctw$), the problems admit a SETH-tight $\mathcal{O}^*(3^{\ctw(G)})$-time algorithm.

\subparagraph{Contributions.} 
 In this article, we revisit these problems under the lens of more restrictive parameterizations than treewidth. In particular, we focus on \emph{treedepth}, a fundamental structural parameter that characterizes how close a graph is to a star forest. Perhaps the most intuitive and algorithmically useful characterization of the treedepth $\td(G)$ of a graph $G$ is via recursive vertex elimination: it is exactly the minimum depth of a recursion tree in which, at each step, one deletes a single vertex and recurses independently on the newly formed connected components until each such component is a singleton. Because graphs with bounded treedepth cannot contain long paths, bounded treedepth constitutes a strictly stronger restriction than bounded treewidth. This---along with its fundamental ties to graph sparsity~\cite{sparsitybook} and space-efficient algorithms~\cite{PilipczukW18,NederlofPSW23}, including more recent breakthroughs on fine-grained logic-based meta theorems~\cite{DBLP:conf/soda/BergougnouxCS26}---makes it a natural parameterization that has been successfully employed for a variety of problems. 
 
As our first result, we show that treedepth can indeed be leveraged to push the boundaries of tractability for some variants of degree-constrained spanning trees:

\begin{restatable}{theorem}{thmtd}
\label{thm:td}
\textsc{\textup{Bounded Degree MST}} is in \FPT\ w.r.t.\ the treedepth of~$G$.
\end{restatable}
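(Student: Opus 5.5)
We would prove the statement by a dynamic program over a treedepth decomposition, that is, a rooted elimination forest $F$ of depth $d=\td(G)$ in which every edge of $G$ joins an ancestor--descendant pair. Such an $F$ can be computed in \FPT\ time. The central difficulty is that a vertex high in $F$ may need to receive arbitrarily many tree edges from many subtrees. We cannot record its exact used degree in the DP state, since that would give an \XP\ running time, as in the treewidth algorithm of~\cite{BFGHS26}. The plan is to exploit the \emph{upper-bound} nature of the constraints through a matroid/exchange argument, so that only a bounded amount of information per ancestor has to be stored.

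For a node $v$ of $F$, let $A_v$ be its at most $d$ ancestors (including $v$), and let $G_v$ be the subgraph induced by the subtree of $v$, together with the edges to $A_v$. A partial solution in $G_v$ is a forest $T'$ satisfying the following two conditions. First, every vertex strictly below $v$ meets its bound and lies in a component of $T'$ containing some vertex of $A_v$. Second, each ancestor $a\in A_v$ has some used degree $c_a$. The \emph{boundary type} of $T'$ is the partition of $A_v$ induced by the components of $T'$ (restricted to $A_v$), together with the vector $(c_a)_{a\in A_v}$.

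The key step is to replace the unbounded vector $(c_a)$ by a bounded description. For a fixed connectivity partition, the set of achievable degree vectors on $A_v$, together with their minimum costs, should have a structure coming from the intersection of a graphic matroid with degree-bounded (partition-like) constraints. We would aim to prove a \emph{representative} lemma of the following form. Among the optimal partial solutions, one can choose, for every boundary type, a family of size $f(d)$ such that any extension by the rest of the graph can be rerouted to one of them. The rerouting uses an exchange argument: if a solution uses $k$ more edges at ancestor $a$ from subtree $C_1$ and fewer from $C_2$, swap edges while preserving connectivity, which is possible because the bounds are monotone ($\{1,\dots,d(v)\}$ is downward closed except for the lower bound $1$). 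Concretely, we would try to show that the optimal cost as a function of the degree vector is M-convex (discrete convex) in each subtree. Since merging children is an infimal convolution of M-convex functions, the combination over many children can then be done greedily, for example by solving a min-cost flow or a weighted matroid intersection with degree capacities, rather than by enumeration.

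At an ancestor $a$, the children subtrees are then combined as follows. We guess the at most $d^{O(d)}$ connectivity patterns and solve the resulting allocation of degree budgets by a polynomial matroid-intersection or flow subroutine. This yields a running time of $f(d)\cdot n^{O(1)}$. The main obstacle is the representative/M-convexity lemma: connectivity interacts with degrees, and it is not obvious that the cost function is convex. If that fails, the fallback is to classify children of each node into $f(d)$ types by boundary type, and to use an integer-programming-in-fixed-dimension (Lenstra) formulation for how many children of each type to take with which degree contribution. This is the argument that would fail for \textsc{Set of Degrees MST}, which has non-convex degree sets, consistent with its hardness.
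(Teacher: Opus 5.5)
\textbf{There is a genuine gap.} You correctly identify the obstacle: an ancestor can receive unboundedly many tree edges from unboundedly many subtrees, so exact used degrees cannot be part of a DP state. However, the step that is supposed to overcome it is only conjectured, and the fallback does not work.

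\emph{The M-convexity route.} The representative/M-convexity lemma carries the whole proof, and your justification does not establish it. Shifting load at $a$ from $C_1$ to $C_2$ is not a local edge swap. Making $C_2$ send more edges to $a$ means re-solving $C_2$ internally, subject to the upper bounds of its own vertices and to acyclicity. That is the intersection of the graphic matroid with degree bounds, the structure that makes degree-bounded spanning trees \NP-hard, so no matroid exchange property comes for free. Even if you grant M-convexity, two things are missing:
\begin{itemize}
\item You do not say how the merged cost function at a node, whose domain has size up to $n^{d}$ per connectivity pattern, is represented and passed upward in size $f(d)\cdot n^{O(1)}$. An explicit table is \XP.
\item Guessing the node's own pattern costs $d^{O(d)}$, but deciding which of the possibly many children supply the (at most $d-1$) non-trivial partitions is not covered by that guess.
\end{itemize}

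\emph{The fallback.} A child is not described by one boundary type. It is described by a whole cost function over pairs (partition, degree vector). Its degree contributions are unbounded, since a hanging child may attach many of its vertices to the same ancestor, and the weights are arbitrary. So there are unboundedly many child types and unboundedly many contributions per type. Lenstra's algorithm needs the number of variables bounded in $d$, and you do not have that.

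\emph{What the paper does instead.} It never stores degrees anywhere. It writes a single ILP and solves it with the Kouteck\'y--Levin--Onn algorithm for ILPs with bounded $\|A\|_\infty$ and bounded dual treedepth. That algorithm tolerates unboundedly many variables and large right-hand sides, which is exactly what a Lenstra-type argument cannot.
\begin{itemize}
\item Each vertex $v$ selects, via binary variables, a subset of its at most $d$ edges to ancestors, a partition of $A_v^+$, and a partition of $A_v^-$.
\item Cut constraints force the selected partition of $A_v^+$ to be finer than the join of $v$'s own-edge partition with the children's selected partitions.
\item A rank constraint forces the rank of the selected partition to equal the sum of the ranks of the joined partitions. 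Together with the cut constraints, this gives equality of the partitions and acyclicity of the union of the forests.
\item Degrees appear only in the linear constraint $\sum_S |S|s_{v,S}+\sum_{w\in D_v^-}\sum_{S\ni vw}s_{w,S}\le \req(v)$. Its coefficients are at most $d$, and $\req(v)$ sits on the right-hand side.
\end{itemize}
Every constraint attached to $v$ uses only variables of $v$ and its descendants. Hence the dual graph inherits the elimination tree, with each vertex replaced by a path of $f(d)$ constraints.
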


The proof of Theorem~\ref{thm:td} follows by encoding the problem into an Integer Linear Program (ILP) which is ``well-structured'': both the multiplicative coefficients and the treedepth of the ILP's dual graph are bounded by a function of $\td(G)$.
 It is known that such instances of ILP admit a fixed-parameter algorithm~\cite[Theorem 6]{treedepthilp}; however, encoding acyclicity and connectivity in such an ILP is challenging and our approach relies on insights into the structure of hypothetical solutions and a notion of rank inspired by graphic matroids.
 
With the above theorem in hand and recalling the previous results on the three problems, one could hope for a way of lifting tractability also to \textsc{Set of Degrees MST}---either via a more refined ILP encoding or via different techniques entirely. A natural stepping stone in this direction would be to attack instances where each vertex must have one of two possible degrees in the spanning tree. Surprisingly, we show that such an algorithm is impossible under current complexity assumptions, and this holds even on graphs which are ``almost trees'' (in the sense of having small feedback vertex number). This means that---unlike for all parameters studied in the preceding work---\textbf{which problem variant we target actually matters on bounded-treedepth graphs}.
 
\begin{restatable}{theorem}{thmtdhard}
\label{thm:tdhard}
\textsc{\textup{Set of Degrees MST}} is \W\textup{[1]}-hard w.r.t.\ the treedepth plus the feedback vertex number of $G$, even on unweighted graphs with $|\req(v)|\leq 2$ for each vertex $v$.
\end{restatable}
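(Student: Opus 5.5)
We reduce from a \W[1]-hard problem whose hardness survives parameterization by a small set of ``hub'' vertices, namely \textsc{Unary Bin Packing} parameterized by the number $k$ of bins. This problem is \W[1]-hard even when item sizes are encoded in unary (Jansen, Kratsch, Marx, Schlotter). The intended graph consists of $\mathcal{O}(k)$ hub vertices together with a collection of small gadgets, each of constant treedepth, where every gadget is attached only to the hubs. Deleting the hubs then leaves a forest of bounded-depth trees. Both the treedepth and the feedback vertex number of $G$ are therefore bounded by $\mathcal{O}(k)$. The unary encoding guarantees that the instance has polynomial size.

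\textbf{Construction.} For each bin $j\in[k]$ we introduce a hub vertex $b_j$. We fix a target degree for $b_j$ that encodes the capacity $B$, so that the number of tree edges incident to $b_j$ is forced to be exactly the load placed in bin $j$ (plus a fixed offset). For an item $i$ of size $s_i$, we need to force an ``all-or-nothing'' choice: exactly one hub receives all $s_i$ units of the item.

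\begin{enumerate}
\item We build a star gadget for item $i$ with a centre $c_i$ and, for each $j\in[k]$, a group of $s_i$ leaf-like vertices $x_{i,j,1},\dots,x_{i,j,s_i}$. Each of these vertices is adjacent to $c_i$ and to $b_j$.
\item The spanning tree must connect each $x_{i,j,\ell}$ either through $c_i$ or through $b_j$. We give each $x$-vertex $\req=\{1\}$, so that it is a leaf attached to exactly one of its two neighbours.
\item The centre $c_i$ receives the degree set $\req(c_i)=\{(k-1)s_i, (k-1)s_i+1\}$, in a variant suitably offset by the edge that connects $c_i$ to the rest of the tree.
\item Here the two-element sets are essential. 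Without further gadgetry, nothing yet forces the $s_i$ vertices sent to the hubs to lie in the \emph{same} group $j$.
\end{enumerate}

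\textbf{Enforcing all-or-nothing.} To fix the problem in the last step, we chain the vertices within group $(i,j)$ into short paths. We use intermediate vertices with $\req=\{1,3\}$ or $\{2,\dots\}$-type two-element sets, arranged so that a group is either entirely ``hub-attached'' or entirely ``centre-attached''. Acyclicity and connectivity of the spanning tree forbid mixed states in which a group is split. The degree constraint at $c_i$ then forces exactly one group to be hub-attached. Every gadget remains a tree of constant depth hanging off the hubs, so the parameter bounds are preserved. A global root vertex adjacent to all hubs, or the hubs chained together, ensures that connectivity is always achievable.

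\textbf{Correctness and main obstacle.} The forward direction is straightforward: from a packing we read off which group of each item to attach to its hub. The backward direction is the main obstacle. It requires a careful case analysis showing that any spanning tree meeting the two-element degree sets must respect the all-or-nothing semantics. In particular, we must exclude solutions that achieve connectivity via unintended routes through the hubs, such as paths hub–gadget–hub. To handle this, I would first try degree parity arguments: for each hub, count edges into gadgets, and use the fact that the gadget trees have exactly $|V|-1$ edges in total, to show that every gadget has a unique connection pattern. If parity alone does not suffice, I would add a private pendant ``connector'' vertex for each gadget to pin down how the gadget attaches to the rest of the tree.
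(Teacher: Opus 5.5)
There is a genuine gap: the reduction is never pinned down at the one place where all the difficulty lies. Your source problem is fine. Unary \textsc{Bin Packing} with $k$ bins is essentially the special case of the paper's \textsc{SMPSS} with $d=k$, $P_i=\{s_ie_j : j\in[k]\}$ and target $(B,\dots,B)$.

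The constraint at $c_i$ only counts how many of the $ks_i$ leaves go to hubs, as you note yourself. Moreover, $\{(k-1)s_i,(k-1)s_i+1\}$ (offset or not) is an interval. Interval constraints $l(v)\le \deg_T(v)\le u(v)$ can be written into the ILP of \cref{sec:fpt_td} exactly like the upper bounds, without changing coefficients or dual treedepth. So the set at $c_i$ cannot be where hardness comes from. It must come from gapped sets such as $\{1,3\}$, whose role you leave unspecified.

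Your proposed repair, chaining each group into paths, cannot work as stated. If a path ties together all $s_i$ vertices of a group, it has at least $s_i$ vertices, so $\td(G)\ge\log_2 s_i$. In hard instances the sizes are not bounded by any function of $k$, which contradicts your claim of constant-depth gadgets. Long chains are what the paper uses in \cref{sec:w_deletion}, precisely because there only pathwidth must stay bounded. If instead the paths have constant length, they cannot tie a whole group together. The backward direction, which is the entire content of the proof, is then left as ``try parity, otherwise add connectors,'' with no argument.

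The missing idea is a forcing device combined with a star-shaped switch whose admissible degrees are $1$ and the full degree. The paper's root $r$ has $\req(r)=\{\deg_G(r)\}$, so every edge at $r$ is forced into $T$; your root serves only for connectivity.

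\begin{itemize}
\item \emph{Switch.} For each option $(i,j)$ there is a gate $g_i^j$ adjacent to $r$, to a bridge $b_i^j$, and to $\|p_i^j\|$ output vertices. Each output has $\req=\{1\}$ and is adjacent only to $g_i^j$ and to the hub $t_{d_i^j}$. Set $\req(g_i^j)=\{1,\|p_i^j\|+2\}$. Since the root edge is forced, the gate keeps either all its edges or only the root edge, so its outputs go to the hub all-or-nothing. This gadget is a star around the gate, not a long path.
\item \emph{Selection.} Exactly one option per item is chosen by $s_i$ with $\req(s_i)=\{1\}$, adjacent to inputs ${s'}_i^j$ with $\req=\{1,3\}$. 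Each ${s'}_i^j$ is also adjacent to $r$ and to $b_i^j$, where $\req(b_i^j)=\{1\}$. The forced root edge plus the edge to $s_i$ push ${s'}_i^j$ to degree $3$. Then $b_i^j$ hangs on ${s'}_i^j$, the gate loses its bridge edge, and it must open.
\end{itemize}

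With this in place, the backward direction is a few lines of local degree reasoning, with no parity needed. The forward direction needs only a handshake count to certify $|E(T)|=|V(T)|-1$. Deleting $r$ and the $d$ hubs leaves a forest whose paths have at most nine vertices, giving feedback vertex number at most $d+1$ and $\td(G)\le d+10$.
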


At the heart of the proof of Theorem~\ref{thm:tdhard} lies a carefully constructed gadget that can model a choice between either summing up two $k$-dimensional integer vectors or keeping one of the vectors intact. This gadget then allows us to reduce from a known \W[1]-hard variant of the \textsc{Multidimensional Subset Sum} problem~\cite{GanianOR23}.

We complement the above results by showing that the diverging behavior between \textsc{Set of Degrees MST} and the ``simpler'' two problem variants seems to occur specifically on treedepth. On one hand, when parameterized by the vertex cover number---a more restrictive parameterization than treedepth which merely measures the vertex deletion distance to singleton components---  we can obtain tractability for the former case at least on unweighted graphs, by reducing the problem to a generalization of (B-)matching~\cite{DBLP:journals/algorithmica/GutinKSSY12}:

\begin{restatable}{theorem}{thmvcfpt}
\label{thm:vcfpt}
On unweighted graphs, \textsc{\textup{Set of Degrees MST}} is fixed-parameter tractable w.r.t.\ the vertex cover number.
\end{restatable}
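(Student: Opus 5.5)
Let $C$ be a vertex cover of $G$ with $|C|=k$; by the standard branching algorithm it can be found in $\mathcal{O}^*(2^k)$ time. Then $I=V(G)\setminus C$ is an independent set, and every edge of a spanning tree $T$ either lies inside $C$ or joins $C$ to $I$. The plan is to guess the part of $T$ that carries the global structure and then solve the remaining degree-constrained assignment with a matching-type subroutine.

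\textbf{Step 1 (guessing the skeleton).} We guess the edge set $F=E(T)\cap E(G[C])$; there are at most $2^{\binom{k}{2}}$ choices. Each $I$-vertex $v$ attaches to a set $N_T(v)\subseteq N(v)\subseteq C$. We classify the $I$-vertices by type $N(v)\subseteq C$, giving at most $2^k$ types.

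We call $v$ a \emph{leaf} if $\deg_T(v)=1$ and a \emph{connector} otherwise. Contracting each component of $F$ and counting edges shows that $T$ has at most $k-1$ connectors; more precisely, the sum of $(\deg_T(v)-1)$ over all connectors is at most $k-1$. We therefore also guess the connector structure: a multiset of at most $k-1$ pairs (type, set $S\subseteq$ type with $|S|\ge 2$), such that $F$ together with the hyperedges $S$ forms a spanning tree on $C$ in the hypergraph sense. The number of such guesses is bounded by a function of $k$. A leaf with parent $c$ never affects connectivity or acyclicity.

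\textbf{Step 2 (reduction to matching).} For a fixed guess, we must assign each guessed connector slot (type $\tau$, set $S$) to a distinct $I$-vertex $v$ of type $\tau$ with $|S|\in \req(v)$. Every remaining $I$-vertex must satisfy $1\in \req(v)$ and becomes a leaf attached to some neighbor $c\in C$. Each $c\in C$ then needs a number $\ell_c$ of leaf children such that $\deg_F(c)+(\text{number of connector hyperedges containing } c)+\ell_c\in \req(c)$.

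Since $\req(c)$ may be an arbitrary set, this is exactly a general-factor / B-matching instance with degree sets on the $C$ side. Such instances are hard in general, but here the $C$ side has only $k$ vertices and the $I$-vertices are grouped into $2^k$ types. This lets us encode the instance as an ILP whose number of variables is bounded in $k$: one variable $x_{\tau,c}$ for the number of leaves of type $\tau$ attached to $c$. The constraints are that $\sum_c x_{\tau,c}$ equals the number of non-connector type-$\tau$ vertices, and that the total leaf count at $c$ equals $\ell_c$.

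The constraint $\ell_c\in \req(c)$ is not linear, and this is where I expect the main obstacle. The fix I would try is to guess, for each $c$, whether $\ell_c$ is small (at most some bound $g(k)$, in which case we guess its exact value) or large. In the large case, what matters is membership of the total degree in $\req(c)$. Because $\ell_c$ ranges over an interval determined by the type counts, we would guess the exact value $d_c\in \req(c)$ as a target only when a structural argument shows that the feasible values of $\ell_c$ form an interval. With fixed targets the problem becomes a flow/transportation problem and is solvable in polynomial time. By Hall-type reasoning, given the connector assignment, the set of feasible vectors $(\ell_c)_c$ is the set of integer points of a polymatroid-like region: it is exactly characterized by $\ell_c\ge 0$, $\sum_c \ell_c=|I_{\mathrm{leaf}}|$, and $\sum_{c\in X}\ell_c \ge |\{\text{leaves of types}\subseteq X\}|$ for all $X\subseteq C$.

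\textbf{Step 3 (choosing the degrees).} It remains to choose $\ell_c$ with $\ell_c+\mathrm{base}_c\in \req(c)$ inside this region. This is a $k$-dimensional search with $2^k$ linear constraints. We would handle it by guessing which constraints are tight and the order of the values, and then, for each $c$, scanning the elements of $\req(c)$ (a polynomial-size set) greedily. Concretely, we would process the subset constraints in a chain and use the fact that for each $c$ only the values of $\req(c)$ nearest to certain thresholds matter. This yields $n^{\mathcal{O}(1)}$ candidates per coordinate, pruned to $f(k)$ relevant ones.

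Similarly, the connector assignment is a bipartite matching between $\le k-1$ slots and $I$-vertices and can be checked via Hall's condition per type. Since the instance is unweighted, any feasible tree suffices, so no optimization is needed.

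I expect Step 3, the selection of degrees from arbitrary sets $\req(c)$ under the polymatroid constraints in FPT time, to be the crux. If this step is not settled by the chain/threshold argument, the fallback is to invoke a known FPT result for general factors whose one side has bounded size.
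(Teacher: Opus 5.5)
Your Steps~1 and~2 are sound and follow the same architecture as the paper's proof. The paper also bounds the number of non-leaf vertices outside the cover $X$ by $k-1$, using the same edge count. It guesses these vertices as a multiset of types. There a type records both $N(v)$ and $\req(v)$, the latter pruned to $[k]_0$, so an arbitrary representative set $S$ can be fixed and no slot-to-vertex matching is needed. It then enumerates the spanning trees $T_0$ of $G[X\cup S]$. What remains is exactly your residual problem: a bipartite general factor instance with list $\{1\}$ on every remaining independent vertex and list $\req(c)-\deg_{T_0}(c)$ on each of the $k$ cover vertices. Your variant is equally fine: neighbourhood-only types plus a per-type matching, which must also cover every vertex with $1\notin\req(v)$ (you state this only implicitly).

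The genuine gap is Step~3, i.e.\ solving that residual instance in FPT time. Your Gale/Hall description of the feasible vectors $(\ell_c)_c$ is correct, but it does not give an FPT algorithm. The claim that one can prune to $f(k)$ candidate values per coordinate ``near thresholds'' is false, and guessing exact targets $d_c$ costs $n^{k}$, which is only XP. Take a single type $\{c_1,c_2\}$, so every leaf is adjacent to both cover vertices. The region is just $\ell\ge 0$, $\ell_1+\ell_2=L$, and with $A_c=\req(c)-\mathrm{base}_c$ the admissible values of $\ell_1$ are exactly $A_1\cap(L-A_2)$. This can be a single arbitrary interior element, determined only jointly by both lists. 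In general you must combine the lists jointly (a sumset/knapsack computation), and with $2^k$ types the subset constraints couple the coordinates, so a naive DP over partial sums needs $n^{2^k}$ states.

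The missing idea is an uncrossing step. Consider the support graph, where type $\tau$ is joined to $c$ when some leaf of type $\tau$ is attached to $c$. If it contains a cycle, shifting one leaf alternately along it preserves every $\ell_c$ and every type count. So you may assume the support is a forest on at most $2^k+k$ nodes. Guessing this forest ($f(k)$ choices) leaves a tree-structured instance that a sumset DP decides in polynomial time. This is in essence the FPT algorithm of Gutin et al.\ for general factors in bipartite graphs, parameterized by one side when every vertex on the other side has a list of size at most one. The paper simply invokes it as a black box on $I_S^{T_0}$. Your stated fallback is precisely that citation, so if you drop the threshold argument and commit to it, your proof is complete and coincides in substance with the paper's.
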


On the other hand, as \textsc{Set of Degrees MST} is \W[1]-hard w.r.t.\ the feedback vertex number (which is the vertex deletion distance to treewidth $\leq 1$), one might wonder whether the vertex deletion distance to bounded-treewidth  or bounded-pathwidth graphs is an alternative structural measure where the parameterized complexity of the problems differ. We show that this is not the case via a separate lower bound:

\begin{restatable}{theorem}{thmpwdeletion}
\label{thm:pwdeletion}
\textsc{\textup{Specified Degree MST}} is \W\textup{[1]}-hard when parameterized by the vertex deletion distance to graphs of pathwidth $\leq 4$ and treewidth $\leq 3$, even on unweighted graphs.
\end{restatable}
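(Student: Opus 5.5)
We will give a parameterized reduction from \textsc{Multicolored Clique} (or, more conveniently, from a W[1]-hard \textsc{Multidimensional Subset Sum}/unary bin-packing-type problem with $k$ dimensions, as used for Theorem~\ref{thm:tdhard}) to \textsc{Specified Degree MST}. The instance will consist of a \emph{modulator} $X$ of size $f(k)$ and a remainder $G-X$ whose components are long paths (or caterpillar-like strips) of pathwidth $\le 4$ and treewidth $\le 3$. Because $G-X$ may contain long paths, treedepth is unbounded; this is exactly what must break the ILP approach behind Theorem~\ref{thm:td}. The basic idea is to let the exact degree of each modulator vertex $x\in X$ act as a counter: $\req(x)$ is a prescribed number, and each component of $G-X$ chooses how many tree edges it sends to each $x$. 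Degree targets in $X$ then encode sum constraints, and connectivity plus acyclicity force each component to make a consistent, ``one-hot'' choice.

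We construct the instance in the following order.

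\textbf{Choice gadgets.} For each colour class $i$ of the clique instance, we build a long path $P_i$ whose internal vertices have specified degree $2$, interleaved with pendant-like attachment vertices of degree $1$ or $2$. The gadget is designed so that every valid spanning-tree restriction must ``cut'' $P_i$ at exactly one position $j$, corresponding to choosing vertex $v^i_j$. The left part hangs off a modulator vertex $a_i$ and the right part hangs off $b_i$. The degrees of $a_i$ and $b_i$ then encode $j$ and $n-j$.

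\textbf{Incidence checking.} For each pair $i<i'$, we add an edge-selection path whose cut position encodes an edge. Modulator counter vertices $c_{i,i'}$, with exact degree targets, check that the chosen edge's endpoints agree with the choices in $P_i$ and $P_{i'}$. We will do this via the standard trick of encoding indices as $j$ and $M-j$, so that two exact degree sums force equality.

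\textbf{Parameter bounds.} Finally, we verify that $G-X$ is a disjoint union of ladder-like strips (each path vertex together with at most two attached vertices), which gives pathwidth $\le4$ and treewidth $\le3$. The modulator has size $O(k^2)$.

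\textbf{Main obstacle.} The hardest step is the soundness of the cut gadget. Acyclicity and connectivity must force \emph{exactly one} cut per path, and must prevent degree budgets at $X$ from being met by ``split'' solutions that cut a path several times while compensating elsewhere. Our first attempt will use a path whose internal degree-$2$ targets, together with triangles to a single modulator vertex, make any second cut create either an isolated segment or a cycle through $X$. We will then argue via a counting identity: $|E(T)|=|V|-1$ combined with the fixed degree sum forces the number of $G-X$/$X$ tree edges to be exactly $2$ per path. If triangles push the width above the bound, we will replace them by length-$2$ detours, which keep pathwidth at most $4$.
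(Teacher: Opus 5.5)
The step you flag as the ``main obstacle'' is where the proposal has a genuine gap, and the remedies you sketch do not close it. In \textsc{Specified Degree MST}, a path vertex $v$ with $\req(v)=2$ that also has side edges (to a pendant and to a counter in $X$) may take one path edge and one side edge. A path can therefore break into several segments in mixed configurations, and the number of edges it sends to a counter is no longer tied to a single choice. Nothing in your construction forces ``exactly one cut''. The counting identity cannot help: every tree with $\deg_T=\req$ automatically satisfies $2|E(T)|=\sum_v\req(v)=2|V|-2$, and this single global equation says nothing about how many $X$-edges an individual path uses. The triangle/detour idea is not specified, and no argument is given that it excludes mixed configurations.

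The ingredient the paper uses here is the 2-label gadget of Cygan, Kratsch and Nederlof from their Hamiltonian Cycle lower bound. A degree-$4$ vertex with requirement $2$, whose incident edges are split into two label pairs, is replaced by a small graph of requirement-$2$ vertices. This forces every solution tree to use both edges of one pair. The paper applies it to every $p_{i,j,k}$ with $k\ge 1$: the two path edges form one pair, and the edges to the pendant $p'_{i,j,k}$ and to $t_{d_i^j}$ form the other. Consistency then propagates along the path (a pass-through vertex forces its path neighbours to be pass-through, and likewise for side vertices), so each vector path is all-or-nothing.

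Even with such a gadget, your encoding does not work as stated, for three reasons.
\begin{enumerate}
\item Modulator degrees are prescribed, so $\deg_T(a_i)$ cannot ``encode'' a chosen $j$. A choice can only be certified by several gadgets jointly meeting a fixed target.
\item Once consistency is enforced, adjacent interior path vertices cannot switch configuration, so a single ``cut position'' never arises. Also, if the left part is a subpath of $T$, its internal degree-$2$ vertices have no budget left for edges to $a_i$.
\item In a \textsc{Multicolored Clique} reduction, the index chosen for colour $i$ must reach all $k-1$ counters $c_{i,i'}$. In a tree, however, no two vertices can both be adjacent to the same two counters, since that would create a $4$-cycle. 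This broadcast needs a synchronization mechanism you do not provide.
\end{enumerate}

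The paper avoids all of this by reducing from \textsc{SMPSS}, where each vector has one nonzero coordinate, so each gadget feeds a single counter pair.
\begin{itemize}
\item The root $r$ has $\req(r)=\deg_G(r)$, so all its edges are forced.
\item $s_i$ with $\req(s_i)=2$ then picks exactly one path start $p_{i,j,0}$. This is a one-hot choice among parallel paths, not a cut position.
\item $s'_i$ with $\req(s'_i)=|P_i|$ absorbs the far ends of the unselected paths.
\item On the selected path, each $p_{i,j,k}$ with $k\ge1$ takes its edges to $t_{d_i^j}$ and to its pendant.
\item The pendants of unselected paths go to a complementary counter $t'_k$ with $\req(t'_k)=1-\bar t_k+\sum_{i\in[\ell]}\sum_{p\in P_i}p_k$. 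Hence every degree is exact in both configurations.
\end{itemize}
Since $\ell$ is unbounded, $s_i$ and $s'_i$ stay outside $X=\{r,t_1,\dots,t_d,t'_1,\dots,t'_d\}$. The cost of these two vertices, on top of the chained width-$2$ decompositions of the replaced gadgets, is what gives pathwidth at most $4$ and treewidth at most $3$.
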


                An overview of our results is provided in Table~\ref{tbl:new-results}.

\begin{table}
\centering
\small
\begin{tabular}{c|c|c}
Parameter & \sdmstp & \setmstp \\
\hline
Treedepth & FPT (Thm.~\ref{thm:td}) & $W[1]$-hard (Thm.~\ref{thm:tdhard}) \\
Vertex cover number & FPT (Thm.~\ref{thm:td}) & FPT$^\ast$ (Thm.~\ref{thm:vcfpt}) \\
Deletion distance to $\pw \leq 4$ & $W[1]$-hard (Thm.~\ref{thm:pwdeletion}) & $W[1]$-hard (Thm.~\ref{thm:pwdeletion}) \\
\end{tabular}
\smallskip
\caption{Overview of our results for stronger structural restrictions. The vertex-cover algorithm marked by $^\ast$ is for edge-unweighted instances.}
\label{tbl:new-results}
\end{table}

\subparagraph{Paper Organization.} In~\cref{sec:prelim}, we introduce notation that will be used throughout the paper. In~\cref{sec:fpt_td}, we describe an FPT algorithm for \textsc{Bounded Degree MST} parameterized by treedepth. \Cref{section:fvs_w1} shows $W[1]$-hardness of \setmstp parameterized by feedback vertex number plus treedepth. In~\cref{sec:w_deletion}, we prove that \sdmstp is $W[1]$-hard w.r.t. vertex deletion distance to graphs of constant pathwidth. In~\cref{sec:fpt_vc} we show that \setmstp is in FPT parameterized by vertex cover on unweighted graphs. We conclude with possible directions for further research in~\cref{sec:conclusion}.

\section{Preliminaries}
\label{sec:prelim}
For a graph $G$ and a vertex $v$ of $G$, we denote by $\deg_G(v)$ the degree of $v$ in $G$.
We denote by $N_G(v)$ the set of neighbors of $v$ in $G$ (i.e., $N_G(v)=\{u\in V(G):\: uv\in E(G)\}$).
We omit the subscript $G$, when it is clear from context.
For a vertex set $S\subseteq V(G)$, we denote by $G[X]$ the subgraph of $G$ induced by $X$, and for two disjoint subsets $X,Y\subseteq V(G)$, we denote by $G[X,Y]$ the bipartite subgraph of $G$ induced by $X$ and $Y$.

We write $\mathbb N_0=\mathbb N\cup\{0\}$.
For $n\in\mathbb N_0$, let $[n]=\{1,\ldots,n\}$, and let $[n]_0=\{0,\ldots,n\}$.
For a multiset $M$ and an element $x$ of $M$, we denote by $\#_M(x)$ the multiplicity of $x$ in $M$.
Given a set $A\subseteq \mathbb{N}_0$ and some integer $x$, we denote by $A-x$ the set $\{a-x \mid a \in A\}$.

\subparagraph{Problem Definitions.} In \textsc{Set of Degrees MST}, we are given as input a triple $(G,\req,\weightf)$, where $G$ is a graph with edge weights $\weightf:E(G)\to\mathbb Z$ and $\req:V(G)\to 2^{\mathbb N_0}$. We are asked to find a spanning tree $T$ of $G$ of minimum weight such that for each $v\in V(G)$, $\deg_T(v)\in\req(v)$.
\textsc{Bounded Degree MST} and \textsc{Specified Degree MST} are defined analogously. For these two problems, in a slight abuse of notation, we let $\req$ denote an integer-valued function. In \textsc{Bounded Degree MST}, this means we require $\deg_T(v)\leq\req(v)$, and in \textsc{Specified Degree MST}, this means we require $\deg_T(v)=\req(v)$.
If $\weightf(e)=1$ for all edges $e\in E(G)$, then the instance is \emph{unweighted}; in this case, we omit the weight function.

\subparagraph{Graph Parameters.} Given a rooted tree $R$ with root $r$, we say $u$ is an ancestor of $v$ if it lies on the path from $v$ to $r$.
    An \emph{elimination tree} of a connected graph $G$ is a rooted tree $R$ with vertex set $V(R)=V(G)$ such that for every edge $uv\in E(G)$, $u$ is an ancestor of $v$ or $v$ is an ancestor of $u$ in $R$. The \emph{treedepth} of $G$ is the smallest possible depth of an elimination tree of $G$.

The \emph{vertex cover number} of a graph $G$ is the size of a minimum vertex cover in $G$.
The \emph{feedback vertex number} of $G$ is the minimum number of vertices whose removal from $G$ yields an acyclic graph (i.e., a forest).

 \section{FPT by Treedepth}
\label{sec:fpt_td}
\looseness=-1
In this section, we show that \textsc{Bounded Degree MST} admits a fixed-parameter algorithm parameterized by treedepth in the setting where edges can have integer weights.
We first model the problem as an ILP with bounded coefficient size and bounded treedepth of the dual graph, and then solve it with an algorithm by Kouteck{\'{y}}, Levin, and Onn~\cite[Theorem 6]{treedepthilp}.

\subsection{Setup for the Algorithm}
Throughout this section, we fix an instance of \textsc{Bounded Degree MST}, i.e., a connected graph $G$, degree-requirement $\req\colon V(G)\to\mathbb N_0$, edge-weights $\weightf:E(G)\to\mathbb Z$.
Further, we assume a rooted elimination
tree $R$ of $G$ of height $k$ with root $r$ is given.

For $v\in V(G)$, let
\[
    A_{v}^{-}=\{a\in V(G):a\text{ is a proper ancestor of }v\text{ in }R\},
    \qquad
    A_{v}^{+}=A_{v}^{-}\cup\{v\}.
\]
\[
    D_{v}^{-}=\{u\in V(G):u\text{ is a proper descendant of }v\text{ in }R\},
    \qquad
    D_{v}^{+}=D_{v}^{-}\cup\{v\}.
\]
Note that $v$ is also an ancestor/descendant of itself.
For a finite set $X$, let $\mathcal P(X)$ be the set of all partitions of $X$.
We call the elements of a partition blocks.

Each edge of $G$ is \emph{owned} by its deeper endpoint w.r.t. the elimination tree $R$.
For $v\in V(G)$, let
\[
    O(v)=\{va:a\in A_{v}^{-}\cap N_G(v)\},
    \qquad
    \mathcal S_v=2^{O(v)}.
\]
Thus, $O(v)$ is the set of edges owned by $v$, and $\{ O(v) \mid v \in V(G) \}$ forms a partition of $E(G)$.
In the ILP, each vertex $v$ will choose one subset of its owned edges, i.e., one $S \in \mathcal{S}_v$, to be part of the spanning tree.

For $v\in V(G)$, let
\[
    E_{v}=\bigcup_{u\in D_{v}^{+}}O(u),
\]
and let $G_{v}$ be the edge-induced subgraph of $G$ with edge set $E_{v}$.
Thus $G_v$ models the subgraph ``at or below'' $v$ in the elimination tree,
where edges can also ``reach outside'' to ancestors $A^-_v$.
The job of the ILP is now to ensure that the selected edges for $G_v$ for each $v$ 
form a spanning forest of $G_v$ (resp.\ a spanning tree for $v = r$).
Towards this, we track a \emph{connectivity partition} for each $G_v$,
that is, we record into which connected components the spanning forest in $G_v$ connects the ancestors of $v$, $A^+_v$.
At the root, where $G_r = G$, and $A^+_r = \{ r \}$, the only suitable partition is $\{ \{ r \} \}$.
At non-root vertices, we allow potentially all partitions of $A^+_v$, except partitions that include the block $\{v\}$, since selecting such a partition would render $v$ isolated in the selected graph.
Thus, we define for each $v\in V(G)$,
\[
    \mathcal P_{v}^{+}=
    \begin{cases}
        \bigl\{\{\{v\}\}\bigr\}, & v=r,\\[1mm]
        \{\Pi\in\mathcal P(A_{v}^{+}):\{v\}\notin\Pi\}, & v\ne r.
    \end{cases}
\]
For the remainder of this subsection, we introduce algebraic tools that will be important for establishing the correctness of the ILP encoding. 
 For a graph $H$ and a finite set $X$ of vertices,
we write $\Pi(H, X)$ for the connectivity partition induced by $H$ on $X$, that is,
two vertices $u,v \in X$ are in the same block $B \in \Pi(H, X)$ if and only if
$u$ and $v$ are in a shared connected component in $(V(H)\cup X,E(H))$. In particular, vertices in $X\setminus V(H)$ are singleton blocks.

For a partition $\Pi$, let the \emph{rank} of $\Pi$ be defined as 
\[
    \rk(\Pi)=\left|\bigcup_{B\in\Pi}B\right|-|\Pi|.
\]
Note that, mirroring the notion of rank in the graphic matroid,
for a connectivity partition $\Pi$ over a set $X$ induced by some set of edges $E_X$,
$\rk(\Pi)$ is the number of edges 
in any forest on $X$ whose connected
components are exactly the blocks of $\Pi$.
Consequently, a graph on $X$ whose connected components are exactly the blocks of $\Pi$
with more than $\rk(\Pi)$ edges contains a cycle.

For a partition $\Pi$ over a set $X$, we say that a forest $F$ with vertex set $X$ realizes $\Pi$ if the vertex sets of the connected components of $F$ are
exactly the blocks of $\Pi$.
\begin{lemma}\label{obs:rank_edge_lower_bound}
    Let $H$ be a graph or a loopless multigraph and let $X$ be a finite set of vertices.
    Then $\rk(\Pi(H,X))\le |E(H)|$.
	    Moreover, if a partition $\Pi$ is realized by a forest $F$, then
	    $\rk(\Pi)=|E(F)|$.
\end{lemma}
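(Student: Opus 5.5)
We prove the two claims separately, starting with the second because the first reduces to it.

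\emph{Second claim.} Let $F$ be a forest with vertex set $X$ that realizes $\Pi$, so the components of $F$ have vertex sets exactly the blocks $B\in\Pi$. Each component is a tree, and a tree on $|B|$ vertices has $|B|-1$ edges. Summing over the blocks gives
\[
|E(F)|=\sum_{B\in\Pi}(|B|-1)=\Bigl|\bigcup_{B\in\Pi}B\Bigr|-|\Pi|=\rk(\Pi),
\]
using that the blocks are disjoint and cover $X$.

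\emph{First claim.} Consider the graph $H'=(V(H)\cup X,E(H))$; parallel edges are allowed, and loops are excluded by assumption. Set $Y=V(H)\cup X$ and choose a spanning forest $F'$ of $H'$, obtained by keeping a spanning tree in each component. Then $|E(F')|\le|E(H)|$. By the second claim applied to $F'$ over $Y$, we have $|E(F')|=|Y|-c$, where $c$ is the number of components of $H'$.

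It remains to compare this with $\rk(\Pi(H,X))=|X|-|\Pi(H,X)|$. The blocks of $\Pi(H,X)$ correspond exactly to the components of $H'$ that meet $X$; call their number $c_X$, so $|\Pi(H,X)|=c_X$. The remaining $c-c_X$ components avoid $X$, so each contains at least one vertex of $Y\setminus X$, which gives $c-c_X\le |Y\setminus X|$. Therefore
\[
|X|-c_X\le |X|+|Y\setminus X|-c=|Y|-c=|E(F')|\le|E(H)|.
\]
Equivalently, in each component $C$ of $H'$ meeting $X$, a spanning tree has $|V(C)|-1\ge |C\cap X|-1$ edges; summing over these components gives the same bound.

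The only delicate points are isolated vertices of $X\setminus V(H)$, which form singleton components and contribute $0$ on both sides, and multigraph edges, which do no harm because a spanning forest uses at most one edge from each parallel class. No step presents a real obstacle.
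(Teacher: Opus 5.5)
Your proof is correct and follows essentially the same approach as the paper. For the second claim you use ``edges $=$ vertices $-$ components'' for a forest, and for the first you use that every component of $H$ meeting $X$ has at least $|V(C)\cap X|-1$ edges. Your global count over $(V(H)\cup X,E(H))$, which bounds the number of components avoiding $X$ by the number of vertices outside $X$, just repackages the paper's component-by-component comparison with a forest realizing $\Pi(H,X)$, as your closing remark already notes.
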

\begin{proof}
    Let us first prove the second claim. Let $F$ be a forest with vertex set $X$ realizing a partition $\Pi$ of a set $X$. The number of edges of a forest equals the number of its vertices minus the number of its connected components, so $\rk(\Pi)=|E(F)|$.

    \looseness=-1
    To prove the first claim, let $F$ be a forest realizing $\Pi(H, X)$.
    Note that, apart from singleton components of $F$ corresponding to vertices of $X\setminus V(H)$, there is a one-to-one correspondence between the connected components of $F$ and those of $H$ that contain a vertex of $X$, where edge multiplicities are ignored if $H$ is a multigraph.
    Let $D_1, \dots, D_t$ be the connected components of $H$ that contain a vertex of $X$, and $C_1, \dots, C_t$ be the corresponding connected components of $F$.
    For each $i \in [t]$, note that $D_i$ contains all the vertices of $C_i$, and $C_i$ is a forest (in fact, a tree) which, together with isolated vertices for $X\setminus V(C_i)$, realizes $\Pi(D_i,X)$.
    By the second claim of this lemma, we have $\rk(\Pi(D_i,X)) = |V(C_i)| - 1$.
    Combining the above with the fact that $D_i$ is connected, we obtain $|E(D_i)| \geq |V(D_i)| - 1 \geq |V(C_i)| - 1 = |E(C_i)|$.
    Summing over all $i \in [t]$, we then have $|E(H)| \geq \sum^{t}_{i=1} |E(D_i)| \geq \sum^{t}_{i=1} |E(C_i)| = |E(F)|$. 
    Again by the second claim of this lemma, we have $|E(F)| = \rk(\Pi(H,X))$.
    The lemma then follows.
\end{proof}

See \Cref{figure:treedepth_ilp_example} for an example of $G_v$, a spanning forest selected for $G_v$, 
and the resulting connectivity partition of $A^+_v$.

\begin{figure}[!htbp]
    \centering
    \includegraphics[page=1]{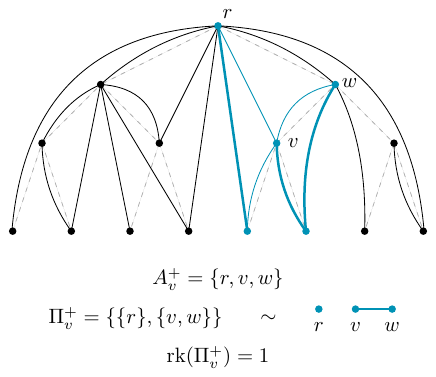}
   \caption{A graph $G$ (black and blue edges), drawn along an elimination tree of height $3$ rooted at $r$; the tree edges are shown as dashed gray edges. The subgraph $G_v$ of $G$ is drawn in blue, and the spanning forest selected for $G_v$ is shown in bold blue. The ancestors $A_v^+$ of $v$ are $r, v, w$, and $\Pi_v^+$ is the connectivity partition of $A_v^+$ induced by the selected spanning forest for $G_v$. This connectivity partition is realized by a spanning forest of $A_v^+$ with one edge, shown in blue. Since it has one edge, $\rk(\Pi_v^+) = 1$.}
    \label{figure:treedepth_ilp_example}
\end{figure}

Let $\Pi_1, \Pi_2$ be two partitions over the same set, i.e., $\bigcup_{B \in \Pi_1} B = \bigcup_{B \in \Pi_2} B$.
Then, if for every $B \in \Pi_1$, there is $B' \in \Pi_2$ with $B \subseteq B'$, we say $\Pi_1$ is \emph{finer} than $\Pi_2$ and write $\Pi_1 \preceq \Pi_2$.

For a partition $\rho$, a set $B$, and a nonempty proper subset
$U\subsetneq B$, let
\[
    \chi(\rho,U,B)=
    \begin{cases}
        1, & \exists B'\in\rho:
             B'\cap U\ne\emptyset\text{ and }B'\cap(B\setminus U)\ne\emptyset,\\
        0, & \text{otherwise}.
    \end{cases}
\]
That is, $\chi(\rho,U,B)$ is one if and only if $\rho$ contains a block that ``touches'' both sides of the cut $(U, B \setminus U)$.

For two partitions $\Pi_1,\Pi_2$ over a common set $X$, let $F_1,F_2$ be
forests realizing $\Pi_1,\Pi_2$, respectively. We define the \emph{join of $\Pi_1$ and $\Pi_2$} as 
$\Pi_1 \oplus \Pi_2=\Pi(F_1\cup F_2,X)$.
This is well-defined, as the right-hand side is independent of the chosen
realizing forests.
Equivalently, $\Pi_1\oplus\Pi_2$ is the finest partition $\Pi$ such that
$\Pi_1\preceq\Pi$ and $\Pi_2\preceq\Pi$.
Note that $\oplus$ is associative and that we use the convention that a one-term join is the partition itself.
We close this subsection with some elementary observations.

\begin{lemma}\label{obs:rank_monotone}
    Let $\Pi_1, \Pi_2$ be two partitions over the same finite set.
    If $\Pi_1 \preceq \Pi_2$, then $\rk(\Pi_1) \leq \rk(\Pi_2)$.
\end{lemma}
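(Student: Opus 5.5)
Both partitions are over the same finite set $X$, so $\rk(\Pi_i)=|X|-|\Pi_i|$ for $i=1,2$. The inequality $\rk(\Pi_1)\le\rk(\Pi_2)$ is therefore equivalent to $|\Pi_2|\le|\Pi_1|$. The plan is to prove this counting statement directly from the definition of $\preceq$.

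Define a map $f\colon\Pi_1\to\Pi_2$ by sending each block $B\in\Pi_1$ to a block $B'\in\Pi_2$ with $B\subseteq B'$. Such a block exists by the definition of $\Pi_1\preceq\Pi_2$. It is unique, because blocks of $\Pi_2$ are pairwise disjoint and $B$ is nonempty.

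Next, show that $f$ is surjective. Take any block $B'\in\Pi_2$; it is nonempty, so pick some $x\in B'$. Since $x\in X=\bigcup_{B\in\Pi_1}B$, there is a block $B\in\Pi_1$ containing $x$. Then $f(B)$ is a block of $\Pi_2$ that contains $x$. Since $B'$ is the only block of $\Pi_2$ containing $x$, we get $f(B)=B'$. Surjectivity gives $|\Pi_2|\le|\Pi_1|$, and hence $\rk(\Pi_1)\le\rk(\Pi_2)$.

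Alternatively, the result follows from Lemma~\ref{obs:rank_edge_lower_bound}. Let $F_1,F_2$ realize $\Pi_1,\Pi_2$. Then $\Pi_1\oplus\Pi_2=\Pi_2$, so $\Pi(F_1\cup F_2,X)=\Pi_2$. A forest $F'\subseteq F_1\cup F_2$ that contains $F_1$ and realizes $\Pi_2$ has at least $|E(F_1)|$ edges. This second route is more cumbersome, so I would present the counting argument. No real obstacle is expected; the only care needed is to use nonemptiness of blocks and the common ground set.
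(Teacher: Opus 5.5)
Your proof is correct and is essentially the paper's argument: both reduce the claim to $|\Pi_2|\le|\Pi_1|$ over the common ground set. The paper phrases it as an injection $\Pi_2\to\Pi_1$, choosing for each block of $\Pi_2$ a block of $\Pi_1$ contained in it, while you use the dual surjection $\Pi_1\to\Pi_2$ that sends each block to its unique containing block. Your surjectivity step is in fact the same argument the paper uses later, in the proof of Lemma~\ref{lemma:partitions_of_same_rank}.
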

\begin{proof}
    Note that, by definition, $\cup_{B\in \Pi_1} B=\cup_{B\in \Pi_2} B$, so the statement is equivalent to showing $|\Pi_1|\geq |\Pi_2|$. To show this, we will construct an injective mapping $\varphi: \Pi_2\rightarrow \Pi_1$ as follows. For $B\in \Pi_2$, let $\varphi(B)$ be an arbitrary set in $\Pi_1$ such that $\varphi(B)\subseteq B$ (such a set exists as $\Pi_1$ is a partition). As $\Pi_2$ is a partition, this mapping is injective, so $|\Pi_1|\geq |\Pi_2|$.  
\end{proof}
\begin{lemma}\label{obs:rank_subadditive_join}
    Let $\Pi_1,\ldots,\Pi_t$ be partitions over the same finite set. Then
    \[
        \rk(\Pi_1\oplus\cdots\oplus\Pi_t)
        \leq
        \sum_{i=1}^t\rk(\Pi_i).
    \]
\end{lemma}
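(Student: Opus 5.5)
The plan is to reduce the statement to \Cref{obs:rank_edge_lower_bound}, using realizing forests. Let $X$ be the common ground set of $\Pi_1,\ldots,\Pi_t$. For each $i\in[t]$, fix a forest $F_i$ with vertex set $X$ that realizes $\Pi_i$; such a forest exists, for instance by taking a spanning path on each block. By the second claim of \Cref{obs:rank_edge_lower_bound}, $|E(F_i)|=\rk(\Pi_i)$.

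Next, let $H$ be the multigraph on vertex set $X$ whose edge multiset is the disjoint union of $E(F_1),\ldots,E(F_t)$. Parallel edges may occur, but no loops, since each $F_i$ is a simple forest. Then $|E(H)|=\sum_{i=1}^t \rk(\Pi_i)$.

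It remains to identify $\Pi(H,X)$ with $\Pi_1\oplus\cdots\oplus\Pi_t$. Connectivity ignores edge multiplicities, so $\Pi(H,X)=\Pi(F_1\cup\cdots\cup F_t,X)$. I would prove by induction on $t$ that this equals the iterated join. The base case $t=1$ holds by the one-term convention. For the inductive step, $\Pi_1\oplus\cdots\oplus\Pi_{t-1}$ is realized by some forest $F'$, for example a spanning forest of $F_1\cup\cdots\cup F_{t-1}$. The graph $F'\cup F_t$ has the same connectivity on $X$ as $F_1\cup\cdots\cup F_t$, because $F'$ and $F_1\cup\cdots\cup F_{t-1}$ have identical components. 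Hence, using that the join is well defined independently of the chosen realizing forests, the $t$-fold join equals $\Pi(H,X)$. An alternative route is the characterization of the join as the finest common coarsening: $\Pi(H,X)$ is coarser than every $\Pi_i$, and any common coarsening must merge the endpoints of every edge of $H$.

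With this identification, the first claim of \Cref{obs:rank_edge_lower_bound}, applied to the loopless multigraph $H$, gives $\rk(\Pi_1\oplus\cdots\oplus\Pi_t)=\rk(\Pi(H,X))\le|E(H)|=\sum_i\rk(\Pi_i)$, which is the claim. The only step needing care is this identification of the multigraph's connectivity partition with the iterated join. It is routine, and the earlier lemma was already stated for multigraphs precisely so that it could be applied here.
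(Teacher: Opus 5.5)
Your proof is correct and follows essentially the paper's route: take forests $F_i$ realizing the $\Pi_i$, identify the join with the connectivity partition of $F_1\cup\cdots\cup F_t$ on $X$, and bound the rank by an edge count via \Cref{obs:rank_edge_lower_bound}. The only difference is cosmetic. The paper takes a spanning forest $T$ of the union, applies the second claim, and uses $|E(T)|\le|\bigcup_i E(F_i)|\le\sum_i|E(F_i)|$. You instead apply the first claim directly to the multigraph union, and you also justify the identification of the $t$-fold join with $\Pi(F_1\cup\cdots\cup F_t,X)$, which the paper simply asserts.
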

\begin{proof}
    Let $\Pi_1,\dots, \Pi_t$ be partitions of $X$ and for $i\in [t]$, let $F_i$ be a forest realizing $\Pi_i$. Let $T$ be a spanning forest of $F_1\cup\dots \cup F_t$. Note that $\Pi_1\oplus\cdots\oplus\Pi_t=\Pi( F_1\cup\dots \cup F_t, X)=\Pi(T, X)$.
    By~\cref{obs:rank_edge_lower_bound}, we have
    $\rk(\Pi_1\oplus\cdots \oplus \Pi_t)=|E(T)|\leq |E(F_1)\cup \dots\cup E(F_t)|\leq \sum_{i=1}^t |E(F_i)|=\sum_{i=1}^t \rk(\Pi_i)$.
\end{proof}

\subsection{ILP Formulation}

\paragraph*{Variables and Bounds.}

We use the following ILP variables:
\begin{equation*}
    s_{v,S}\in\mathbb Z,\qquad 0\le s_{v,S}\le 1
    \qquad(v\in V(G),\ S\in\mathcal S_v).
\end{equation*}
\begin{equation*}
    p^+_{v,\Pi}\in\mathbb Z,\qquad 0\le p^+_{v,\Pi}\le 1
    \qquad(v\in V(G),\ \Pi\in\mathcal P_{v}^{+}).
\end{equation*}
\begin{equation*}
    p^-_{v,\Pi}\in\mathbb Z,\qquad 0\le p^-_{v,\Pi}\le 1
    \qquad(v\in V(G),\ \Pi\in\mathcal P(A_{v}^{-})).
\end{equation*}

Here, for $v \in V(G)$, $s_{v,S} = 1$ will mean that precisely $S \in \mathcal{S}_v$ is selected for the spanning tree,
$p^+_{v,\Pi} = 1$ will mean $\Pi^+_v \in \mathcal P_{v}^{+}$ is the connectivity partition of $A^+_v$ induced by the spanning forest selected for $G_v$,
and finally $p^-_{v,\Pi} = 1$ will mean $\Pi^-_v \in \mathcal P(A_{v}^{-})$ is the connectivity partition of $A^-_v$ induced by the spanning forest selected for $G_v$.

Next, we introduce constraints.
Here, the \emph{choice constraints} will select the edges of the spanning tree and the connectivity partitions w.r.t.\ $A_v^+$ for all $v \in V(G)$. 
Then, together, the \emph{projection, cut, and rank constraints} will enable the inductive \cref{lemma:td_solution_properties}, that is, 
for each $v \in V(G)$, the subgraph selected for $G_v$ is indeed a spanning forest of $G_v$,
the selected subgraph indeed 
induces the chosen connectivity partitions for $A_v^+$ and $A_v^-$,
and finally, every component of the chosen subgraph indeed intersects $A_v^+$. In particular, invoking the lemma for the root of $R$ then gives that the entire chosen subgraph forms a spanning tree of $G$.
Finally, we conclude with the \emph{degree constraints} that encode the degree-requirements for the spanning tree, and the objective, which measures the weight of the selected spanning tree.

\paragraph*{Choice Constraints}

For every $v\in V(G)$, the set of \emph{choice constraints} $\mathcal{C}_v^{\text{choice}}$ is given by
\begin{align}
    \label{choice_s} \sum_{S\in\mathcal S_v}s_{v,S} &= 1 \text{, and}\\
    \label{choice_p_plus}\sum_{\Pi\in\mathcal P_{v}^{+}}p^+_{v,\Pi} &= 1.
\end{align}

\paragraph*{Projection Constraints}

For a finite set $X$, $\Pi\in\mathcal P(X)$, and $Y\subseteq X$, let the \emph{restriction} of $\Pi$ w.r.t.\ $Y$ be defined as 
\[
    \Pi|_Y=\{B\cap Y:B\in\Pi,\ B\cap Y\ne\emptyset\}.
\]
For every $v\in V(G)$,
the set of \emph{projection constraints} $\mathcal{C}_v^{\text{proj}}$ consists of a constraint for every $\Pi\in\mathcal P(A_{v}^{-})$ defined as follows:
\begin{equation}\label{eq:projection}
    p^-_{v,\Pi}
    =
    \sum_{\substack{\Pi'\in\mathcal P_{v}^{+}:\\ \Pi'|_{A_{v}^{-}}=\Pi}}
        p^+_{v,\Pi'}
    .
\end{equation}
The above ensures that 
if $\Pi'$ is the selected connectivity partition for $A_v^+$, then the selected connectivity partition for $A_v^-$ is $\Pi'|_{A_v^-}$.

\paragraph*{Cut Constraints}

We are interested in the connectivity partition induced on $A^+_v$, i.e., we
can safely ``forget'' how vertices of $D^-_v$ are connected to each other,
since no edge outside $G_v$ is incident with a vertex of $D^-_v$.
Thus the only information about $G_v$ that can still matter above $v$ is which
vertices of $A^+_v$ are connected through the selected forest.

Let $c_1,\ldots,c_t$ be the children of $v$.
If $S\in\mathcal S_v$ is selected for $v$ and
$\rho_i\in\mathcal P(A_{c_i}^-)$ is the selected partition for $A_{c_i}^-$,
then $\Pi((A_v^+,S),A_v^+),\rho_1,\ldots,\rho_t$ are all partitions over
$A_v^+$, since $\Pi((A_v^+,S),A_v^+)$ is defined on $A_v^+$ and
$A_{c_i}^-=A_v^+$.
Hence, the graph selected for $G_v$ is the union of the edges $S$ and the
subgraphs selected for the children. On $A_v^+$, the connectivity partition of
this union is represented by the join
\[
    \Pi((A_v^+,S),A_v^+)
    \oplus
    \rho_1
    \oplus\cdots\oplus
    \rho_t.
\]
If $t=0$, i.e., $v$ is a leaf, this join is just $\Pi((A_v^+,S),A_v^+)$.
The cut constraints we are about to state will force whichever partition $\Pi\in\mathcal P_v^+$ is
selected to be finer than this join, i.e.,
\[
    \Pi
    \preceq
    \Pi((A_v^+,S),A_v^+)
    \oplus
    \rho_1
    \oplus\cdots\oplus
    \rho_t.
\]
To encode this, we will use the following lemma, 
which says
it is enough to require that, for
every block $B$ of $\Pi$ and every nonempty proper subset $U\subsetneq B$, at
least one of the partitions in the join touches both sides of the cut
$(U,B\setminus U)$.
\begin{lemma}\label{lemma:cuts_imply_finer_than_join}
    Let $\Pi_1,\ldots,\Pi_t,\Sigma$ be partitions over the same finite set.
    If, for every $B\in \Sigma$ and every nonempty proper subset
    $U\subsetneq B$,
    \[
        1 \leq
        \sum_{i=1}^t\chi(\Pi_i,U,B),
    \]
    then $\Sigma\preceq \Pi_1\oplus\cdots\oplus\Pi_t$.
\end{lemma}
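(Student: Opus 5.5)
We argue by contraposition. Set $\Pi=\Pi_1\oplus\cdots\oplus\Pi_t$ and assume $\Sigma\not\preceq\Pi$. We will exhibit a block $B\in\Sigma$ and a nonempty proper subset $U\subsetneq B$ such that $\chi(\Pi_i,U,B)=0$ for every $i\in[t]$. This contradicts the hypothesis $1\le\sum_i\chi(\Pi_i,U,B)$.

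\textbf{Finding the bad block.} Since $\Sigma\not\preceq\Pi$, some block $B\in\Sigma$ is contained in no block of $\Pi$. As $\Pi$ partitions the common ground set, which contains $B$, the block $B$ must meet at least two distinct blocks of $\Pi$. Let $C\in\Pi$ be one block of $\Pi$ that meets $B$. Define
\[
U=B\cap C.
\]
Then $U\neq\emptyset$, and $U\neq B$ because $B$ also meets a second block $C'\neq C$ of $\Pi$, so $B\setminus U\supseteq B\cap C'\neq\emptyset$.

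\textbf{No $\Pi_i$ crosses the cut.} Fix $i$ and take any $B'\in\Pi_i$. By the characterization of the join as the finest common coarsening, $\Pi_i\preceq\Pi$, so $B'$ lies inside a single block of $\Pi$. There are two cases.
\begin{itemize}
\item If $B'\subseteq C$, then $B'\cap(B\setminus U)\subseteq C\cap(B\setminus C)=\emptyset$.
\item Otherwise $B'$ lies in a block different from $C$, so $B'\cap C=\emptyset$, and hence $B'\cap U=\emptyset$.
\end{itemize}
In either case $B'$ does not touch both sides of the cut $(U,B\setminus U)$. Therefore $\chi(\Pi_i,U,B)=0$ for every $i$, so the sum is $0<1$. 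This is the desired contradiction.

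\textbf{Where care is needed.} The only point requiring justification is $\Pi_i\preceq\Pi$. We would verify it directly from the forest definition of the join: any two elements in a common block of $\Pi_i$ are connected in $F_i\subseteq F_1\cup\cdots\cup F_t$, and hence lie in a common block of $\Pi(F_1\cup\cdots\cup F_t,X)$. This holds for a general $t$-fold join because the union contains every $F_i$. The case $t=1$ is consistent with the convention that a one-term join is the partition itself. The remaining steps are routine.
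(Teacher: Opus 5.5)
Your proof is correct and follows essentially the same route as the paper. Your cut $U=B\cap C$ is exactly an inclusion-maximal subset of $B$ lying in a single block of the join, which is the paper's choice of $U$. Your use of $\Pi_i\preceq\Pi_1\oplus\cdots\oplus\Pi_t$ is the same fact the paper invokes; you conclude directly that every $\chi(\Pi_i,U,B)$ vanishes, whereas the paper phrases this as a contradiction to the maximality of $U$.
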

\begin{proof}
    Assume for contradiction that there is a block $B\in \Sigma$ such that no block of $\Pi_1\oplus\cdots \oplus \Pi_t$ contains $B$. In other words, not all elements of $B$ are in the same block of $\Pi_1\oplus\cdots\oplus \Pi_t$. Let $U$ be an inclusion-maximal subset of $B$ that contains elements that are in the same block of $\Pi_1\oplus\cdots\oplus \Pi_t$. By assumption, $U\neq B, \emptyset$. Plugging in $U$ into the above inequality implies that for some $i$, $\chi(\Pi_i, U, B)=1$, i.e. there exists an element of $U$ and an element of $B\setminus U$ that are in the same block of $\Pi_i$, and thus in the same block of $\Pi_1\oplus\cdots\oplus \Pi_t$, which contradicts the maximality of $U$. Therefore, $B$ is fully contained in some block of $\Pi_1\oplus\cdots\oplus \Pi_t$. 
\end{proof}
Formally, for each $v \in V(G)$, let
$
    \mathcal L_v=
    \{(\Pi,B,U): \Pi\in\mathcal P_{v}^{+},\ B\in\Pi,\ \emptyset\ne U\subsetneq B\}.
$
For every $v \in V(G)$,
the set of \emph{cut constraints} $\mathcal{C}_v^{\text{cut}}$
consists of the following constraint for every $(\Pi,B,U)\in\mathcal L_v$:
\begin{equation}\label{eq:cut}
    p^+_{v,\Pi} \leq 
    \sum_{S\in\mathcal S_v}\chi(\Pi((A_v^+,S),A_v^+),U,B)s_{v,S}
    +
    \sum_{\substack{c\in V(G):\\ c\text{ is a child of }v}}
    \sum_{\rho\in\mathcal P(A_{c}^{-})}\chi(\rho,U,B)p^-_{c,\rho}
\end{equation}
If $\Pi$ is not selected, then the left-hand side is zero. If $\Pi$ is
selected, then the constraints impose the condition above for every block
$B\in\Pi$ and every nonempty proper subset $U\subsetneq B$.

\paragraph*{Rank Constraints}

Suppose that $S\in\mathcal S_v$, $\Pi\in\mathcal P_v^+$, and
$\rho_i\in\mathcal P(A_{c_i}^-)$ for $i\in[t]$ are selected.
Recall that the cut constraints give
$
    \Pi
    \preceq
    \Pi((A_v^+,S),A_v^+)
    \oplus
    \rho_1
    \oplus\cdots\oplus
    \rho_t.
$
Hence, by \Cref{obs:rank_monotone}, \[
    \rk(\Pi)
    \leq
    \rk(
        \Pi((A_v^+,S),A_v^+)
        \oplus
        \rho_1
        \oplus\cdots\oplus
        \rho_t
    ).\]
On the other hand, by \Cref{obs:rank_subadditive_join}, we have
\[
    \rk(
        \Pi((A_v^+,S),A_v^+)
        \oplus
        \rho_1
        \oplus\cdots\oplus
        \rho_t
    )
    \leq
    \rk(\Pi((A_v^+,S),A_v^+))
    +
    \sum_{i=1}^t \rk(\rho_i).
\]
The rank constraints we give next say that the right-hand side of the above
inequality shall equal $\rk(\Pi)$. Together with the two inequalities above,
this forces $\rk(\Pi)$, the rank of the join, and the sum of the ranks of
$\Pi((A_v^+,S),A_v^+)$ and $\rho_1,\ldots,\rho_t$ to be equal. Thus, 
the set of \emph{rank constraints} $\mathcal{C}_v^\text{rank}$ is given by the constraint:
\begin{equation}
    \sum_{S\in\mathcal S_v}\rk(\Pi((A_v^+,S),A_v^+))s_{v,S}
    +
    \sum_{\substack{c\in V(G):\\ c\text{ is a child of }v}}
    \sum_{\rho\in\mathcal P(A_{c}^{-})}\rk(\rho)p^-_{c,\rho}
    =
    \sum_{\Pi\in\mathcal P_{v}^{+}}\rk(\Pi)p^+_{v,\Pi}.
\end{equation}

At this point, we know that the selected partition $\Pi$ is finer than the join
above and has the same rank as this join. To conclude that the two partitions
are equal, we will use the following elementary fact.

\begin{lemma}\label{lemma:partitions_of_same_rank}
    Let $\Pi_1, \Pi_2$ be two partitions with $\Pi_1 \preceq \Pi_2$ and $\rk(\Pi_1) = \rk(\Pi_2)$. Then $\Pi_1 = \Pi_2$.
\end{lemma}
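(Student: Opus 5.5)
Since $\Pi_1 \preceq \Pi_2$ presupposes that both partitions are over the same set $X=\bigcup_{B\in\Pi_1}B=\bigcup_{B\in\Pi_2}B$, the hypothesis $\rk(\Pi_1)=\rk(\Pi_2)$ reduces, by the definition of rank, to $|\Pi_1|=|\Pi_2|$. The plan is to use this counting identity together with the injection constructed in the proof of \Cref{obs:rank_monotone}, and show that the injection is in fact a bijection that forces equality of blocks.

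Concretely, I will recall the map $\varphi\colon\Pi_2\to\Pi_1$ assigning to each $B\in\Pi_2$ some block $\varphi(B)\in\Pi_1$ with $\varphi(B)\subseteq B$; it exists since blocks are nonempty and $\Pi_1$ covers $X$, and it is injective because the blocks of $\Pi_2$ are pairwise disjoint. As $|\Pi_1|=|\Pi_2|$ are finite, $\varphi$ is a bijection. Alternatively, and more directly, I will consider the map $\psi\colon\Pi_1\to\Pi_2$ sending each block $A\in\Pi_1$ to the unique block $\psi(A)\in\Pi_2$ containing it (unique since $A\neq\emptyset$ and $\Pi_2$ is a partition). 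Every $B\in\Pi_2$ has nonempty preimage under $\psi$, because any element of $B$ lies in some $A\in\Pi_1$, and $A\subseteq\psi(A)$ forces $\psi(A)=B$. So $\psi$ is surjective between finite sets of equal size, hence bijective.

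Finally, each $B\in\Pi_2$ is the disjoint union of the blocks $A\in\Pi_1$ with $\psi(A)=B$ (every element of $B$ lies in such an $A$, and each such $A\subseteq B$). Bijectivity means exactly one such $A$ exists, so $B=A$. Hence every block of $\Pi_2$ is a block of $\Pi_1$, and with $|\Pi_1|=|\Pi_2|$ (or by symmetry via $\psi$ being onto $\Pi_2$ from all of $\Pi_1$) this gives $\Pi_1=\Pi_2$. There is no real obstacle; the only point needing care is invoking that $\preceq$ is defined only for partitions over a common ground set, so that equal rank really does translate into equal number of blocks.
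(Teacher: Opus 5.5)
Your proposal is correct and takes essentially the same route as the paper: both use the containment map $\Pi_1\to\Pi_2$, prove it is surjective, deduce bijectivity from $|\Pi_1|=|\Pi_2|$ (which is what equal rank gives over a common finite ground set), and conclude that each block equals its image. The only cosmetic difference is the last step: the paper argues by contradiction using an element $b\in f(B)\setminus B$, whereas you write each block of $\Pi_2$ as the disjoint union of its preimage blocks and note that this preimage consists of a single block.
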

\begin{proof}
From $\Pi_1 \preceq \Pi_2$, it follows that for each block $B \in \Pi_1$,
there is a block $f(B) \in \Pi_2$ with $B \subseteq f(B)$, where $f : \Pi_1 \to \Pi_2$.
We aim to show that $f$ is the identity function.
By $\rk(\Pi_1) = \rk(\Pi_2)$, we have $|\Pi_1| = |\Pi_2|$.
We next show that $f$ is surjective.
Let $B' \in \Pi_2$, and let $b \in B'$.
Further, let $B \in \Pi_1$ such that $b \in B$. Then $f(B) = B'$, since we have $ \{ b \} \subseteq B \subseteq f(B)$ and $B'$ is the only block of $\Pi_2$ that contains $b$.
Thus $f$ is bijective.

Towards a contradiction, assume $f$ is not the identity,
that is, there is $B \in \Pi_1$
with $B \neq f(B)$, i.e., by $B \subseteq f(B)$, this means there is $b \in f(B) \setminus B$.
Further, let $C \in \Pi_1$ such that $b \in C$.
Then $f(C)=f(B)$, since both are blocks of $\Pi_2$ that contain $b$.
Applying $f^{-1}$ yields $C=B$.
But $b \in C$ and $b \not\in B$, a contradiction.
\end{proof}

Apart from the above, when deriving \cref{lemma:td_solution_properties}, it will remain to show that the subgraph selected for $G_v$ is
acyclic. Recall from
\Cref{obs:rank_edge_lower_bound} that a forest realizing a partition $\Sigma$
has exactly $\rk(\Sigma)$ edges. Thus, intuitively, once the rank constraints
give that the rank of the join equals the sum of the ranks of the partitions
being joined, we can take forests realizing $\Pi((A_v^+,S),A_v^+)$ and
$\rho_1,\ldots,\rho_t$, form their union, and obtain a forest realizing the join
of the partitions, i.e., the connectivity partition w.r.t.\ $A_v^+$ of the
subgraph selected for $G_v$.
We formalize this with
the following lemma.
\begin{lemma}\label{lemma:forest_union_rank}
    Let $H_1,\ldots,H_t$ be pairwise edge-disjoint forests, and let $X$ be a
    finite set of vertices such that $V(H_i)\cap V(H_j)\subseteq X$ for all
    distinct $i,j\in[t]$.
    Then $H_1\cup\cdots\cup H_t$ is a forest if and only if
    \[
        \sum_{i=1}^t\rk(\Pi(H_i,X))
        =
        \rk(\Pi(H_1,X)\oplus\cdots\oplus\Pi(H_t,X)).
    \]
\end{lemma}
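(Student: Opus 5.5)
The plan is to reduce both directions to a count of edges versus components, via an auxiliary forest. For each $i\in[t]$, let $F_i$ be a forest on vertex set $X$ realizing $\Pi(H_i,X)$. By \cref{obs:rank_edge_lower_bound}, $|E(F_i)|=\rk(\Pi(H_i,X))$. By definition of the join, $\Pi(H_1,X)\oplus\cdots\oplus\Pi(H_t,X)=\Pi(F_1\cup\cdots\cup F_t,X)$, where the union is taken as a multigraph on $X$.

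The key step is to compare $H=H_1\cup\cdots\cup H_t$ with the multigraph $M=F_1\cup\cdots\cup F_t$ (together with the vertices of $H$). I will first show that $H$ and $M$ induce the same connectivity partition on $X$. For each $i$, the vertices of $X$ in a component of $H_i$ lie in one component of $F_i$, and conversely. Any path in $H$ between vertices of $X$ decomposes at vertices of $X$ into subpaths that each stay inside a single $H_i$. This holds because $V(H_i)\cap V(H_j)\subseteq X$, so a path can only switch from $H_i$ to $H_j$ at a vertex of $X$. Each such subpath can be replaced by a path in $F_i$, and the same argument works in reverse.

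Next I will count cycle ranks. The cyclomatic number $|E|-|V|+c$ is the standard test for a forest: it is zero exactly when the graph is a forest. For each forest $H_i$, contracting every component of $H_i$ that meets $X$ onto a spanning tree of its $X$-vertices preserves the cyclomatic number of the union. I will make this precise as follows. Pass from $H$ to $M$ by replacing, one $i$ at a time, $H_i$ with $F_i$ (plus the vertices of $H_i$ outside $X$, left isolated or removed). A component of $H_i$ with $m$ vertices in $X$ has $|V|-1$ edges, and after replacement it has $m-1$ edges on $m$ vertices. Components of $H_i$ avoiding $X$ contribute $0$ in both. Moreover, since the $H_i$ only share vertices in $X$, each replacement preserves the connectivity of the whole union, by the first step. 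So each replacement preserves $|E|-|V|+c$, and hence $H$ is a forest iff $M$ is a forest, i.e.\ iff its cyclomatic number is $0$.

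Finally, $M$ is a forest iff $|E(M)|=\rk(\Pi(M,X))$. Indeed, on vertex set $X$ we have $|E(M)|-|X|+|\Pi(M,X)|=|E(M)|-\rk(\Pi(M,X))$, which is the cyclomatic number of $M$. Since $|E(M)|=\sum_i\rk(\Pi(H_i,X))$ by edge-disjointness as a multiset, this gives the claim. The main obstacle is the replacement step: checking rigorously that the global component count is unchanged when swapping $H_i$ for $F_i$. I would handle this by applying the first step to the pair consisting of $H_i$ and the union of the others, whose common vertices lie in $X$.
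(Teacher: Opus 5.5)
Your proposal is correct and follows essentially the paper's route: replace each $H_i$ by a forest on $X$ realizing $\Pi(H_i,X)$, and take the edge-disjoint multigraph union, whose edge count is $\sum_i \rk(\Pi(H_i,X))$ and whose partition on $X$ is the join. The rank equality then becomes acyclicity of this multigraph. The only difference is that you justify the transfer of acyclicity back to $H_1\cup\cdots\cup H_t$ with explicit cyclomatic-number bookkeeping, which is where edge-disjointness of the $H_i$ keeps edge counts additive; the paper simply asserts that step as an observation.
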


\begin{proof}
    For each $i\in[t]$, define a graph $H_i'$ on vertex set $X$ as follows:
    for every connected component $C$ of $H_i$ with $C\cap X\neq\emptyset$,
    choose an arbitrary tree on $C\cap X$, and let $H_i'$ be the union of the
    trees. Let $H'$ be the multigraph obtained as the edge-disjoint union of
    $H_1',\ldots,H_t'$ on the common vertex set $X$.
    Then $|E(H')|=\sum_{i=1}^t\rk(\Pi(H_i,X))$ and
    $\Pi(H',X)=\Pi(H_1,X)\oplus\cdots\oplus\Pi(H_t,X)$.
    Hence the displayed equality is equivalent to
    $|E(H')|=\rk(\Pi(H',X))$, which holds if and only if $H'$ is acyclic (where two parallel edges are understood to form a cycle) by
    applying \Cref{obs:rank_edge_lower_bound} to a spanning forest of $H'$.
    Finally, observe $H'$ is acyclic if
    and only if $H_1\cup\cdots\cup H_t$ is acyclic, hence the statement follows.
\end{proof}

\paragraph*{Degree Constraints}

\begin{observation}\label{obs:owned_incident_edges}
    Let $S_u\in\mathcal S_u$ for every $u\in V(G)$, i.e., 
    a selection of edges $\bigcup_{u\in V(G)}S_u \subseteq E(G)$. 
    Then for every $v\in V(G)$,
    the selected edges incident with $v$ are
    \[
        S_v
        \;\dot\cup\;
        \{vw:w\in D_v^-\text{ and }vw\in S_w\}.
    \]
\end{observation}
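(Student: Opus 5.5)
The plan is to prove the claim by a direct case distinction on the position of the other endpoint $w$ of an edge $vw$ relative to $v$ in $R$, using only the ownership rule and the definition of an elimination tree.

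\emph{Step 1: classify the neighbour.} Fix $v$ and an edge $vw\in E(G)$. Since $R$ is an elimination tree, $w$ is either a proper ancestor or a proper descendant of $v$. Hence exactly one of $w\in A_v^-$ or $w\in D_v^-$ holds; $w\neq v$ because $G$ has no loops.

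\emph{Step 2: identify the owner.} If $w\in A_v^-$, then $v$ is the deeper endpoint, so $vw\in O(v)$. If $w\in D_v^-$, then $w$ is the deeper endpoint, so $vw\in O(w)$. Because $\{O(u)\mid u\in V(G)\}$ partitions $E(G)$, the edge $vw$ lies in exactly one owned set.

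\emph{Step 3: relate ownership to selection.} Since $S_u\subseteq O(u)$ for every $u$, the edge $vw$ is selected, i.e.\ $vw\in\bigcup_u S_u$, if and only if it lies in the $S$ of its owner. This means $vw\in S_v$ in the first case and $vw\in S_w$ in the second.

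\emph{Step 4: assemble the identity.} By definition, every edge of $S_v$ is incident with $v$. Also, every edge $vw$ with $w\in D_v^-$ and $vw\in S_w$ is incident with $v$. Conversely, Steps 1--3 show that every selected edge at $v$ falls into exactly one of these two sets. The two sets are disjoint because the first consists of edges to ancestors and the second of edges to descendants, which justifies writing the union as $\dot\cup$.

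There is no real obstacle here. The only point needing care is using the partition property of the sets $O(u)$, so that each selected edge is counted once, via its owner.
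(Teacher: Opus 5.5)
Your proof is correct, and it is the argument the paper relies on implicitly; the paper states this as an observation without giving a proof. Every edge at $v$ goes to a proper ancestor or a proper descendant and is owned by its deeper endpoint. Since the sets $O(u)$ partition $E(G)$ and $S_u\subseteq O(u)$, each selected edge at $v$ lies either in $S_v$ or in $S_w$ for its descendant endpoint $w$. Disjointness follows from $A_v^-\cap D_v^-=\emptyset$.
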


Motivated by \Cref{obs:owned_incident_edges}, for every $v\in V(G)$, the set of
\emph{degree constraints} $\mathcal{C}_v^\text{deg}$ is given by the constraint:
\begin{equation}\label{eq:degree}
    \sum_{S\in\mathcal S_v}|S|s_{v,S}
    +
    \sum_{w\in D_{v}^{-}}
    \sum_{\substack{S\in\mathcal S_w:\\ vw\in S}}s_{w,S}
    \leq
    \req(v).
\end{equation}

\paragraph*{Objective}

Since the sets $O(v)$ partition $E(G)$, the following objective sums the weights
of the selected edges exactly once:
\[
    \min
    \sum_{v\in V(G)}
    \sum_{S\in\mathcal S_v}
        \left(\sum_{e\in S}\weightf(e)\right)s_{v,S}.
\]

\subsection{Proof of Correctness}

We call a mapping $\alpha$ from the ILP variables to $\mathbb{Z}$ a
\emph{feasible solution to the ILP} if each variable is mapped to a value of the
variable's domain and all ILP constraints are satisfied. If additionally the objective value for $\alpha$ is minimum among all feasible solutions, $\alpha$ is called an \emph{optimal solution to the ILP}.

Let $\alpha$ be a feasible solution to the ILP.
For $v\in V(G)$, let $S_v^\alpha$ be the unique set $S\in\mathcal S_v$ with
$\alpha(s_{v,S})=1$.
We write $F_v^\alpha$ for the subgraph selected for $G_v$, i.e.,
\[
    F_v^\alpha
    =
    \left(V(G_v),\bigcup_{u\in D_v^+}S_u^\alpha\right).
\]
Furthermore, let $\Pi_v^{+,\alpha}\in\mathcal P_v^+$ be the unique partition
with $\alpha(p^+_{v,\Pi_v^{+,\alpha}})=1$, and let
$\Pi_v^{-,\alpha}\in\mathcal P(A_v^-)$ be the unique partition with
$\alpha(p^-_{v,\Pi_v^{-,\alpha}})=1$.
By the choice and projection constraints, these objects are well-defined. Our first goal is to show that $F_v^\alpha$ is a spanning forest that induces the chosen connectivity partition on $A_v^+$ and $A_v^-$. Before that, we will need an auxiliary lemma.

\begin{lemma}\label{lemma:partition_union_join}
    Let $H_1,\ldots,H_t$ be graphs, and let $X$ be a finite set of vertices
    such that $V(H_i)\cap V(H_j)\subseteq X$ for all distinct
    $i,j\in[t]$.
    Then
    \[
        \Pi(H_1\cup\cdots\cup H_t,X)
        =
        \Pi(H_1,X)\oplus\cdots\oplus\Pi(H_t,X).
    \]
\end{lemma}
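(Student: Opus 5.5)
We prove the two partitions equal by showing that each is finer than the other; equality then follows from antisymmetry of $\preceq$. Write $H=H_1\cup\cdots\cup H_t$ and $\Sigma=\Pi(H_1,X)\oplus\cdots\oplus\Pi(H_t,X)$. Throughout we use the characterization stated just before \cref{obs:rank_monotone}: $\Sigma$ is the finest partition that is coarser than every $\Pi(H_i,X)$. Equivalently, two elements of $X$ share a block of $\Sigma$ if and only if they are linked by a chain $x=y_0,y_1,\ldots,y_m=y$ of elements of $X$ in which consecutive $y_{j-1},y_j$ lie in a common block of some $\Pi(H_{i_j},X)$.

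\emph{$\Sigma\preceq\Pi(H,X)$.} Each $H_i$ is a subgraph of $H$, so any two vertices of $X$ connected in $(V(H_i)\cup X,E(H_i))$ are also connected in $(V(H)\cup X,E(H))$. Hence $\Pi(H_i,X)\preceq\Pi(H,X)$ for every $i$. Since $\Sigma$ is the finest common coarsening, $\Sigma\preceq\Pi(H,X)$. This direction does not need the intersection hypothesis.

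\emph{$\Pi(H,X)\preceq\Sigma$.} Let $x,y\in X$ lie in the same block of $\Pi(H,X)$, so there is a path $P$ from $x$ to $y$ in $H$; if $x=y$ there is nothing to show. Each edge of $P$ belongs to some $H_i$; fix one such choice per edge. Split $P$ into maximal subpaths whose edges all carry the same label $i$. At a vertex $z$ where the label switches from $i$ to $j\ne i$, $z$ is incident to an edge of $H_i$ and to an edge of $H_j$, so $z\in V(H_i)\cap V(H_j)\subseteq X$. The switch points therefore give a sequence $x=y_0,y_1,\ldots,y_m=y$ of elements of $X$. Consecutive elements are joined by a subpath lying entirely in a single $H_i$, so they share a block of $\Pi(H_i,X)$. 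By the chain characterization, $x$ and $y$ share a block of $\Sigma$.

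The main point needing care is this chain characterization of the join, which the paper only states as "the finest common coarsening." We will justify it directly from the definition $\Pi_1\oplus\cdots\oplus\Pi_t=\Pi(F_1\cup\cdots\cup F_t,X)$ with realizing forests $F_i$ on $X$. Consecutive $y_{j-1},y_j$ are connected in $F_{i_j}$, so $x$ and $y$ are connected in $\bigcup_i F_i$. Conversely, a path in $\bigcup_i F_i$ lives entirely on $X$, and each of its edges stays within one block of some $\Pi_i$; this yields exactly such a chain. The special case of vertices of $X\setminus V(H)$ is consistent: they are singletons in both partitions.
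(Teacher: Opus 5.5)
Your proof is correct, and it puts the work in a different place from the paper's proof. The paper first replaces each $H_i$ by a spanning forest $F_i$ and writes $\Pi(H_1,X)\oplus\cdots\oplus\Pi(H_t,X)=\Pi(F_1\cup\cdots\cup F_t,X)$. It then only shows that $H_1\cup\cdots\cup H_t$ and $F_1\cup\cdots\cup F_t$ have the same connectivity, by cutting a path into pieces that each lie in one $H_j$ and rerouting each piece inside $F_j$. That path-splitting step never uses the hypothesis $V(H_i)\cap V(H_j)\subseteq X$. The hypothesis is needed precisely for the asserted identity $\bigoplus_i\Pi(F_i,X)=\Pi(\bigcup_i F_i,X)$. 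This identity is not literally the definition of $\oplus$, because the $F_i$ live on $V(H_i)$ rather than on $X$. It also fails without the hypothesis: take $X=\{x,y\}$, let $H_1$ be the single edge $xz$ and $H_2$ the single edge $zy$ with $z\notin X$. You instead bypass the spanning forests of the $H_i$ and derive the chain description of the join directly from its definition via realizing forests on $X$. You then cut the path at label-switch vertices, which the hypothesis forces into $X$. This makes the role of the hypothesis explicit and the argument self-contained. The cost is some extra length compared with the paper's short reduction to forests.
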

\begin{proof}
         For $i\in [t]$, let $F_i$ be a spanning forest of $H_i$. Then $\Pi(H_1,X)\oplus \cdots \oplus\Pi(H_t, X)=\Pi(F_1, X)\oplus\cdots\oplus \Pi(F_t, X)=\Pi(F_1\cup\dots\cup F_t, X)$. Since $F_1\cup\cdots\cup F_t$ is a subgraph of $H_1\cup\dots \cup H_t$, it suffices to show that if two vertices are in the same connected component of $H_1\cup\dots\cup H_t$, they are in the same component of $F_1\cup \dots \cup F_t$. Let $u,v\in V(H_1\cup \dots \cup H_t)$ be in the same connected component of $H_1\cup \dots \cup H_t$ and let $\pi$ be an arbitrary path between them in $H_1\cup\dots \cup H_t$.
    We will split the path $\pi$ into subpaths that are fully contained in some $H_i$. More precisely, let $u=u_1, u_2,\dots, u_\ell=v$ be vertices of $\pi$ such that for all $i\in [\ell-1]$ there is a $j\in [t]$ such that the subpath of $\pi$ from $u_i$ to $u_{i+1}$ is contained in $H_j$. As $u_i$ and $u_{i+1}$ are connected in $H_j$, they are connected in $F_j$. Thus $u$ and $v$ are in the same connected component of $F_1\cup\dots\cup F_t$. 
\end{proof}

\begin{lemma}\label{lemma:td_solution_properties}
Let $\alpha$ be a feasible solution to the ILP.
Then, for each $v \in V(G)$, $F_v^\alpha$ is a spanning forest of $G_v$ with
$\Pi(F_v^\alpha, A_v^+) = \Pi_v^{+,\alpha}$,
$\Pi(F_v^\alpha, A_v^-) = \Pi_v^{-,\alpha}$,
and every connected component of $F_v^\alpha$ contains at least one vertex of $A^+_v$.
\end{lemma}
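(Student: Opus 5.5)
We argue by induction on the height of $v$ in $R$, processing vertices bottom-up, so that when we handle $v$ the statement already holds for every child $c_1,\ldots,c_t$ of $v$. Fix $v$ and write $S=S_v^\alpha$, $\Pi=\Pi_v^{+,\alpha}$, $\rho_i=\Pi_{c_i}^{-,\alpha}$. Let $H_0=(A_v^+\cap V(G_v),S)$ and $H_i=F_{c_i}^\alpha$. Then $F_v^\alpha=H_0\cup H_1\cup\cdots\cup H_t$, and $V(G_v)$ is the union of the vertex sets of the graphs $G_{c_i}$ together with the endpoints of $O(v)$. Since $A_{c_i}^-=A_v^+$ and the subtrees of distinct children are disjoint, we have $V(H_i)\cap V(H_j)\subseteq A_v^+$ for $i\ne j$.

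\textbf{Connectivity partition.} By \Cref{lemma:partition_union_join} with $X=A_v^+$,
\[
\Pi(F_v^\alpha,A_v^+)=\Pi(H_0,A_v^+)\oplus\rho_1\oplus\cdots\oplus\rho_t .
\]
Here we use the induction hypothesis $\Pi(H_i,A_{c_i}^-)=\rho_i$, and we note that $\Pi(H_0,A_v^+)=\Pi((A_v^+,S),A_v^+)$. Call this join $J$. The cut constraints at $v$, with $p^+_{v,\Pi}=1$, give for every block $B\in\Pi$ and every nonempty proper $U\subsetneq B$ that some term in the join has $\chi=1$. By \Cref{lemma:cuts_imply_finer_than_join}, this yields $\Pi\preceq J$. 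Combining \Cref{obs:rank_monotone}, \Cref{obs:rank_subadditive_join} and the rank constraint gives
\[
\rk(\Pi)\le\rk(J)\le\rk(\Pi(H_0,A_v^+))+\sum_i\rk(\rho_i)=\rk(\Pi),
\]
so all three quantities are equal. \Cref{lemma:partitions_of_same_rank} then gives $\Pi=J=\Pi(F_v^\alpha,A_v^+)$. The statement for $A_v^-$ follows by restriction: the projection constraint sets $\Pi_v^{-,\alpha}=\Pi|_{A_v^-}$, and restricting a connectivity partition to a subset gives the connectivity partition on that subset.

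\textbf{Acyclicity.} The graphs $H_i$ for $i\ge1$ are forests by induction, and $H_0$ is a star centered at $v$, hence a forest. They are pairwise edge-disjoint because the edge sets $O(u)$ partition $E(G)$. The rank equality $\rk(J)=\sum_{i\ge0}\rk(\Pi(H_i,A_v^+))$ obtained above, fed into \Cref{lemma:forest_union_rank}, shows that $F_v^\alpha$ is a forest. It is spanning for $G_v$ by construction, since its vertex set is $V(G_v)$.

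\textbf{Every component meets $A_v^+$.} This is the step I expect needs the most care. Take a component $C$ of $F_v^\alpha$. If $C$ contains a vertex of $H_i$ for some $i\ge1$, then $C$ contains that vertex's whole component in $H_i$. By induction that component meets $A_{c_i}^+$, and hence meets $A_{c_i}^-=A_v^+$ or contains $c_i$. So the only problematic case is a component whose intersection with $A_{c_i}^+$ is exactly $\{c_i\}$. In that case $\{c_i\}$ is a block of $\Pi_{c_i}^{+,\alpha}=\Pi(H_i,A_{c_i}^+)$, which is excluded by the definition of $\mathcal P_{c_i}^+$.

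Vertices of $G_v$ not in any $H_i$ with $i\ge1$ are endpoints of edges in $O(v)$, so they lie in $A_v^+$. The case $v=r$ fits the same argument, as does the base case of a leaf $v$: there $t=0$, $F_v^\alpha=H_0$ is a star at $v$, and every vertex lies in $A_v^+$.

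Finally, I must double-check that vertices of $G_{c_i}$ lie only in $D_{c_i}^+\cup A_v^+$; this is what guarantees the intersection condition needed for \Cref{lemma:partition_union_join} and \Cref{lemma:forest_union_rank}.
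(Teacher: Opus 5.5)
Your proof is correct and follows the paper's argument essentially step for step. Both use the join identity via \Cref{lemma:partition_union_join}, the cut/rank sandwich closed by \Cref{lemma:partitions_of_same_rank}, acyclicity via \Cref{lemma:forest_union_rank}, and the fact that $\{c_i\}\notin\Pi_{c_i}^{+,\alpha}$ to show every component meets $A_v^+$. The differences are cosmetic: you fold the leaf case into $t=0$, and you take $H_0$ on $A_v^+\cap V(G_v)$ rather than on $A_v^+$.

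The intersection condition you flag at the end is routine. Every edge of $O(u)$ with $u\in D_{c_i}^+$ joins $u$ to an ancestor of $u$, and that ancestor lies in $D_{c_i}^+\cup A_v^+$.
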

\begin{proof}
We proceed by induction over the elimination tree $R$.

\emph{Base Case.}
Let $v$ be a leaf of $R$.
Observe that $D_v^+=\{v\}$, and hence $F_v^\alpha=(V(G_v),S_v^\alpha)$.
Since $S_v^\alpha\subseteq O(v)$, every edge of $F_v^\alpha$ is incident with $v$.
Thus $F_v^\alpha$ is a forest, and since it has vertex set $V(G_v)$, it is a spanning forest of $G_v$.
By the rank constraints, we have
$\rk(\Pi((A_v^+,S_v^\alpha),A_v^+)) = \rk(\Pi_v^{+,\alpha})$.
Since $F_v^\alpha$ has the same edge set as $(A_v^+,S_v^\alpha)$, by the
definition of $\Pi(\cdot,\cdot)$, we have
$\Pi(F_v^\alpha,A_v^+) = \Pi((A_v^+,S_v^\alpha),A_v^+)$.
Thus $\rk(\Pi(F_v^\alpha,A_v^+)) =  \rk(\Pi_v^{+,\alpha})$.
Since $v$ is a leaf, the cut constraints and
\Cref{lemma:cuts_imply_finer_than_join} yield
$\Pi_v^{+,\alpha}\preceq \Pi((A_v^+,S_v^\alpha),A_v^+)=\Pi(F_v^\alpha,A_v^+)$.
Applying \Cref{lemma:partitions_of_same_rank} yields
$\Pi(F_v^\alpha, A_v^+) = \Pi_v^{+,\alpha}$.
By the projection constraints, we have
$\Pi_v^{-,\alpha}=\Pi_v^{+,\alpha}|_{A_v^-}$.
Moreover, by the definition of $\Pi(\cdot,\cdot)$, we have
$\Pi(F_v^\alpha,A_v^-)=\Pi(F_v^\alpha,A_v^+)|_{A_v^-}$.
Thus $\Pi(F_v^\alpha, A_v^-) = \Pi_v^{-,\alpha}$.
Finally, since $G_v$ is the edge-induced subgraph of $G$ with edge set $O(v)$,
we have $V(G_v)\subseteq A_v^+$.
Hence every connected component of $F_v^\alpha$ contains at least one vertex of $A_v^+$.

\emph{Inductive Case.}
Let $v$ be a non-leaf of $R$, and let $c_1,\ldots,c_t$ be the children of $v$.
Assume the statement holds for $c_1,\ldots,c_t$.
Since $A_{c_i}^-=A_v^+$, the induction hypothesis gives
$\Pi_{c_i}^{-,\alpha}=\Pi(F_{c_i}^\alpha,A_v^+)$ for each
$i\in[t]$.
Moreover, the pairwise intersections of the vertex sets of
$(A_v^+,S_v^\alpha),F_{c_1}^\alpha,\ldots,F_{c_t}^\alpha$ are contained in
$A_v^+$.
As the graph
$(A_v^+,S_v^\alpha)\cup F_{c_1}^\alpha\cup\cdots\cup F_{c_t}^\alpha$
has the same edge set as $F_v^\alpha$, by \Cref{lemma:partition_union_join},
\[
    \Pi(F_v^\alpha,A_v^+)
    =
    \Pi((A_v^+,S_v^\alpha),A_v^+)
    \oplus
    \Pi_{c_1}^{-,\alpha}
    \oplus\cdots\oplus
    \Pi_{c_t}^{-,\alpha}.
\]
By the rank constraints, we have
\[
    \rk(\Pi_v^{+,\alpha})
    =
    \rk(\Pi((A_v^+,S_v^\alpha),A_v^+))
    +
    \sum_{i=1}^t \rk(\Pi_{c_i}^{-,\alpha}).
\]
By the cut constraints and \Cref{lemma:cuts_imply_finer_than_join}, we have
\[
    \Pi_v^{+,\alpha}
    \preceq
    \Pi((A_v^+,S_v^\alpha),A_v^+)
    \oplus
    \Pi_{c_1}^{-,\alpha}
    \oplus\cdots\oplus
    \Pi_{c_t}^{-,\alpha}.
\]
	By \Cref{obs:rank_subadditive_join}, we get
	\[
	    \begin{aligned}
	    &\rk(
	        \Pi((A_v^+,S_v^\alpha),A_v^+)
	        \oplus
	        \Pi_{c_1}^{-,\alpha}
	        \oplus\cdots\oplus
	        \Pi_{c_t}^{-,\alpha}
	    )
	    \\
	    &\qquad
	    \leq
	    \rk(\Pi((A_v^+,S_v^\alpha),A_v^+))
	    +
	    \sum_{i=1}^t\rk(\Pi_{c_i}^{-,\alpha})
	    =
	    \rk(\Pi_v^{+,\alpha}).
	    \end{aligned}
	\]
Since
$\Pi_v^{+,\alpha}
\preceq
\Pi((A_v^+,S_v^\alpha),A_v^+)
\oplus
\Pi_{c_1}^{-,\alpha}
\oplus\cdots\oplus
\Pi_{c_t}^{-,\alpha}$,
by \Cref{obs:rank_monotone}, the rank of the partition on the right is at least
$\rk(\Pi_v^{+,\alpha})$.
Combining this with the previous inequality gives
\[
    \rk(\Pi_v^{+,\alpha})
    =
    \rk(
        \Pi((A_v^+,S_v^\alpha),A_v^+)
        \oplus
        \Pi_{c_1}^{-,\alpha}
        \oplus\cdots\oplus
        \Pi_{c_t}^{-,\alpha}
    ).
\]
Applying \Cref{lemma:partitions_of_same_rank} yields
\[
    \Pi(F_v^\alpha,A_v^+)
    =
    \Pi_v^{+,\alpha}.
\]
Observe that
$(A_v^+,S_v^\alpha),F_{c_1}^\alpha,\ldots,F_{c_t}^\alpha$ are pairwise
edge-disjoint forests.
Thus, by the previous rank equality, \Cref{lemma:forest_union_rank} gives that 
$(A_v^+,S_v^\alpha)\cup F_{c_1}^\alpha\cup\cdots\cup F_{c_t}^\alpha$ is a
forest.
Since $F_v^\alpha$ has the same edge set as this forest and has vertex-set
$V(G_v)$, it follows that $F_v^\alpha$ is a spanning forest of $G_v$.

By the projection constraints, we have
$\Pi_v^{-,\alpha}=\Pi_v^{+,\alpha}|_{A_v^-}$.
Moreover, by the definition of $\Pi(\cdot,\cdot)$, we have
$\Pi(F_v^\alpha,A_v^-)=\Pi(F_v^\alpha,A_v^+)|_{A_v^-}$.
Thus $\Pi(F_v^\alpha, A_v^-) = \Pi_v^{-,\alpha}$.

Finally, as in the base case, every connected component of $(A_v^+,S_v^\alpha)$ contains a vertex of
$A_v^+$.
For each $i\in[t]$, every connected component of $F_{c_i}^\alpha$
contains a vertex of $A_v^+$ as well, as by induction it contains a vertex of
$A_{c_i}^+$, and if this vertex is $c_i$, then
$\{c_i\}\notin\Pi_{c_i}^{+,\alpha}$ implies that the component containing
$c_i$ also contains a vertex of $A_{c_i}^-=A_v^+$.
Therefore every connected component of $F_v^\alpha$ contains at least one
vertex of $A_v^+$.
\end{proof}

Next, we will show how to construct a spanning tree satisfying the degree constraints from a feasible solution to the ILP, and vice versa.

\begin{lemma}\label{lemma:td_soundness}
    Let $\alpha$ be a feasible solution to the ILP. Then, there exists a spanning tree $T$ of $G$ respecting the degree requirement $\req$ such that
    $T$'s weight equals the objective value of $\alpha$.
\end{lemma}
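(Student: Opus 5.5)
The plan is to take $T=F_r^\alpha$, the subgraph selected for the root, and verify its three required properties in turn: it is a spanning tree, it meets the degree bounds, and its weight equals the objective value.

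\textbf{Spanning tree.} We invoke \cref{lemma:td_solution_properties} at the root $r$. Since $D_r^+=V(G)$, we get $E_r=\bigcup_{u}O(u)=E(G)$. Because $G$ is connected (and, if $|V(G)|\ge 2$, has no isolated vertices), $G_r=G$ and $V(G_r)=V(G)$. The single-vertex case is trivial. The lemma says $F_r^\alpha$ is a spanning forest of $G$ and that every connected component of $F_r^\alpha$ contains a vertex of $A_r^+=\{r\}$. So there is exactly one component, and $F_r^\alpha$ is a spanning tree of $G$ with edge set $\bigcup_{u\in V(G)}S_u^\alpha$.

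\textbf{Degree requirements.} Fix $v\in V(G)$. By \Cref{obs:owned_incident_edges}, the edges of $T$ incident with $v$ are the disjoint union of $S_v^\alpha$ and $\{vw: w\in D_v^-,\ vw\in S_w^\alpha\}$. No edge $vw$ with $w$ neither an ancestor nor a descendant of $v$ exists, by the elimination tree property. Hence $\deg_T(v)=|S_v^\alpha|+|\{w\in D_v^-: vw\in S_w^\alpha\}|$. Under $\alpha$, the choice constraint \eqref{choice_s} makes exactly one $s_{v,S}$ equal to $1$, namely $S=S_v^\alpha$. So the first sum in \eqref{eq:degree} evaluates to $|S_v^\alpha|$. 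For each $w\in D_v^-$, the inner sum $\sum_{S\ni vw}\alpha(s_{w,S})$ is $1$ exactly when $vw\in S_w^\alpha$, and $0$ otherwise. Thus the left-hand side of \eqref{eq:degree} equals $\deg_T(v)$, and the constraint gives $\deg_T(v)\le\req(v)$.

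\textbf{Weight.} The objective evaluates to $\sum_v\sum_{e\in S_v^\alpha}\weightf(e)$. The sets $S_v^\alpha\subseteq O(v)$ are pairwise disjoint because the $O(v)$ partition $E(G)$. Their union is $E(T)$, so the objective value equals $\weightf(T)$.

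\textbf{Main obstacle.} The only real subtlety is the first step: confirming that $V(G_r)=V(G)$, and deducing connectivity from the ``every component meets $A_r^+$'' clause. All the inductive work for this was already done in \cref{lemma:td_solution_properties}, so the remaining steps are bookkeeping.
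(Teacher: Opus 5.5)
Your proposal is correct and follows essentially the same route as the paper: take $T=F_r^\alpha$ and apply \cref{lemma:td_solution_properties} at the root to get a spanning tree, then combine \Cref{obs:owned_incident_edges} with the degree constraint for the bounds, and use the fact that the sets $O(v)$ partition $E(G)$ for the weight. Your extra care in checking $V(G_r)=V(G)$ (and handling the single-vertex case) fills in a detail the paper passes over with ``Note that $G_r=G$''.
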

\begin{proof}
Let $T=F_r^\alpha$.
Note that $G_r=G$.
By \Cref{lemma:td_solution_properties}, $T$ is a spanning forest of $G$ and
every connected component of $T$ contains a vertex of $A_r^+=\{r\}$.
Thus $T$ is connected, and hence a spanning tree of $G$.

For $v\in V(G)$, the degree constraint in $\mathcal C_v^\text{deg}$ 
under $\alpha$ gives
\[
    |S_v^\alpha|
    +
    \sum_{w\in D_v^-}|\{vw\}\cap S_w^\alpha|
    \leq
    \req(v).
\]
By the definition of $F_r^\alpha$, we have
$E(T)=\bigcup_{u\in V(G)}S_u^\alpha$.
Thus, by \Cref{obs:owned_incident_edges}, the left-hand side is the number of
edges of $T$ incident with $v$. Hence $\deg_T(v)\leq \req(v)$, so $T$ respects $\req$.

Finally, since $E(T)=\bigcup_{v\in V(G)}S_v^\alpha$ and the sets $O(v)$
partition $E(G)$, the objective value of $\alpha$ is
\[
    \sum_{v\in V(G)}\sum_{e\in S_v^\alpha}\weightf(e)
    =
    \sum_{e\in E(T)}\weightf(e).\qedhere
\]
\end{proof}

\begin{lemma}\label{lemma:join_implies_cuts} 
    Let $\Pi_1,\ldots,\Pi_t,\Sigma$ be partitions over the same finite set.
    If
    \[
        \Sigma=\Pi_1\oplus\cdots\oplus\Pi_t,
    \]
    then, for every $B\in\Sigma$ and every nonempty proper subset
    $U\subsetneq B$,
    \[
        1\leq \sum_{i=1}^t \chi(\Pi_i,U,B).
    \]
\end{lemma}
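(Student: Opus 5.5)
We argue by contradiction, mirroring the proof of \Cref{lemma:cuts_imply_finer_than_join} in the converse direction. Fix a block $B\in\Sigma$ and a nonempty proper subset $U\subsetneq B$, and suppose that $\chi(\Pi_i,U,B)=0$ for every $i\in[t]$. By definition of $\chi$, this means that no block of any $\Pi_i$ meets both $U$ and $B\setminus U$. Let $X$ denote the common ground set of the partitions.

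The key step is to build a coarser partition that separates $U$ from $B\setminus U$. Let $W\subseteq X$ be the union of all blocks (over all $\Pi_i$) that are reachable from $U$ by a chain of overlapping blocks. Equivalently, $W$ is the union of the blocks of $\Pi_1\oplus\cdots\oplus\Pi_t$ that meet $U$. We must show that $W$ avoids $B\setminus U$, and this is where care is needed. A chain could leave $B$ and come back, so the hypothesis on $\chi$ alone, which only constrains blocks meeting both sides directly, does not immediately rule this out.

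To handle this, we use that $\Sigma$ is exactly the join, not merely coarser than each $\Pi_i$. Since $\Sigma=\Pi_1\oplus\cdots\oplus\Pi_t$ and $B\in\Sigma$, every block of every $\Pi_i$ is contained in a single block of $\Sigma$. So any block of $\Pi_i$ meeting $B$ is contained in $B$. Hence a chain of overlapping blocks that starts in $U\subseteq B$ never leaves $B$: each block in the chain meets the previous one inside $B$, and is therefore contained in $B$.

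Inside $B$, each block lies entirely in $U$ or entirely in $B\setminus U$ by the assumption $\chi=0$. By induction along the chain, all blocks stay within $U$. Thus the set $U$ is a union of blocks of each $\Pi_i$ restricted to $B$. The partition obtained from $\Sigma$ by splitting $B$ into $U$ and $B\setminus U$ is then coarser than every $\Pi_i$, yet strictly finer than $\Sigma$. This contradicts the characterization of the join as the finest partition coarser than all $\Pi_i$.

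Alternatively, we can phrase this via realizing forests. Take forests $F_i$ realizing $\Pi_i$. Every edge of $\bigcup_i F_i$ lies inside a block of $\Sigma$, and no edge joins $U$ to $B\setminus U$. So $U$ and $B\setminus U$ lie in different components of $\bigcup_i F_i$, which contradicts $B$ being a single block of $\Pi(\bigcup_i F_i,X)=\Sigma$.

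The main (mild) obstacle is precisely the step that chains cannot leave $B$. This requires the equality $\Sigma=\Pi_1\oplus\cdots\oplus\Pi_t$ rather than just $\Pi_i\preceq\Sigma$ for each $i$, which is also what makes the claim false without that hypothesis.
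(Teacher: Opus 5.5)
Your proof is correct. Your realizing-forest paragraph is essentially the paper's proof. The paper takes $u\in U$, $v\in B\setminus U$ and a $u$--$v$ path in $F_1\cup\cdots\cup F_t$, and picks a crossing edge $u'v'$ on it. That edge lies in some $F_i$ and forces $\chi(\Pi_i,U,B)=1$. Your version also makes explicit what the paper leaves tacit: the path cannot leave $B$, because all its vertices lie in the component corresponding to $B$.

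Your main argument takes an order-theoretic route instead. Splitting $B$ into $U$ and $B\setminus U$ gives a common coarsening of all $\Pi_i$ that is strictly finer than $\Sigma$. This contradicts $\Sigma$ being the least such coarsening. It mirrors the proof of \Cref{lemma:cuts_imply_finer_than_join} and avoids forests entirely. The price is relying on the least-upper-bound characterization of the join for $t$ partitions, which follows from the two-partition statement by associativity.

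Two small remarks. First, the chain construction of $W$ is unnecessary. The relation $\Pi_i\preceq\Sigma$ together with $\chi(\Pi_i,U,B)=0$ already gives that every block of $\Pi_i$ meeting $B$ lies inside $U$ or inside $B\setminus U$. That is all you need for the split partition to be a common coarsening. Second, that step uses only $\Pi_i\preceq\Sigma$, not the equality $\Sigma=\Pi_1\oplus\cdots\oplus\Pi_t$. The equality, i.e.\ minimality of the join, enters only in the final contradiction. So your closing remark attributes the use of the hypothesis to the wrong step, although the logic itself is sound.
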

\begin{proof}
    For $i\in [t]$, let $F_i$ be a forest realizing $\Pi_i$.
    Suppose for contradiction that for some $B\in \Sigma$ and $\emptyset\neq U\subsetneq B$, we have that $\chi(\Pi_i, U, B)=0$ for all $i\in [t]$. Let $u\in U$ and $v\in B\setminus U$. As $u$ and $v$ are in the same block of $\Sigma$, there is a $uv$-path $\pi$ in $F_1\cup\dots F_t$. Consider an edge $u'v'$ on $\pi$ such that $u'\in U$ and $v'\in B\setminus U$. Let $i$ be such that $u'v'\in E(F_i)$. In other words, $u'\in U$ and $v'\in B\setminus U$ are in the same block of $\Pi_i$, which contradicts the assumption that $\chi(\Pi_i, U, B)=0$.  
\end{proof}

\begin{lemma}\label{lemma:td_completeness}
    Let $T$ be a spanning tree respecting the degree requirement $\req$.
    Then there is a feasible solution $\alpha$ to the ILP whose objective value equals the weight of $T$.
\end{lemma}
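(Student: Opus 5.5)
We define $\alpha$ directly from $T$ and then check each constraint family. For every $v\in V(G)$, set $S_v:=E(T)\cap O(v)$ and put $\alpha(s_{v,S})=1$ exactly for $S=S_v$. Let $F_v:=(V(G_v),\bigcup_{u\in D_v^+}S_u)$, which is the subgraph of $T$ restricted to $E_v$. Set $\alpha(p^+_{v,\Pi})=1$ exactly for $\Pi=\Pi(F_v,A_v^+)$, and $\alpha(p^-_{v,\Pi})=1$ exactly for $\Pi=\Pi(F_v,A_v^-)$. All other variables are $0$.

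The choice constraints hold trivially once we confirm $\Pi(F_v,A_v^+)\in\mathcal P_v^+$.
\begin{itemize}
\item For $v=r$ this holds because $A_r^+=\{r\}$.
\item For $v\neq r$, we must show $\{v\}$ is not a block. Since $T$ is spanning and connected and $r\notin D_v^+$, there is a $T$-path from $v$ to $r$. Its maximal initial segment with all edges in $E_v$ ends at some vertex of $A_v^-$. This is because any edge leaving $D_v^+$ towards the rest of the graph must go to an ancestor of $v$: no edge joins $D_v^+$ to a non-ancestor outside $D_v^+$, and edges between ancestors of $v$ are not in $E_v$.
\item One checks that the segment starts with an edge in $E_v$. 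The first edge is incident to $v$ and goes either to a descendant or to an ancestor, and in both cases it is owned by a vertex of $D_v^+$.
\item Hence $v$ is connected in $F_v$ to a vertex of $A_v^-$, so $\{v\}$ is not a block.
\end{itemize}

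The remaining constraints then follow quickly.
\begin{itemize}
\item \emph{Projection constraints.} They hold since $\Pi(F_v,A_v^-)=\Pi(F_v,A_v^+)|_{A_v^-}$ by definition of $\Pi(\cdot,\cdot)$.
\item \emph{Degree constraints.} By \Cref{obs:owned_incident_edges}, the left-hand side of \eqref{eq:degree} equals $\deg_T(v)\le\req(v)$.
\item \emph{Objective.} It equals $\omega(T)$ because the sets $O(v)$ partition $E(G)$.
\end{itemize}

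For the cut and rank constraints, fix $v$ with children $c_1,\dots,c_t$. The graphs $(A_v^+,S_v),F_{c_1},\dots,F_{c_t}$ have pairwise vertex intersections inside $A_v^+$, because distinct subtrees of $R$ are only joined through ancestors. Their union has edge set $E(F_v)$. By \Cref{lemma:partition_union_join}, $\Pi(F_v,A_v^+)$ is the join of $\Pi((A_v^+,S_v),A_v^+)$ and the $\Pi(F_{c_i},A_{c_i}^-)$, using $A_{c_i}^-=A_v^+$. For the selected $\Pi$, each cut constraint then follows from \Cref{lemma:join_implies_cuts}, since the right-hand side of \eqref{eq:cut} evaluates to $\sum\chi(\cdot,U,B)$ over the selected partitions. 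For non-selected $\Pi$, the left-hand side is $0$ and the constraint is trivial.

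For the rank constraint, the graphs $(A_v^+,S_v),F_{c_1},\dots,F_{c_t}$ are pairwise edge-disjoint forests, being subgraphs of $T$ with disjoint owned edge sets. Their union $F_v$ is also a subgraph of $T$ and hence a forest. The ``only if'' direction of \Cref{lemma:forest_union_rank} with $X=A_v^+$ gives that the sum of the ranks equals the rank of the join. By the previous paragraph, this join is exactly the selected partition $\Pi(F_v,A_v^+)$, which is precisely the rank constraint.

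The main obstacle is the membership $\Pi(F_v,A_v^+)\in\mathcal P_v^+$, that is, the path argument that $v$ reaches an ancestor using only edges of $E_v$. It requires carefully using the elimination-tree property to see that the first time the $v$–$r$ path in $T$ leaves $D_v^+$, it lands in $A_v^-$ via an edge owned inside $D_v^+$. Everything else is direct substitution into earlier lemmas.
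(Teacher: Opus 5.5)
Your proposal is correct and follows essentially the same route as the paper. It uses the same assignment $S_v=E(T)\cap O(v)$ and $\Pi_v^{\pm}=\Pi(F_v,A_v^{\pm})$, and derives the cut and rank constraints from \Cref{lemma:partition_union_join}, \Cref{lemma:join_implies_cuts} and \Cref{lemma:forest_union_rank}. The only difference is how you show $\Pi_v^+\in\mathcal P_v^+$. You use an explicit $v$--$r$ path in $T$, whereas the paper argues by contradiction that a singleton block $\{v\}$ would make $v$'s component in $F_v$ a component of $T$ avoiding $r$. Both arguments rest on the same fact: every edge of $G$ incident with $D_v^+$ lies in $E_v$ and leaves $D_v^+$ only towards $A_v^-$.
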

\begin{proof}
We construct an integer assignment $\alpha$ to the ILP variables as follows.
For every $v\in V(G)$, set
\[
    S_v \coloneqq E(T)\cap O(v),\qquad
    F_v \coloneqq (V(G_v),E(T)\cap E_v),
\]
and
\[
    \Pi_v^+ \coloneqq \Pi(F_v,A_v^+),
    \qquad
    \Pi_v^- \coloneqq \Pi(F_v,A_v^-).
\]
We have $\Pi_r^+=\{\{r\}\}\in\mathcal P_r^+$, and $\Pi_v^+\in\mathcal P_v^+$
for every $v\neq r$, as otherwise the block of $\Pi_v^+$ containing $v$ would
be $\{v\}$, and the component of $F_v$ containing $v$ would be a connected
component of $T$ not containing $r$, contradicting that $T$ is connected.
Hence it is well-defined, for each $v\in V(G)$, to set
$\alpha(s_{v,S_v})=\alpha(p^+_{v,\Pi_v^+})=\alpha(p^-_{v,\Pi_v^-})=1$,
and to set all remaining variables for $v$ to 0.

Next, we argue that $\alpha$ is a feasible solution to the ILP.
The choice constraints hold by construction. For every $v\in V(G)$, by the
definition of $\Pi(\cdot,\cdot)$,
\[
    \Pi_v^-=\Pi(F_v,A_v^-)=\Pi(F_v,A_v^+)|_{A_v^-}=\Pi_v^+|_{A_v^-},
\]
hence the projection constraints hold.

Fix $v\in V(G)$, and let $c_1,\ldots,c_t$ be the children of $v$ (note that
$t=0$ if $v$ is a leaf). For every $i\in[t]$, we have
$A_{c_i}^-=A_v^+$ and hence $\Pi_{c_i}^-=\Pi(F_{c_i},A_v^+)$. Moreover, $F_v$
has the same edge set as
\[
    (A_v^+,S_v)\cup F_{c_1}\cup\cdots\cup F_{c_t}.
\]
The graphs $(A_v^+,S_v),F_{c_1},\ldots,F_{c_t}$ are forests, have pairwise
disjoint edge sets, and the pairwise intersections of their vertex sets are
contained in $A_v^+$.
Hence, by \Cref{lemma:partition_union_join},
\[
    \Pi_v^+
    =
    \Pi((A_v^+,S_v),A_v^+)
    \oplus
    \Pi_{c_1}^-
    \oplus\cdots\oplus
    \Pi_{c_t}^-.
\]
Thus \Cref{lemma:join_implies_cuts} gives the cut constraint indexed by every
triple $(\Pi,B,U)\in\mathcal L_v$ with $\Pi=\Pi_v^+$. For every
$(\Pi,B,U)\in\mathcal L_v$ with $\Pi\neq \Pi_v^+$, we have
$\alpha(p^+_{v,\Pi})=0$, hence the constraint holds trivially.

Since the edge set of $(A_v^+,S_v)\cup F_{c_1}\cup\cdots\cup F_{c_t}$ is
contained in $E(T)$, the union is a forest, and
applying \Cref{lemma:forest_union_rank} yields
\[
    \rk(\Pi((A_v^+,S_v),A_v^+))
    +
    \sum_{i=1}^t\rk(\Pi_{c_i}^-)
    =
    \rk(\Pi_v^+),
\]
which is the constraint in $\mathcal C_v^\text{rank}$ under $\alpha$.

Finally, by \Cref{obs:owned_incident_edges} and
$\bigcup_{u\in V(G)}S_u=E(T)$, the left-hand side of the degree constraint in
$\mathcal C_v^\text{deg}$ is the number of edges of $T$ incident with $v$, and
hence equals $\deg_T(v)$. Since $T$ respects $\req$, the degree constraint for
$v$ holds.

It remains to show that the objective value of $\alpha$ equals the weight of $T$.
Since $S_v=E(T)\cap O(v)$ and the sets $O(v)$ partition $E(G)$, the objective
value of $\alpha$ is
\[
    \sum_{v\in V(G)}\sum_{e\in S_v}\weightf(e)
    =
    \sum_{e\in E(T)}\weightf(e),
\]
as required.
\end{proof}

Now we are ready to prove the main result of this section.

\thmtd*
\begin{proof}
Since all variables of the ILP have bounded domains, if the ILP is feasible,
then its objective is bounded from below, and hence the ILP has an optimal
solution. Thus, by \Cref{lemma:td_soundness,lemma:td_completeness}, solving the
ILP yields a minimum-weight spanning tree satisfying the degree requirements, if
one exists.

It thus remains to construct and then solve the ILP.
First, we compute an optimal elimination tree in \FPT-time parameterized by
$\td(G)$, for example using the algorithm of Nadara, Pilipczuk, and
Smulewicz~\cite{DBLP:conf/esa/NadaraPS22}. With this at hand, the ILP can clearly be constructed in \FPT-time
parameterized by $\td(G)$.
We aim to solve the constructed ILP via Theorem 6 of \cite{treedepthilp}. Towards this, we first transform it into the following normal form:
\begin{equation*}
    \min\{wx: Ax=b,\ l\le x\le u,\ x\in\mathbb Z^n\},
\end{equation*}
where $A\in\mathbb Z^{m\times n}$, $b\in\mathbb Z^m$,
$w\in\mathbb Z^n$, and $l,u\in(\mathbb Z\cup\{\pm\infty\})^n$.
Equivalently, each constraint shall be of the form $L = R$, where $L$ is a linear combination of ILP variables, and $R$ is a constant. Note that we did not adhere to this normal form purely for ease of presentation.

The theorem states that an optimal solution to such an ILP can be computed in \FPT-time  and the maximum constraint coefficient, i.e., 
$\|A\|_\infty \coloneqq \max_{i,j} |A_{ij}|$.

\subparagraph*{Normalizing the ILP.}
The choice constraints are already in the desired form.
The projection constraints can be rearranged so that the right-hand side equals
zero, and the same is true for the rank constraints.
For each constraint in $\mathcal C_v^{\text{cut}}$, indexed by
$(\Pi,B,U)\in\mathcal L_v$, let $L_{v,\Pi,B,U}$ and $R_{v,\Pi,B,U}$
denote the left-hand side and right-hand side of \Cref{eq:cut}.
We first rearrange the constraint as
$0\leq R_{v,\Pi,B,U}-L_{v,\Pi,B,U}$, introduce a \emph{slack variable}
\[
    \lambda_{v,\Pi,B,U}\in\mathbb Z,\qquad
    0\leq \lambda_{v,\Pi,B,U}\leq +\infty,
\]
and replace the inequality by
$\lambda_{v,\Pi,B,U}=R_{v,\Pi,B,U}-L_{v,\Pi,B,U}$.
For the constraint in $\mathcal C_v^\text{deg}$, let $L_v$ denote the
left-hand side of \Cref{eq:degree}. We again introduce a slack variable
\[
    \mu_v\in\mathbb Z,\qquad 0\leq \mu_v\leq +\infty,
\]
and replace the degree constraint by $L_v+\mu_v=\req(v)$.
In total, this yields an equivalent ILP of the desired form.
In a slight abuse of notation, in the rest of this proof, when we refer to ``the ILP'' or its constraints, we are referring to this normal form, not the original definition.

\subparagraph{Bounding $\|A\|_\infty$.}
All constraint coefficients  (i.e., the entries of $A$) used in the choice, projection, and cut constraints lie in $\{-1,0,1\}$.
The coefficients in the rank constraints arise from expressions $\rk(\Pi)$ where $\Pi$ is a partition over either $A_v^+$ or $A_v^-$. 
Since both sets have at most $k$ elements, by the definition of $\rk(\cdot)$,
$\rk(\Pi) \leq k$.
Finally, in the degree constraints the only coefficients that are not zero or one are the
coefficients $|S|$ of variables $s_{v,S}$, which satisfy
$|S|\leq |O(v)|\leq k$.
Hence $\|A\|_\infty \leq k$.\footnote{Note that the degree requirements $\req(v)$ appear only as entries of $b$ and not as coefficients in~$A$.}

\subparagraph{Bounding $\td(G_D(A))$.}

To bound the treedepth of the dual graph, we first count the constraints
associated with each $v\in V(G)$.
Let $B_i$ denote the $i$-th \emph{Bell number}, that is, the number of
partitions of an $i$-element set.
Since $|A_v^+|\leq k$, $|A_v^-|\leq k$, and $|O(v)|\leq k$, we have
\begin{itemize}
    \item $|\mathcal{C}_v^{\text{choice}}| = 2$,
    \item $|\mathcal{C}_v^{\text{proj}}| = B_{|A_v^-|}\leq B_k$,
    \item $|\mathcal{C}_v^{\text{cut}}| = |\mathcal{L}_v|\leq B_k\cdot k\cdot 2^k$,
    \item $|\mathcal{C}_v^\text{rank}| = 1$, and
    \item $|\mathcal{C}_v^\text{deg}| = 1$.
\end{itemize}
Let $\mathcal{C}_v \coloneqq \mathcal{C}_v^{\text{choice}} \;\dot\cup\; \mathcal{C}_v^{\text{proj}} \;\dot\cup\; \mathcal{C}_v^{\text{cut}} \;\dot\cup\; \mathcal{C}_v^\text{rank} \;\dot\cup\; \mathcal{C}_v^\text{deg}$,
and let $\mathcal{C} \coloneqq \dot\bigcup_{v\in V(G)} \mathcal{C}_v$ be the set of all ILP constraints.
For a constraint $c \in \mathcal{C}$, let $\textup{var}(c)$ denote the set of ILP variables occurring with non-zero coefficient in $c$.

For an ILP variable $x$, we write $\nu(x)$ for the unique vertex $v \in V(G)$ associated with $x$,
that is 
\begin{align*}
    \nu(s_{v,S}) &\coloneqq v, \\
    \nu(p^+_{v,\Pi}) &\coloneqq v, \\
    \nu(p^-_{v,\Pi}) &\coloneqq v, \\
    \nu(\lambda_{v,\Pi,B,U}) &\coloneqq v \text{, and} \\
    \nu(\mu_v) &\coloneqq v.
\end{align*}
We further extend the notation $\nu$ to constraints $c \in \mathcal{C}$:
\begin{equation*}
    \nu(c) \coloneqq \{\nu(x): x\in\textup{var}(c)\}.
\end{equation*}

Observe that for $c_1, c_2 \in \mathcal{C}$, if
$\textup{var}(c_1) \cap \textup{var}(c_2) \neq \emptyset$, then
$\nu(c_1)\cap\nu(c_2)\neq\emptyset$.
Consequently, the graph on vertex set $\mathcal C$ in which two distinct
constraints $c_1,c_2$ are adjacent whenever
$\nu(c_1)\cap\nu(c_2)\neq\emptyset$, call it $G^\nu_D(A)$, is a supergraph of $G_D(A)$.
It hence suffices to bound the treedepth of $G^\nu_D(A)$.

Towards this, the central fact guaranteed by our construction is the following: for any $c \in \mathcal{C}_v$, we have $\nu(c) \subseteq D^+_v$, i.e., the only vertices associated with
variables occurring in a constraint $c\in\mathcal C_v$ are $v$ and descendants
of $v$.
We now construct an elimination tree $R^\nu$ for $G^\nu_D(A)$.
Starting with $R$, replace each vertex $v$ by a path $P_v$ with vertex set $\mathcal C_v$ in arbitrary order, by attaching the parent of $v$ to the first path vertex, and the children of $v$ to the last path vertex.
For each $v\in V(G)$, we have $|\mathcal C_v|\leq B_k\cdot k\cdot 2^k+B_k+4$.
Hence, the height of $R^\nu$ is bounded by $k (B_k\cdot k\cdot 2^k+B_k+4)$.
It remains to show that $R^\nu$ is an elimination tree of $G^\nu_D(A)$.
Let $c_1 c_2$ be an edge of $G^\nu_D(A)$.
We need to show that $c_1$ is an ancestor or descendant of $c_2$ in $R^\nu$.
Let $v,w\in V(G)$ be such that $c_1\in\mathcal C_v$ and
$c_2\in\mathcal C_w$.
Since $c_1c_2$ is an edge of $G^\nu_D(A)$, there is a vertex
$z\in\nu(c_1)\cap\nu(c_2)$.
By the central fact above, $z\in D_v^+\cap D_w^+$.
Thus $v$ and $w$ both lie on the root--$z$ path in $R$.
After replacing vertices of $R$ by paths, all vertices of $P_v$ and $P_w$ still
lie on one root--leaf path in $R^\nu$.
Hence, in particular, $c_1$ is an ancestor or descendant of $c_2$ in $R^\nu$.

Therefore $\td(G_D(A))\leq \td(G^\nu_D(A))\leq k (B_k\cdot k\cdot 2^k+B_k+4) $. Hence Theorem~6 of \cite{treedepthilp} is applicable and the statement follows. 
\end{proof}

 \section{W-Hardness for \setmstp}
\label{section:fvs_w1}

In this section, we show that \setmstp is
$W[1]$-hard parameterized by the feedback vertex number, that is, the minimum number of vertices to delete in a graph to make it acyclic.
We reduce from the strongly $W[1]$-hard \textsc{Simple Multidimensional Partitioned Subset Sum (SMPSS)} problem \cite{GanianOR23}.
In this problem, we are given a \emph{dimension} $d \in \mathbb{N}$, $\ell \in \mathbb{N}$ sets of vectors $P_1, \dots, P_\ell$ in $\mathbb{N}_0^d$, and a \emph{target vector} $\bar{t} \in \mathbb{N}_0^d$.
For each $i \in [\ell]$, we write $P_i = \{ p_i^1, \dots, p_i^{|P_i|} \}$. Furthermore, we may assume each $p_i^j$ has exactly one non-zero component in dimension $d_i^j$.
For $v \in \mathbb{N}^d$, we write $\|v\|$ for the $L_1$-norm of $v$, that is, $\|v\| = \sum_{k \in [d]} |v_k|$.
In particular, $(p_i^j)_{d_i^j} = \| p_i^j \|$.
	The parameter of the problem is $d$ and the objective is to decide whether there are $\bar{p}_1 \in P_1, \dots, \bar{p}_\ell \in P_\ell$ such that $\sum_{i\in[\ell]} \bar{p}_i = \bar{t}$.
Note that we can assume that $\bar t_k\leq \sum_{i\in[\ell]}\sum_{j\in[|P_i|]}(p_i^j)_k$ for every $k\in[d]$, as otherwise the \textsc{SMPSS} instance is trivially negative.

\subparagraph{Computing the reduction.}
We construct the unweighted instance $(G, \req)$ of \setmstp as follows.
For each $i \in [\ell]$, we define a \emph{selection gadget} that models the choice of $\bar{p}_i \in P_i$.
We start with the complete bipartite graph $K_{1,|P_i|}$
with vertex set $\{ s_i \} \dot\cup \{ {s'}_i^j \mid j \in [|P_i|]\}$.
Here, we call $s_i$ the $i$'th \emph{selection vertex}, and $\{ {s'}_i^1, \dots, {s'}_i^{|P_i|} \}$ the $i$'th \emph{input vertices}.
We set $\req(s_i) = \{1\}$ and $\req({s'}_i^j) = \{1,3\}$ for all $j \in [|P_i|]$.
To each of the $i$'th input vertices, we attach a component as follows.
Let $j \in [|P_i|]$. We make adjacent to ${s'}_i^j$ the \emph{bridge vertex} $b_i^j$, and make adjacent to $b_i^j$ the \emph{gate vertex} $g_i^j$.
Further, the gate vertex is adjacent to $\|p_i^j\|$ \emph{output vertices}, denoted by $O_i^j$.
We set $\req(b_i^j) = \{1\}$, $\req(g_i^j) = \{1, \|p_i^j\| + 2\}$, and $\req(o) = \{1\}$ for all $o \in O_i^j$.

Next, we model the target vector. For each dimension $k \in [d]$, we create a \emph{target vertex} $t_k$ and set $\req(t_k) = \{ \bar{t}_k + 1 \} $.
For each $i \in [\ell], j \in [|P_i|]$, we create all possible edges between the output vertices $O_i^j$ and the target vertex $t_{d_i^j}$.

Finally, we add the \emph{root vertex} $r$ and make it adjacent to all input, gate, and target vertices.
We have $\deg_G(r)=d + 2 \sum_{i \in [\ell]} |P_i|$ and set $\req(r) = \{\deg_G(r)\}$.

\begin{figure}
    \centering
    \includegraphics[page=2]{ipec_figures}
    \caption{An example of the reduction with $\ell = 1$, $P_1 = \left\{ \begin{pmatrix} 0 \\ 3 \end{pmatrix}, \begin{pmatrix} 2 \\ 0 \end{pmatrix} \right\}$, and $\bar{t} = \begin{pmatrix} 0 \\ 3 \end{pmatrix}$. The set accompanying each vertex $v$ is $\req(v)$. The solution to the instance is $\bar{p}_1 = p_1^1 = \begin{pmatrix} 0 \\ 3 \end{pmatrix}$, and a corresponding spanning tree is drawn in blue. Note that deleting $r, t_1, t_2$ from $G$ yields a forest.}
    \label{figure:fvn_w1_reduction_example}
\end{figure}
See \cref{figure:fvn_w1_reduction_example} for an example of the reduction.
Towards proving correctness of the reduction, we first derive some helpful properties of solutions.
\begin{definition}
    Let $T$ be a solution to the instance $(G, \req)$.
    A gate vertex $g_i^j$ is \emph{closed} if $N_T(g_i^j) = N_G(g_i^j)$ and is \emph{open} if $N_T(g_i^j) = \{ r \}$.
\end{definition}

\begin{lemma}\label{lemma:reduction_fvn_properties}
    Let $T$ be a solution to the instance $(G, \req)$.
    Then, the following statements hold.
    \begin{enumerate}
        \item A gate vertex $g_i^j$ is either open or closed. \label{enum:gate_open_or_closed}
        \item A gate vertex $g_i^j$ is open if and only if $s_i{s'}_i^j \in E(T)$. \label{enum:open_characterization}
        \item For each $i \in [\ell]$, there is exactly one $j \in [|P_i|]$ such that the gate vertex $g_i^j$ is open. \label{enum:exactly_one_gate_open}
        \item $T$ includes all edges between $O_i^j$ and $\{ t_{d_i^j}\}$ if $g_i^j$ is open, and none otherwise. \label{enum:gate_controls_edges}
    \end{enumerate}
\end{lemma}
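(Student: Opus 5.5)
All four items follow from local degree counting. The key point is that the constraints at $r$, and at every vertex with $\req = \{1\}$, leave almost no freedom. First, since $\req(r)=\{\deg_G(r)\}$, every edge of $G$ incident with $r$ lies in $T$. So $T$ contains $r{s'}_i^j$, $rg_i^j$ and $rt_k$ for all $i,j,k$. I will use this forced structure throughout, processing the four items in order.

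For item~\ref{enum:gate_open_or_closed}, note that $N_G(g_i^j)=\{r,b_i^j\}\cup O_i^j$, so $\deg_G(g_i^j)=\|p_i^j\|+2$. If $\deg_T(g_i^j)=\|p_i^j\|+2$, then every edge at $g_i^j$ is in $T$ and the gate is closed. Otherwise $\deg_T(g_i^j)=1$. Since $rg_i^j\in E(T)$ is forced, this gives $N_T(g_i^j)=\{r\}$, so the gate is open.

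For item~\ref{enum:open_characterization}, I will chain two equivalences.
\begin{itemize}
\item The bridge vertex $b_i^j$ has $N_G(b_i^j)=\{{s'}_i^j,g_i^j\}$ and degree exactly $1$ in $T$. Hence exactly one of $b_i^j{s'}_i^j$ and $b_i^jg_i^j$ is in $T$. By item~\ref{enum:gate_open_or_closed}, $b_i^jg_i^j\in E(T)$ holds iff $g_i^j$ is closed. So $g_i^j$ is open iff $b_i^j{s'}_i^j\in E(T)$.
\item The input vertex ${s'}_i^j$ has $N_G({s'}_i^j)=\{s_i,b_i^j,r\}$ and $\req=\{1,3\}$, and $r{s'}_i^j$ is forced. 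So either only that edge is present, or all three are. Hence $b_i^j{s'}_i^j\in E(T)$ iff $s_i{s'}_i^j\in E(T)$.
\end{itemize}
Combining the two equivalences gives the claim.

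For item~\ref{enum:exactly_one_gate_open}, the selection vertex $s_i$ has neighbourhood $\{{s'}_i^j : j\in[|P_i|]\}$ and $\req(s_i)=\{1\}$. So exactly one edge $s_i{s'}_i^j$ lies in $T$, and item~\ref{enum:open_characterization} translates this into exactly one open gate per $i$.

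For item~\ref{enum:gate_controls_edges}, each output vertex $o\in O_i^j$ has $N_G(o)=\{g_i^j,t_{d_i^j}\}$ and degree $1$ in $T$. If $g_i^j$ is closed, then $og_i^j\in E(T)$, so $ot_{d_i^j}\notin E(T)$. If $g_i^j$ is open, then $og_i^j\notin E(T)$, so $ot_{d_i^j}\in E(T)$, since $o$ must have degree $1$.

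The argument is entirely local, and I expect no real obstacle. The only care needed is to use the forced $r$-edges at the right moments, namely in items~\ref{enum:gate_open_or_closed} and~\ref{enum:open_characterization}. I also need to check the neighbourhoods against the construction: in particular, output vertices are adjacent only to their gate and their target vertex, and bridge vertices are not adjacent to $r$.
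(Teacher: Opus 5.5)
Your proposal is correct and follows the paper's proof essentially step by step. Both arguments use the forced root edges, the $\{1,\deg_G(g_i^j)\}$ dichotomy at the gate, the $\{1,3\}$ constraint at the input vertex, and the degree-one constraints at the bridge, selection, and output vertices. The only difference is presentational: you prove item 2 as a chain of two biconditionals through the edge $b_i^j{s'}_i^j$, whereas the paper proves the two directions separately.
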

\begin{proof}
    \noindent\textit{Property~\ref{enum:gate_open_or_closed}.}
    We show that if $g_i^j$ is not closed, it is open.
    Suppose $g_i^j$ is not closed.
    Then, $\deg_T(g_i^j) < \deg_G(g_i^j)$.
    Because $\req(g_i^j) = \{ 1, \deg_G(g_i^j) \}$,
    we have $\deg_T(g_i^j) = 1$.
    Because $\req(r) = \{ \deg_G(r) \}$, we have $N_T(g_i^j) = \{r\}$, i.e., $g_i^j$ is open.

    \smallskip\noindent\textit{Property~\ref{enum:open_characterization}.}
    $(\Rightarrow) \colon$ Assume that $g_i^j$ is open.
    Then, $T$ does not include $g_i^jb_i^j$.
    Hence, since $\req(b_i^j) = \{1\}$, $T$ includes the only other edge incident with $b_i^j$, ${s'}_i^jb_i^j$.
    Because $\req(r) = \{ \deg_G(r) \}$, ${s'}_i^jr \in E(T)$.
    Thus, $\deg_T({s'}_i^j) \geq 2$. Since $\req({s'}_i^j) = \{1,3\}$, we have $\deg_T({s'}_i^j) = 3 = \deg_G({s'}_i^j)$.
    Therefore, $s_i{s'}_i^j \in E(T)$.

    $(\Leftarrow) \colon$
    Assume $s_i{s'}_i^j \in E(T)$.
    Because $\req(r) = \{ \deg_G(r) \}$, ${s'}_i^jr \in E(T)$.
    Thus, $\deg_T({s'}_i^j) \geq 2$. Since $\req({s'}_i^j) = \{1,3\}$, we have $\deg_T({s'}_i^j) = 3 = \deg_G({s'}_i^j)$.
    Therefore, ${s'}_i^jb_i^j \in E(T)$.
    Hence, since $\req(b_i^j) = \{1\}$, $T$ does not include $b_i^jg_i^j$.
    Thus, $g_i^j$ is not closed. Hence, by Property~\ref{enum:gate_open_or_closed}, $g_i^j$ is open.

    \smallskip\noindent\textit{Property~\ref{enum:exactly_one_gate_open}.}
    The statement follows directly from the fact that $\req(s_i) = \{1\}$ for all $i \in [\ell]$ and Property~\ref{enum:open_characterization}.

    \smallskip\noindent\textit{Property~\ref{enum:gate_controls_edges}.}
    Suppose $g_i^j$ is open.
    Let $o \in O_i^j$. Since $g_i^j$ is open, $g_i^jo \not\in E(T)$, and because $\req(o) = \{1\}$ and $N_G(o) = \{ g_i^j, t_{d_i^j}\}$, we have $ot_{d_i^j} \in E(T)$.
    Conversely, suppose $g_i^j$ is not open.
    By Property~\ref{enum:gate_open_or_closed}, $g_i^j$ is closed.
    Let $o \in O_i^j$. Since $g_i^jo \in E(T)$ and $\req(o) = \{1\}$, $ot_{d_i^j} \not\in E(T)$.
\end{proof}

\begin{lemma}\label{lemma:fvn_correctness_1}
    If $(G, \req)$ is a positive instance of \textsc{Set of Degrees MST}, then the given instance of \textsc{SMPSS} is positive.
\end{lemma}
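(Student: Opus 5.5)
Let $T$ be a solution to $(G,\req)$. The plan is to read off the vector choice from $T$ using \cref{lemma:reduction_fvn_properties}, and then to verify the target sum by counting the degree of each target vertex.

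\emph{Choosing the vectors.} By Property~\ref{enum:exactly_one_gate_open} of \cref{lemma:reduction_fvn_properties}, for each $i\in[\ell]$ there is a unique index $j_i\in[|P_i|]$ such that $g_i^{j_i}$ is open. We set $\bar p_i\coloneqq p_i^{j_i}\in P_i$ and claim that $\sum_{i\in[\ell]}\bar p_i=\bar t$.

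\emph{Counting at the target vertices.} Fix a dimension $k\in[d]$. The neighbourhood of $t_k$ in $G$ consists of $r$ together with all output vertices $O_i^j$ for pairs $(i,j)$ with $d_i^j=k$.
\begin{itemize}
\item Since $\req(r)=\{\deg_G(r)\}$, every edge incident with $r$ lies in $T$, so in particular $rt_k\in E(T)$.
\item By Property~\ref{enum:gate_controls_edges}, for each pair $(i,j)$ with $d_i^j=k$, the tree $T$ contains all $|O_i^j|=\|p_i^j\|$ edges between $O_i^j$ and $t_k$ if $g_i^j$ is open, and none of them otherwise.
\end{itemize}
Combining these with $\deg_T(t_k)=\bar t_k+1$, we get
\[
\bar t_k+1=\deg_T(t_k)=1+\sum_{i\in[\ell]:\,d_i^{j_i}=k}\|p_i^{j_i}\|.
\]
Since each $p_i^{j}$ has its single non-zero entry in dimension $d_i^{j}$, with value $\|p_i^{j}\|$, the right-hand side minus one equals $\sum_{i\in[\ell]}(\bar p_i)_k$. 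This gives $\sum_i(\bar p_i)_k=\bar t_k$ for every $k$, so the \textsc{SMPSS} instance is positive.

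\emph{Expected difficulty.} Essentially all the work sits in \cref{lemma:reduction_fvn_properties}. The only point needing care here is the exact degree accounting at $t_k$: we must confirm that its only neighbours are $r$ and the output vertices of the appropriate dimension, so that no other edges contribute to $\deg_T(t_k)$. This holds by construction. Connectivity and acyclicity of $T$ are not needed for this direction.
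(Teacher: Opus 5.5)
Your proof is correct and follows essentially the same route as the paper. You choose $\bar p_i$ via the unique open gate (Property~3 of \cref{lemma:reduction_fvn_properties}), then compute $\deg_T(t_k)=\bar t_k+1$ from the forced edge $rt_k$ and Property~4, exactly as the paper does.
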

\begin{proof}
    Let $T$ be a solution to $(G, \req)$.
    First, we define the family of vectors $(\bar{p}_i)_{i \in [\ell]}$.
    Using Property~\ref{enum:exactly_one_gate_open} of \cref{lemma:reduction_fvn_properties}, for each $i \in [\ell]$, we set $\bar{p}_i \coloneqq p_i^j \in P_i$, where $j$ is the unique index such that $g_i^j$ is open.

    We claim that $(\bar{p}_i)_{i \in [\ell]}$ is a solution to our \textsc{SMPSS} instance.
    To that end, we show that $\sum_{i \in [\ell]} \bar{p}_i = \bar{t}$.
    We proceed component-wise and let $k \in [d]$.
    Observe that $N_G(t_k)$ equals the disjoint union of $\{ r \}$ and all
    $O_i^j$ where $d_i^j = k$.
    Let $\mathcal{O}$ denote the set of all $(i, j)$ where $d_i^j = k$ and $g_i^j$ is open.
    Since $\req(r) = \{ \deg_G(r) \}$ and Property~\ref{enum:gate_controls_edges} of \cref{lemma:reduction_fvn_properties}, $\deg_T(t_k) = 1 + \sum_{(i, j) \in \mathcal{O}} \| p_i^j \|$.
    Applying Property~\ref{enum:exactly_one_gate_open} of \cref{lemma:reduction_fvn_properties} to the equation, we have $\deg_T(t_k) = 1 + \sum_{i \in [\ell]} (\bar{p}_i)_k$.
    On the other hand, since $T$ respects $\req$, we have $\deg_T(t_k) = \bar{t}_k + 1$.
    It follows that $\sum_{i \in [\ell]} (\bar{p}_i)_k =  \bar{t}_k$ and the proof is complete.
\end{proof}

\begin{lemma}\label{lemma:fvn_correctness_2}
    If the given instance of \textsc{SMPSS} is positive, then $(G, \req)$ is a positive instance of \textsc{Set of Degrees MST}.
\end{lemma}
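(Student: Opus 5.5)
Given a solution $\bar p_1 = p_1^{j_1},\dots,\bar p_\ell = p_\ell^{j_\ell}$ with $\sum_i \bar p_i = \bar t$, we build an explicit spanning tree $T$ with $g_i^{j_i}$ open and every other gate closed, mirroring \cref{lemma:reduction_fvn_properties}. Since the instance is unweighted, every spanning tree respecting $\req$ is optimal, so it suffices to exhibit one such tree.

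The edge set of $T$ is as follows. First, take all edges incident with $r$, which gives $r$ degree $\deg_G(r)$ as required. For each $i\in[\ell]$, add the edge $s_i{s'}_i^{j_i}$ and the edge ${s'}_i^{j_i}b_i^{j_i}$, and add all edges between $O_i^{j_i}$ and $t_{d_i^{j_i}}$. For every $j\neq j_i$, add the edge $b_i^jg_i^j$ and all edges between $g_i^j$ and $O_i^j$. No other edges are included.

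We then check the degrees vertex by vertex.
\begin{itemize}
\item $s_i$ has degree $1$.
\item ${s'}_i^{j_i}$ has degree $3$ (edges to $r$, $s_i$, $b_i^{j_i}$), and ${s'}_i^j$ for $j\ne j_i$ has degree $1$ (edge to $r$).
\item Each $b_i^j$ has degree $1$.
\item The open gate $g_i^{j_i}$ has degree $1$, and each closed gate $g_i^j$ has degree $\|p_i^j\|+2$ (edges to $r$, $b_i^j$, and the $\|p_i^j\|$ vertices of $O_i^j$).
\item Each output vertex has degree exactly $1$, attached either to its gate or to its target vertex.
\item $t_k$ has degree $1+\sum_{i:\,d_i^{j_i}=k}\|p_i^{j_i}\| = 1+\sum_i(\bar p_i)_k = 1+\bar t_k$, using that each $p_i^j$ is nonzero only in coordinate $d_i^j$.
\end{itemize}

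It remains to show that $T$ is a spanning tree, and connectivity is the main point to get right. All input, gate and target vertices are adjacent to $r$ in $T$. Each $s_i$ hangs off ${s'}_i^{j_i}$. The bridge $b_i^{j_i}$ hangs off ${s'}_i^{j_i}$, and every other $b_i^j$ hangs off $g_i^j$. Outputs hang off either a closed gate or a target vertex. So every vertex reaches $r$, and $T$ is connected.

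For acyclicity, we root $T$ at $r$ and check that every non-root vertex has exactly one edge toward $r$; equivalently, we count edges. Let $V$ denote the vertex set of $G$. The edge count is
\[
\deg_G(r)+2\ell+\sum_i\sum_{j\ne j_i}\bigl(1+\|p_i^j\|\bigr)+\sum_i\|p_i^{j_i}\|,
\]
and we verify that this equals $|V|-1$, where
\[
|V| = 1+\ell+d+\sum_i\sum_j\bigl(3+\|p_i^j\|\bigr), \qquad \deg_G(r)=d+2\sum_i|P_i|.
\]
Both sides agree, so $T$ is a connected graph with $|V|-1$ edges and hence a spanning tree.
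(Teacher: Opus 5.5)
Your proof is correct and takes essentially the same approach as the paper. It uses the same explicit tree: the chosen input vertex ${s'}_i^{j_i}$ gets degree $3$ and its gate is open, all other gates are closed, and all edges at $r$ are included. You then make the same degree and connectivity checks and finish with a count showing $|E(T)|=|V(G)|-1$.

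The only difference is that you count edges directly instead of via the handshaking lemma, which is a bit shorter; both sides reduce to $\ell+d+3\sum_i|P_i|+\sum_{i,j}\|p_i^j\|$, and writing this out would replace your bare ``both sides agree''. Indexing the chosen vector by $j_i$ is also slightly cleaner than the paper's test $p_i^j=\bar p_i$.
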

\begin{proof}
Let $(\bar{p}_i)_{i \in [\ell]}$ be a solution to the given \textsc{SMPSS} instance.
We construct the graph $T$ over the vertex set $V(G)$ as follows.
Let $i \in [\ell]$ and $j \in [|P_i|]$.
In case $p_i^j = \bar{p}_i$, add the edges $s_i{s'}_i^j, {s'}_i^jb_i^j$, as well as all edges between $O_i^j$ and $\{t_{d_i^j}\}$ to $T$.
If otherwise $p_i^j \neq \bar{p}_i$, add the edges $b_i^jg_i^j$ as well as all edges between $\{ g_i^j \}$ and $O_i^j$ to $T$.
Finally, add all edges incident with $r$ to $T$.

We claim that $T$ is a spanning tree of $G$ that respects $\req$.
First, we verify that $T$ respects $\req$ and that $T$ is connected by providing a path from each vertex to the root $r$ in $T$.
Let $i \in [\ell]$.
Exactly one vector of $P_i$, $\bar{p}_i$, is part of the solution.
Hence, $\deg_T(s_i) = 1 \in \req(s_i)$.
Now, let $j \in [|P_i|]$.

First, assume $p_i^j = \bar{p}_i$. By construction,
$\deg_T({s'}_i^j) = 3$, and there is a path from $s_i$ through ${s'}_i^j$ to the root $r$ in $T$.
The bridge vertex $b_i^j$ has degree $1 \in \req(b_i^j)$ in $T$ and is connected to $r$ through ${s'}_i^j$.
The gate vertex $g_i^j$ has degree $1 \in \req(g_i^j)$ in $T$ and is adjacent to $r$ in $T$.
Each vertex $o \in O_i^j$ has degree $1 \in \req(o)$ in $T$ and is connected to $r$ in $T$ through $t_{d_i^j}$.

Conversely, assume $p_i^j \neq \bar{p}_i$.
Then, $\deg_T({s'}_i^j) = 1 \in \req({s'}_i^j)$ and ${s'}_i^j$ is adjacent to $r$ in $T$.
The bridge vertex $b_i^j$ has degree $1 \in \req(b_i^j)$ in $T$ and is connected to $r$ through $g_i^j$.
The gate vertex $g_i^j$ has degree $\|p_i^j\| + 2 \in \req(g_i^j)$ in $T$ and is adjacent to $r$ in $T$.
Each vertex $o \in O_i^j$ has degree $1 \in \req(o)$ in $T$ and is connected to $r$ in $T$ through $g_i^j$.

Next, we consider the target vertices; let $k \in [d]$.
Observe that by construction
\begin{equation*}
    \deg_T(t_k) = 1 + \sum_{i \in [\ell], j \in [|P_i|], p_i^j = \bar{p}_i, d_i^j = k} {p_i^j}_k = 1 + \left( \sum_{i \in [\ell]} \bar{p}_i \right)_k = 1 + \bar{t}_k \in \req(t_k).
\end{equation*}
Furthermore, $t_k$ is adjacent to $r$ in $T$.
Finally consider the root vertex $r$. Clearly, $\deg_T(r) = \deg_G(r) \in \req(r)$.

Since $T$ is connected and $V(T) = V(G)$, to show that $T$ is a spanning tree of $G$, it suffices to derive $|E(T)| = |V(T)| - 1$.
Let $S$ be the set of indices of selected vectors, that is,
$\{(i,j) \mid p_i^j = \bar{p}_i\}$,
and let $\bar{S}$ be the set of indices of the remaining vectors, that is,
$\{(i,j) \mid p_i^j \neq \bar{p}_i\}$.
For $(i,j) \in S \cup \bar{S}$, consider the degree-sum induced by ${s'}_i^j, b_i^j, g_i^j$, and $O_i^j$, that is, $\deg_T({s'}_i^j) + \deg_T(b_i^j) + \deg_T(g_i^j) + \sum_{o \in O_i^j} \deg_T(o)$.
Let $g(i,j)$ be this degree-sum if $(i,j) \in S$, that is,
$g(i,j) \coloneqq 5 + \|p_i^j\|$,
and let $\bar{g}(i,j)$ be this degree-sum if $(i,j) \in \bar{S}$, that is,
$\bar{g}(i,j) \coloneqq 4 + 2\|p_i^j\|$.
Now, using the handshaking lemma, we obtain
\begin{align*}
2|E(T)| &= \sum_{v \in V(T)} \deg_T(v) \\
&= \deg_T(r) + \sum_{i \in [\ell]} \deg_T(s_i) + \sum_{k \in [d]} \deg_T(t_k) + \sum_{(i,j) \in S} g(i,j) + \sum_{(i,j) \in \bar{S}} \bar{g}(i,j) \\
&= \left(d + \sum_{i \in [\ell]} 2|P_i|\right) + \ell + (\|\bar{t}\| + d) + \sum_{ (i,j) \in S \cup \bar{S}} \bar{g}(i,j) + \sum_{(i,j) \in S} (g(i,j) - \bar{g}(i,j)) \\
&= \left(d + \sum_{i \in [\ell]} 2|P_i|\right) + \ell + (\|\bar{t}\| + d) + \left(\sum_{i \in [\ell]} \sum_{j \in [|P_i|]} \left(4 + 2\|p_i^j\|\right)\right) + (\ell - \|\bar{t}\|) \\
&= 2d + 2\ell + \sum_{i \in [\ell]} \sum_{j \in [|P_i|]} \left(6 + 2\|p_i^j\|\right).
\end{align*}
On the other hand, we obtain
\begin{equation*}
    |V(T)| = 1 + d + \ell + \sum_{i \in [\ell]} \sum_{j \in [|P_i|]} \left(3 + \|p_i^j\|\right).
\end{equation*}
Hence, $|E(T)| = |V(T)| - 1$.
\end{proof}

The correctness of our reduction follows directly from \cref{lemma:fvn_correctness_1} and \cref{lemma:fvn_correctness_2}.
Clearly, $(G, \req)$ can be computed in polynomial time.
All degree sets have size at most two.
Let $X\coloneqq\{r,t_1,\ldots,t_d\}$.
Then $|X|=d+1$, and $G-X$ is a forest whose longest path has at most nine vertices.
Hence the feedback vertex number of $G$ is at most $d+1$.
Moreover, since the treedepth of a graph is at most the number of vertices in a longest path,
$\td(G)\leq |X|+\td(G-X)\leq d+10$.
Hence, we have derived:
\thmtdhard*

\section{W-Hardness for \sdmstp}
 \label{sec:w_deletion}

In this section, we show that \sdmstp is
$W[1]$-hard parameterized by the vertex deletion distance to graphs of pathwidth at most 4 and treewidth at most 3.
We again reduce from \textsc{SMPSS}; let an instance of the problem be given as in \cref{section:fvs_w1}.

A central ingredient of our construction is a gadget of Cygan, Kratsch, and Nederlof~\cite{MR3776680}, originally used in their lower bound for Hamiltonian Cycle.
We begin by formulating their gadget (in the special case needed here) in our setting; see also \cref{figure:label_gadget}.

\begin{definition}
A \emph{2-label gadget} in a \sdmstp instance is a pair $(v,\lambda_v)$, where
$v$ is a degree-4 vertex with $\req(v)=2$, and
$\lambda_v$ maps the edges incident with $v$ to $\{1,2\}$ so that exactly
two incident edges have label $1$ and exactly two incident edges have label $2$.
A solution spanning tree $T$ is \emph{consistent} with $(v,\lambda_v)$ if the
two edges of $T$ incident with $v$ have the same label under~$\lambda_v$.
\end{definition}

\begin{figure}
    \centering
    \includegraphics[page=4]{ipec_figures}
    \caption{The 2-label gadget and its replacement. In the top-left drawing, the gadget vertex $v$ has degree requirement two and four incident edges $e_1,e_2,e_3,e_4$, where $\lambda_v(e_1)=\lambda_v(e_2)=1$ and $\lambda_v(e_3)=\lambda_v(e_4)=2$; in a solution spanning tree consistent with $(v,\lambda_v)$, the two edges incident with $v$ are either $e_1,e_2$ or $e_3,e_4$, as is visually indicated by the white ``tunnels'' in the drawing of $v$. In the top-right drawing (cf.~\cite[Figure 4]{MR3776680}), $v$ is replaced by the displayed gray graph, where all new vertices have degree requirement two. The two lower drawings show the two possible resulting forced routings of a solution spanning tree.}
    \label{figure:label_gadget}
\end{figure}

\begin{lemma}[Adapted from~\cite{MR3776680}]
\label{lem:consistent_labels}
	Let $(G,\req)$ be a \sdmstp instance, $(v,\lambda_v)$ be a 2-label gadget, and let
	$(G',\req')$ be obtained from $(G,\req)$ by replacing $v$ as shown in \cref{figure:label_gadget}.
	Then, $(G,\req)$ has a solution spanning tree consistent with $(v,\lambda_v)$ if and only if the new instance $(G',\req')$ has a solution.
\end{lemma}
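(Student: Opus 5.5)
The plan is to fix a concrete form of the replacement graph from \cref{figure:label_gadget} and argue locally. Let $W$ denote the set of new gray vertices, each with $\req'(w)=2$. Write the four original edges as $e_1=vx_1,\dots,e_4=vx_4$, with $\lambda_v(e_1)=\lambda_v(e_2)=1$ and $\lambda_v(e_3)=\lambda_v(e_4)=2$. In $G'$ these become edges $x_iy_i$ with $y_i\in W$, where $y_i$ is the port vertex of $e_i$. The equivalence then reduces to one structural claim. In any solution $T'$ of $(G',\req')$, the edges of $T'$ crossing between $W$ and $V(G)\setminus\{v\}$ are exactly $\{x_1y_1,x_2y_2\}$ or exactly $\{x_3y_3,x_4y_4\}$. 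Moreover, $T'[W]$ is then a single path between the two used ports that covers all of $W$.

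\emph{From $G$ to $G'$.} Let $T$ be a solution consistent with $(v,\lambda_v)$, and suppose without loss of generality that it uses $e_1,e_2$. I would build $T'$ by deleting $v$ and adding $x_1y_1$, $x_2y_2$, and the Hamiltonian $y_1$--$y_2$ path through $W$ shown in the lower-left drawing of the figure. This replaces the path $x_1vx_2$ by a longer path through all of $W$. So $T'$ is still a tree and spans $V(G')$. Every vertex of $W$ is an internal vertex of that path and thus has degree $2$, and the degrees of $x_1,\dots,x_4$ are unchanged. The case $e_3,e_4$ is symmetric and uses the lower-right routing.

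\emph{From $G'$ to $G$.} Let $T'$ be a solution. A tree with all degrees exactly $2$ on $W$ has a simple local structure: $T'[W]$ is a union of paths, and each path endpoint contributes exactly one crossing edge, or two if the path is a single vertex. I would first count. Since $T'$ is a tree and every vertex of $W$ has degree $2$, each component of $T'[W]$ is a path with exactly two crossing edges. These go to distinct neighbours in $G'-W$, because otherwise they would close a cycle. I then need to show that there is exactly one such component, i.e.\ that $T'[W]$ is Hamiltonian on $W$. The tool is the specific gadget structure: some vertices of $W$ have only two neighbours, so both incident edges are forced into $T'$. Propagating these forced edges, as in \cite[Figure 4]{MR3776680}, should show that exactly the two routings shown are possible. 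In particular, pairings such as $(y_1,y_3)$ should be impossible, as should two separate paths using all four ports. With that in hand, contracting $W$ back into $v$ gives a spanning tree $T$ of $G$. In $T$, $v$ has degree $2$ with both edges of the same label, and all other degrees match. Contracting a connected subtree preserves being a tree, so $T$ is a solution consistent with $(v,\lambda_v)$.

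The main obstacle is the case analysis for the backward direction. I have to exclude two cases. The first is a split of $W$ into two paths, one from $y_1$ to $y_3$ and one from $y_2$ to $y_4$. Such a split is not a cycle, so acyclicity alone does not rule it out. The second is a path that covers only part of $W$. I would first try to rule these out using the forced degree-$2$ vertices of the Cygan--Kratsch--Nederlof gadget. Where that fails, I would use connectivity of $T'$: a vertex of $W$ not lying on a port-to-port path would lie in a component of $T'$ that is a path inside $W$ with no crossing edges, which is impossible for a spanning tree. This mirrors their Hamiltonian-cycle argument, with "each vertex has degree $2$ in the cycle" replaced by $\req'=2$ together with the connectivity of $T'$.
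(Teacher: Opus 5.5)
Your approach is essentially the paper's. The paper gives no argument of its own and simply imports the gadget analysis of Cygan, Kratsch and Nederlof, and your setup of the backward direction is exactly what makes that import legitimate. In a solution $T'$, every component of $T'[W]$ is a path whose two crossing edges leave from distinct ports. So $T'[W]$ is a spanning linear forest of the gray graph with all path endpoints at distinct ports, consisting of one or two paths. This is the same object as the restriction of a Hamiltonian cycle to the gadget. Your forward direction and the final contraction step are fine.

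One point should be corrected. A ``path that covers only part of $W$'' is not a separate case. By your own degree count, the remaining vertices of $W$ form a second port-to-port path, so this is the two-path split again. Consequently your connectivity fallback does no work. Components of $T'[W]$ without crossing edges are already excluded by the count. A split such as $y_1$--$y_3$ plus $y_2$--$y_4$ is locally compatible with $T'$ being a spanning tree, as you note yourself.

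So two exclusions rest entirely on the structure of the concrete replacement graph: a single Hamiltonian path between ports of different labels, and every two-path split. Your ``should show'' is where all the content of the lemma sits. To close the argument, you have two options. You can carry out the forced-edge propagation on the gray graph and check that only the two drawn routings survive. Alternatively, you can cite the Cygan--Kratsch--Nederlof analysis explicitly in this local form: every spanning linear forest of the gadget whose endpoints are distinct ports is one of the two intended Hamiltonian paths. That local statement is what the paper's ``adapted from'' tacitly relies on.
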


As an intermediary step, we construct the (unweighted) instance $(G, \req)$ of \sdmstp,
along with a set of 2-label gadgets.
We will obtain the final output of the reduction $(G', \req')$ with ``compiled'' 2-label gadgets using \cref{lem:consistent_labels} at the end.

We start with the \emph{target vertices} $ (t_i)_{i \in [d]}$ and $(t'_i)_{i \in [d]} $
and set $\req(t_i) = \bar{t}_i + 1$ and $\req(t'_i) = \left( \sum_{j \in [\ell]}\sum_{v \in P_j} v_i \right) - \bar{t}_i + 1$.

For each $i \in [\ell]$, we construct a \emph{choice gadget}.
The \emph{selection vertices} $s_i$ and $s'_i$ with $\req(s_i) = 1 + 1$ and $\req(s'_i) = |P_i| - 1 + 1$ model the choice of vector from $P_i$.
Intuitively, we will ensure $s_i$ selects a vector from $P_i$, and $s'_i$ selects all remaining vectors of $P_i$.
Here, the $+ 1$ in degree requirement is for technical reasons that will become apparent later.

Next, we insert the \emph{path vertices}.
For each $j \in [ |P_i| ]$, corresponding to the vector $p_i^j$,
we create a path with vertices $p_{i,j,0}, \dots, p_{i, j, \| p_i^j \| }$
with $\req(p_{i, j, k}) = 2$ for $1 \leq k \leq \| p_i^j \|$ and $\req(p_{i, j, 0}) = 1$.
Furthermore, for each $k \in [\| p_i^j \|]$, we create a vertex $p'_{i, j, k}$ with $\req(p'_{i, j, k})=1$ and make it adjacent to $p_{i,j,k}$. These vertices are also considered path vertices.

Next, we add the edges $s_ip_{i, j, 0}$ and $s'_i{p_{i, j, \| p_i^j \| }}$ and all possible edges between $\{ p_{i,j,k} \mid k \in [\| p_i^j \|] \}$ and $t_{d_i^j}$,
as well as all possible edges between $\{ p'_{i,j,k} \mid k \in [\| p_i^j \|] \}$ and $t'_{d_i^j}$.

Furthermore, the edge-labeling $\lambda_{i,j}$ assigns $1$ to all edges with both endpoints in $\{s'_i\} \cup \{ p_{i,j,k} \mid k \in [\| p_i^j \|]_0  \}$ and $2$ otherwise.

Finally, we add the \emph{root vertex} $r$ and make it adjacent to each vertex of $\{ s_i, s'_i \mid i \in [\ell] \} \cup \{ t_i, t'_i \mid i \in [d] \}$ and set $\req(r) = \deg_G(r) = 2\ell + 2 d$.

We obtain the final output of the reduction $(G', \req')$ by ``compiling'' the 2-label gadgets,
that is, applying \cref{lem:consistent_labels} to $(G, \req)$ for each 2-label gadget.
See \cref{figure:constant_tw_reduction_example} for an example of the reduction.

\begin{figure}
    \centering
    \includegraphics[page=3]{ipec_figures}
    \caption{An example of the reduction with $\ell = 1$, $P_1 = \left\{ \begin{pmatrix} 0 \\ 3 \end{pmatrix}, \begin{pmatrix} 2 \\ 0 \end{pmatrix} \right\}$, and $\bar{t} = \begin{pmatrix} 0 \\ 3 \end{pmatrix}$. The solution to the instance is $\bar{p}_1 = p_1^1 = \begin{pmatrix} 0 \\ 3 \end{pmatrix}$. On the left, graph $G$ with 2-label gadgets in gray and possible configurations depicted as white ``tunnels''. On the right, graph $G$ with vertices labeled by~$\req$, and a solution spanning tree in blue.}
    \label{figure:constant_tw_reduction_example}
\end{figure}

\begin{lemma}\label{lemma:ctw_output_vertices}
    Let $T$ be a solution to the instance $(G, \req)$ that is consistent with the 2-label gadgets $\{ (p_{i, j, k}, \lambda_{i, j}) \mid k \in [\| p_i^j \|] \}$,
    and let $p_i^j \in P_1 \cup \dots \cup P_\ell$.
    If $s_ip_{i, j, 0} \in E(T)$, then $T$ contains all edges between $\{ p_{i,j,k} \mid k \in [\| p_i^j \|] \}$ and $\{ t_{d_i^j} \}$.
    If conversely $s_ip_{i, j, 0} \not\in E(T)$, then  $T$ contains no edges between
    $\{ p_{i,j,k} \mid k \in [\| p_i^j \|] \}$ and $\{ t_{d_i^j} \}$.
\end{lemma}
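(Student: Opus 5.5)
The proof is a short induction along the path $p_{i,j,0},p_{i,j,1},\dots,p_{i,j,\|p_i^j\|}$. At each step we use only the degree requirement $\req(p_{i,j,k})=2$ and consistency with the 2-label gadget $(p_{i,j,k},\lambda_{i,j})$.

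First I record the neighbourhoods. For $k\in[\|p_i^j\|]$, the vertex $p_{i,j,k}$ has degree $4$ in $G$. Its neighbours are $p_{i,j,k-1}$; the next vertex $p_{i,j,k+1}$, or $s'_i$ if $k=\|p_i^j\|$; the pendant-like vertex $p'_{i,j,k}$; and $t_{d_i^j}$. Under $\lambda_{i,j}$, the two ``path'' edges, both of which have both endpoints in $\{s'_i\}\cup\{p_{i,j,k}\mid k\in[\|p_i^j\|]_0\}$, get label $1$. The edges $p_{i,j,k}p'_{i,j,k}$ and $p_{i,j,k}t_{d_i^j}$ get label $2$. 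So it is indeed a 2-label gadget. Consistency means that $T$ contains either both path edges at $p_{i,j,k}$ and neither label-$2$ edge, or both label-$2$ edges (in particular $p_{i,j,k}t_{d_i^j}$) and neither path edge. Moreover, $p_{i,j,0}$ has exactly the two neighbours $s_i$ and $p_{i,j,1}$, and $\req(p_{i,j,0})=1$. Hence $T$ contains exactly one of $s_ip_{i,j,0}$ and $p_{i,j,0}p_{i,j,1}$.

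\emph{Case $s_ip_{i,j,0}\in E(T)$.} Then $p_{i,j,0}p_{i,j,1}\notin E(T)$. I show by induction on $k$ that $p_{i,j,k-1}p_{i,j,k}\notin E(T)$ implies that $p_{i,j,k}$ uses its label-$2$ edges. Indeed, $p_{i,j,k}$ misses one of its label-$1$ edges, so by consistency it must use both label-$2$ edges. In particular $p_{i,j,k}t_{d_i^j}\in E(T)$, and $p_{i,j,k}p_{i,j,k+1}\notin E(T)$, which feeds the next step. This yields all edges between $\{p_{i,j,k}\mid k\in[\|p_i^j\|]\}$ and $t_{d_i^j}$.

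\emph{Case $s_ip_{i,j,0}\notin E(T)$.} Then $p_{i,j,0}p_{i,j,1}\in E(T)$. Inductively, $p_{i,j,k-1}p_{i,j,k}\in E(T)$ is a label-$1$ edge at $p_{i,j,k}$, so consistency forces $T$ to use exactly the label-$1$ edges at $p_{i,j,k}$. Thus $p_{i,j,k}t_{d_i^j}\notin E(T)$, and the next path edge (towards $p_{i,j,k+1}$ or $s'_i$) lies in $T$, which continues the induction.

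There is no real obstacle here. The only care needed is to check the labelling correctly: the edge to $s'_i$ at the last path vertex counts as label $1$, and $p_{i,j,0}$ itself is not a gadget but is controlled by $\req(p_{i,j,0})=1$.
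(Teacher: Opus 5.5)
Your proof is correct and follows the paper's argument. The condition $\req(p_{i,j,0})=1$ decides whether $p_{i,j,0}p_{i,j,1}$ lies in $T$, and consistency with the 2-label gadgets then propagates this choice along the label-$1$ path $p_{i,j,0},\dots,p_{i,j,\|p_i^j\|},s'_i$, so that at every $p_{i,j,k}$ either both label-$2$ edges (including $p_{i,j,k}t_{d_i^j}$) or neither are used; the only difference is that you write out as an explicit induction what the paper compresses into a single appeal to consistency along that path.
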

\begin{proof}

    First, consider the path $p \coloneqq p_{i,j,0}, \dots, p_{i,j,\|p_i^j\|},s'_i$.
    Observe that $\lambda_{i,j}$ assigns $1$ to all edges of $p$, and $2$ to all edges adjacent to $p$ that are not in~$p$.

    Now, assume $s_ip_{i, j, 0} \in E(T)$.
    Then, since $\req(p_{i, j, 0}) = 1$, the edge $p_{i, j, 0}p_{i, j, 1}$, labeled $1$ by~$\lambda_{i,j}$, is not in $T$.
    Since the first edge of $p$ is not in $T$, by consistency of $T$ with the 2-label gadgets,
    no edge of $p$ is in $T$.
    For each $p_{i,j,k}$ with $k \in [\|p_i^j\|]$, we
    have $\req(p_{i,j,k}) = 2$, and only two viable edges to include in $T$ remain: $p_{i,j,k}p'_{i,j,k}$ and $p_{i,j,k}t_{d_i^j}$.
    Since $T$ respects the degree requirement of $p_{i,j,k}$, both of these edges are in $T$.

    Conversely, assume $s_ip_{i, j, 0} \not\in E(T)$.
    This forces $p_{i, j, 0}p_{i, j, 1}$ to be in $T$.
    Hence, again by consistency of $T$ with the 2-label gadgets, all edges of $p$ are in $T$, and the degree requirements of $p_{i,j,1}, \dots, p_{i,j,\|p_i^j\|}$ are fulfilled.
    Therefore, no edge $p_{i,j,k}t_{d_i^j}$ with $k \in [\|p_i^j\|]$ is in $T$.
\end{proof}

\begin{lemma}\label{lemma:constant_tw_correctness_1}
    Let $T$ be a solution to the instance $(G, \req)$ that is consistent with the 2-label gadgets $\{ (p_{i, j, k}, \lambda_{i, j}) \mid k \in [\| p_i^j \|] \}$.
    Then, the given \textsc{SMPSS} instance is positive.
\end{lemma}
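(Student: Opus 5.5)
The plan mirrors the proof of \cref{lemma:fvn_correctness_1}. The first step is to show that for each $i\in[\ell]$ exactly one edge $s_ip_{i,j,0}$ lies in $T$. Since $\req(r)=\deg_G(r)$, every edge incident with $r$ is in $T$, in particular $rs_i$. The neighbours of $s_i$ are $r$ and the vertices $p_{i,j,0}$ for $j\in[|P_i|]$, and $\req(s_i)=2$. Hence exactly one edge $s_ip_{i,j,0}$ is in $T$; call its index $j_i$. We then set $\bar p_i\coloneqq p_i^{j_i}$.

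The second step is to count the degree of each target vertex $t_k$. Its neighbours in $G$ are $r$ together with the vertices $p_{i,j,k'}$ for $k'\in[\|p_i^j\|]$, ranging over all pairs $(i,j)$ with $d_i^j=k$. The edge $rt_k$ is in $T$. By \cref{lemma:ctw_output_vertices}, for each pair $(i,j)$ with $d_i^j=k$, $T$ contains either all $\|p_i^j\|$ edges to $t_k$ (when $j=j_i$) or none of them. Therefore
\[\deg_T(t_k)=1+\sum_{i:\,d_i^{j_i}=k}\|p_i^{j_i}\|=1+\Bigl(\sum_{i\in[\ell]}\bar p_i\Bigr)_k,\]
where the last equality uses that each $p_i^j$ is nonzero only in coordinate $d_i^j$. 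Comparing with $\req(t_k)=\bar t_k+1$ gives $\sum_i\bar p_i=\bar t$, so $(\bar p_i)_{i\in[\ell]}$ is a solution to the \textsc{SMPSS} instance.

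I do not expect to need the vertices $t'_k$ or $s'_i$ for this direction; they only ensure that a solution exists in the converse. The one delicate point is the first step, which needs that $s_i$ has no neighbours other than $r$ and the $p_{i,j,0}$. I would check this directly against the construction: $s_i$ is adjacent only to $r$ and the path starts, and it is not itself a 2-label gadget vertex. With that established, the rest follows immediately from \cref{lemma:ctw_output_vertices}.
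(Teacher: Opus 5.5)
Your proposal is correct and follows the paper's own proof essentially step for step. Both arguments first use $\req(r)=\deg_G(r)$ and $\req(s_i)=2$ to force exactly one edge $s_ip_{i,j,0}$ into $T$, then apply \cref{lemma:ctw_output_vertices} to compute $\deg_T(t_k)=1+\bigl(\sum_{i\in[\ell]}\bar p_i\bigr)_k$ and compare this with $\req(t_k)=\bar t_k+1$.
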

\begin{proof}
    First, we define the family of vectors $(\bar{p}_i)_{i \in [\ell]}$.
    Let $i \in [\ell]$.
    Since $T$ respects the degree requirements and $\req(r)=\deg_G(r)$, all edges incident with $r$ are included in $T$; in particular $rs_i\in E(T)$.
    Since $N_G(s_i) = \{r\} \cup \{p_{i,j,0} \mid j \in [| P_i |] \}$ and $\req(s_i) = 2$, there is exactly one $j \in [| P_i |]$ with $s_ip_{i,j,0} \in E(T)$.
    Set $\bar{p}_i \coloneqq p_i^j \in P_i$.

    We claim that $(\bar{p}_i)_{i \in [\ell]}$ is a solution to our \textsc{SMPSS} instance.
    To that end, we show $\sum_{i \in [\ell]} \bar{p}_i = \bar{t}$.
    We proceed component-wise and let $k \in [d]$.
    Observe that $N_G(t_k)$ equals the disjoint union of $\{ r \}$ and all
    $\{ p_{i, j, k'} \mid k' \in [\| p_i^j \|] \}$ where $d_i^j = k$.
    Since $T$ includes all edges adjacent to the root and by \cref{lemma:ctw_output_vertices}, we have $\deg_T(t_k) = 1 + \sum_{i \in [\ell]} (\bar{p}_i)_k$.
    On the other hand, since $T$ respects $\req$, we have $\deg_T(t_k) = \bar{t}_k + 1$.
    It follows that $\sum_{i \in [\ell]} (\bar{p}_i)_k =  \bar{t}_k$ and the proof is complete.
\end{proof}

\begin{lemma}\label{lemma:constant_tw_correctness_2}
    If the given instance of \textsc{SMPSS} is positive, then there is a solution $T$ of $(G, \req)$ that is consistent with the 2-label gadgets $\{ (p_{i, j, k}, \lambda_{i, j}) \mid k \in [\| p_i^j \|] \}$.
\end{lemma}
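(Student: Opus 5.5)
We construct $T$ explicitly from a solution $(\bar p_i)_{i\in[\ell]}$ of the \textsc{SMPSS} instance, mirroring the proof of \cref{lemma:fvn_correctness_2}. Each $p_i^j$ has exactly one non-zero component, so $\|p_i^j\|\geq 1$; write $m\coloneqq\|p_i^j\|$ and $d\coloneqq d_i^j$. We put into $T$ all edges incident with $r$. Then, for each $i\in[\ell]$ and $j\in[|P_i|]$, we add edges as follows.
\begin{itemize}
\item If $p_i^j=\bar p_i$ (the \emph{selected} vector), we add $s_ip_{i,j,0}$, and for every $k\in[m]$ the edges $p_{i,j,k}p'_{i,j,k}$ and $p_{i,j,k}t_{d}$.
\item Otherwise (an \emph{unselected} vector), we add the whole path $p_{i,j,0}p_{i,j,1}\cdots p_{i,j,m}$, the edge $p_{i,j,m}s'_i$, and for every $k\in[m]$ the edge $p'_{i,j,k}t'_{d}$.
\end{itemize}

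First we check the degrees.
\begin{itemize}
\item $s_i$ has neighbours $r$ and the unique $p_{i,j,0}$ with $p_i^j=\bar p_i$, so its degree is $2$.
\item $s'_i$ has neighbours $r$ and one endpoint for each of the $|P_i|-1$ unselected vectors, matching $\req(s'_i)=|P_i|$.
\item $p_{i,j,0}$ has degree $1$ in both cases: its only $T$-neighbour is $s_i$, respectively $p_{i,j,1}$.
\item For $k\in[m]$, $p_{i,j,k}$ has degree $2$. In the selected case its neighbours are $p'_{i,j,k}$ and $t_d$. In the unselected case they are its two path neighbours, where for $k=m$ one of them is $s'_i$.
\item $p'_{i,j,k}$ has degree $1$.
\item For $k\in[d]$, $t_k$ receives $1$ from $r$ plus $\sum_i(\bar p_i)_k=\bar t_k$ from the selected vectors. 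Likewise $t'_k$ receives $1$ plus the sum of the $k$-th components of all unselected vectors, which is $\sum_{j}\sum_{v\in P_j}v_k-\bar t_k$.
\item $r$ has full degree $\deg_G(r)=\req(r)$.
\end{itemize}

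Next we check consistency with $(p_{i,j,k},\lambda_{i,j})$ for $k\in[m]$. In the selected case both $T$-edges at $p_{i,j,k}$ leave $\{s'_i\}\cup\{p_{i,j,k'}\}$, so both carry label $2$. In the unselected case both lie on the path $p_{i,j,0},\dots,p_{i,j,m},s'_i$, so both carry label $1$.

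Finally, $T$ is a spanning tree. We assign every non-root vertex a parent that is a $T$-neighbour:
\begin{itemize}
\item $s_i,s'_i,t_k,t'_k$ get parent $r$;
\item in the selected case, $p_{i,j,0}$ gets parent $s_i$, each $p_{i,j,k}$ gets parent $t_d$, and each $p'_{i,j,k}$ gets parent $p_{i,j,k}$;
\item in the unselected case, $p_{i,j,m}$ gets parent $s'_i$, each $p_{i,j,k-1}$ gets parent $p_{i,j,k}$, and each $p'_{i,j,k}$ gets parent $t'_d$.
\end{itemize}
Every edge of $T$ is exactly one parent edge. Following parents from any vertex strictly decreases its distance to $r$ in this explicit layering, so the parent pointers reach $r$ and contain no cycle. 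Therefore $T$ is connected with $|V(G)|-1$ edges, i.e., a spanning tree. Alternatively, one can reach the same conclusion by a handshake count as in \cref{lemma:fvn_correctness_2}.

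The only delicate point is bookkeeping the degrees of $s'_i$ and $t'_k$ and the label consistency. The tree property is immediate from the parent structure.
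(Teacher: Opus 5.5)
Your proposal is correct and matches the paper's proof: it builds the same tree $T$ and performs the same degree and label-consistency checks. That tree consists of all root edges; for the selected $p_i^j$, the edge $s_ip_{i,j,0}$ together with the edges $p_{i,j,k}t_{d_i^j}$ and $p_{i,j,k}p'_{i,j,k}$; and for the unselected vectors, the full path to $s'_i$ together with the edges $p'_{i,j,k}t'_{d_i^j}$. The only deviation is the final step. You certify the tree property through an explicit parent map whose edges are exactly $E(T)$, while the paper exhibits root paths and then obtains $|E(T)|=|V(T)|-1$ by a handshake count; your version gives connectivity and the edge count at once without the arithmetic (just avoid reusing $d$ for both $d_i^j$ and the dimension).
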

\begin{proof}
    Let $(\bar{p}_i)_{i \in [\ell]}$ be a solution to the given \textsc{SMPSS} instance.
We construct the graph $T$ over the vertex set $V(G)$ as follows.
Let $i \in [\ell]$ and $j \in [|P_i|]$.
In case $p_i^j = \bar{p}_i$, i.e., the vector $p_i^j$ is chosen, add the edge
$s_ip_{i,j,0}$ and the edges $\{ p_{i,j,k}t_{d_i^j}, p_{i,j,k}p'_{i,j,k} \mid k \in [ \| p_i^j \| ] \}$ to $T$.
If otherwise $p_i^j \neq \bar{p}_i$, i.e., the vector $p_i^j$ is not chosen, add the path
$p_{i,j,0},\dots,p_{i,j,\|p_i^j\|},s'_i$
and the edges $\{ p'_{i,j,k}t'_{d_i^j} \mid k \in [ \| p_i^j \| ] \}$ to $T$.
Finally, add all edges incident with $r$ to $T$.

We claim that $T$ is a spanning tree of $G$ that respects $\req$ and the 2-label gadgets.
First, we verify that $T$ respects $\req$ and the 2-label gadgets and that $T$ is connected by providing a path from each vertex to the root $r$ in $T$.

The degree and connectivity requirement of the root holds trivially.
Next, let $i \in [\ell]$ and consider the selection vertices $s_i, s'_i$.
Both $s_i$ and $s'_i$ are adjacent to $r$ in $T$.
Furthermore, the edge we add to $T$ for $\bar{p}_i \in P_i$ induces one unit of degree for $s_i$,
and the edges we add to $T$ for the remaining $v \in P_i$ induce $|P_i| - 1$ units of degree for $s'_i$.
Hence, the degree and connectivity requirements of the selection vertices are fulfilled.
Next, let $i \in [\ell]$ and $j \in [|P_i|]$ and consider the corresponding path vertices.
Regardless of whether $p_i^j$ is part of the solution or not, it is easy to verify that the degree requirements as well as the 2-label gadgets of the path vertices are respected,
and that for each path vertex $p$, there is a path in $T$ from $p$ to $r$.
See \cref{figure:constant_tw_reduction_example} for an illustration.
Finally, let $k \in [d]$ and consider the target vertices $t_k$ and $t'_k$.
Both $t_k$ and $t'_k$ are adjacent to $r$ in $T$.
By construction, each vector $p_i^j$ with $d_i^j = k$ uniquely accounts
for $\|p_i^j\|$ units of degree of $t_k$ if $p_i^j$ is part of the solution, and $t'_k$ otherwise.
This directly implies the connectivity and degree requirements of the target vertices are respected.

Since $T$ is connected and $V(T) = V(G)$, to show that $T$ is a spanning tree of $G$, it suffices to derive $|E(T)| = |V(T)| - 1$.
First, using the handshaking lemma, we calculate
\begin{align*}
    2|E(T)| & = \sum_{v \in V(T)} \deg_T(v) \\
    & = \underbrace{2\ell + 2d}_{\text{root}} +
    \underbrace{\sum_{i \in [\ell]} |P_i| + 2\ell}_{\text{selection vertices}} +
    \underbrace{\sum_{i \in [\ell]} \sum_{j \in [|P_i|]} (3 \|p_i^j\| + 1)}_{\text{path vertices}} +
    \underbrace{\sum_{i \in [\ell]} \sum_{j \in [|P_i|]} \|p_i^j\| + 2d}_{\text{target vertices}}.
\end{align*}
Second, we count
\begin{align*}
    |V(T)| &= \underbrace{1}_{\text{root}} + \underbrace{2\ell}_{\text{selection vertices}} + \underbrace{\left( \sum_{i \in [\ell]} \sum_{j \in [|P_i|] } \left(2 \|p_i^j\| + 1\right) \right)}_{\text{path vertices}} + \underbrace{2d}_{\text{target vertices}}\\
\end{align*}

Hence $|E(T)| = |V(T)| - 1$, consequently $T$ is a tree, and the proof is complete.
\end{proof}

\begin{figure}
    \centering
    \includegraphics[page=5]{ipec_figures}
    \caption{A connected component $C_i^j$ from the proof of \cref{thm:pwdeletion}, after deleting $X$ and the two selector vertices $s_i,s'_i$, for a vector $p_i^j$ with $\|p_i^j\|=3$. The labeled vertices name the leftmost replacement graph, and $\mathcal P$ is a width-2 path decomposition of this replacement graph, with $a$ in the first bag and $b$ in the last. Observe that hence ``chaining'' $\mathcal P$ yields a width-2 decomposition of the whole connected component.}
    \label{figure:label_gadget_pw}
\end{figure}

\thmpwdeletion* 
\begin{proof}
    By combining \cref{lemma:constant_tw_correctness_1} and \cref{lemma:constant_tw_correctness_2},
    we know the given \textsc{SMPSS} instance is positive if and only if
    there is a solution $T$ of the instance $(G, \req)$ that is consistent with the 2-label gadgets $\{ (p_{i, j, k}, \lambda_{i, j}) \mid k \in [\| p_i^j \|] \}$.
    By \cref{lem:consistent_labels}, this means the given instance of \textsc{SMPSS} is equivalent to $(G', \req')$.

    Clearly, $(G', \req')$ can be computed in polynomial time.
    Let $X \coloneqq \{r, t_1, \ldots, t_d, t'_1, \ldots, t'_d \}$ with $|X| = 2d + 1$.
    We show that $\pw(G' \setminus X) \leq 4$ and $\tw(G' \setminus X) \leq 3$.
    The graph $G' \setminus X$ consists of $\ell$ connected components $C_1,\ldots,C_\ell$,
    where $C_i$ contains the two selector vertices $s_i,s'_i$.
    For each vector $p_i^j\in P_i$, the remaining vertices associated with $p_i^j$ are $p_{i,j,0}$,
    the vertices $p'_{i,j,k}$, and the vertices introduced when replacing $p_{i,j,k}$ for $k\in[\|p_i^j\|]$.
    Therefore, deleting $s_i$ and $s'_i$ from $C_i$ leaves one connected component for each vector $p_i^j\in P_i$.

    Let $i \in [\ell]$.
    For each $j\in[|P_i|]$, let $C_i^j$ be the connected component of $C_i-\{s_i,s'_i\}$ containing $p_{i,j,0}$,
    and let $u_{i,j}$ be the vertex of $C_i^j$ adjacent to $s'_i$ in $C_i$.
    By \cref{figure:label_gadget_pw}, $C_i^j$ has a path decomposition of width~$2$
    whose first bag contains $p_{i,j,0}$ and whose last bag contains $u_{i,j}$.
    Concatenating the decompositions, with one additional bag for each edge between consecutive replacements,
    gives a path decomposition of $C_i-\{s_i,s'_i\}$ of width~$2$.
    Adding both $s_i$ and $s'_i$ to every bag gives a path decomposition of $C_i$ of width at most~$4$.
    Hence $\pw(G'\setminus X)\leq 4$.

    For the treewidth bound, take the same decompositions of $C_i^1,\ldots,C_i^{|P_i|}$ as disjoint trees.
    Add a new bag $\{s'_i\}$.
    For each $j\in[|P_i|]$, add the bag $\{s'_i,u_{i,j}\}$ and make it adjacent
    to both a bag of the decomposition of $C_i^j$ containing $u_{i,j}$ and the bag $\{s'_i\}$.
    This covers all edges incident with $s'_i$ and preserves width~$2$.
    Adding $s_i$ to every bag then gives a tree decomposition of $C_i$ of width at most~$3$.
    Hence $\tw(G'\setminus X)\leq 3$, completing the proof.
\end{proof}

\section{FPT by Vertex Cover Number}
 \label{sec:fpt_vc}

In this section, we prove the following theorem.

\thmvcfpt*

Let $(G=(V, E), \req)$ be an instance of \setmstp, and let $k$ be the vertex cover number of $G$.
Let $X$ be a vertex cover of $G$ of size $k$. We assume for now that $X$ is provided. We compute such a set $X$ in the proof of \cref{thm:vcfpt} using known FPT algorithms. 
We assume that $\max\req(v) \leq \deg(v)$ for all $v \in V$, since we can remove all values larger than $\deg(v)$ from $\req(v)$ resulting in an equivalent instance.

Let $V_0 = V\setminus X$, and let $\sim$ be the equivalence relation on $V_0$, where $u \sim v$ if and only if $N(u) = N(v)$ and $\req(u) = \req(v)$.
We denote by $\Pi$ the set of equivalence classes of $\sim$. Let $P$ be an arbitrary representative set of $\Pi$. We call $P$ the set of \emph{types} of $V_0$, and we call each element of $P$ a \emph{type}.
Since we assume that $\max\req(v) \leq \deg(v)$ for all $v \in V$, it holds for every $v \in V_0$ that $\req(v)\subseteq [k]_0$. The following observation follows.

\begin{observation}\label{vcn:obs:types-count}
It holds that $|P| \leq 2^{2k+1}$.
\end{observation}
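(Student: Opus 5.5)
The plan is a direct counting argument. A type is fixed by two pieces of data: the neighbourhood $N(v)$ and the degree set $\req(v)$. I would count the possible values of each separately and multiply. Any two vertices of $V_0$ that agree on both pieces of data are $\sim$-equivalent, so $|\Pi|$, and hence $|P|$, is at most the product of the two counts.

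First, the neighbourhood. Since $X$ is a vertex cover, $V_0$ is an independent set, so $N(v)\subseteq X$ for every $v\in V_0$. This leaves at most $2^{|X|}=2^k$ choices for $N(v)$.

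Second, the degree set. As noted just before the observation, the preprocessing $\max\req(v)\le\deg(v)\le |X|=k$ gives $\req(v)\subseteq[k]_0$. The set $[k]_0$ has $k+1$ elements, so there are at most $2^{k+1}$ choices for $\req(v)$.

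Multiplying, $|P|\le 2^k\cdot 2^{k+1}=2^{2k+1}$. No step is a real obstacle. The only care needed is the justification that $\req(v)\subseteq[k]_0$: the preprocessing removes values above $\deg(v)$, and $\deg(v)\le k$ because all neighbours of $v$ lie in $X$. The preceding text already records this.
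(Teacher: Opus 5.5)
Your proof is correct and is essentially the counting argument the paper relies on. The paper leaves it implicit after noting $\req(v)\subseteq[k]_0$. Types are determined by $N(v)\subseteq X$, giving at most $2^k$ options, and by $\req(v)\subseteq[k]_0$, giving at most $2^{k+1}$ options. The second bound holds because the preprocessing enforces $\max\req(v)\le\deg(v)\le k$, so $|P|\le 2^{2k+1}$.
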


Now we describe a reduction from some modified instance of the \setmstp problem to an instance of generalized $B$-matching with some specific properties.
A generalized $B$-matching in $G$ for some function $B\colon V\rightarrow 2^{\mathbb{N}_0}$ is a subset of edges $M \subseteq E$ such that $\deg_M(v) \in B(v)$ for all $v \in V$.
We will cite an algorithm that solves the resulting instance of generalized $B$-matching in FPT time, and we use it as a black box in our algorithm.

\begin{reduction}\label{vcn:reduction}
    Let $S\subseteq V_0$ be some subset of vertices, such that $1\in\req(v)$ for all $v \in V_0 \setminus S$, and let $T_0 = (X \cup S, E_0)$ be a spanning tree of $G[X \cup S]$ such that $\deg_{T_0}(v) \in \req(v)$ for all $v \in S$. 
	    Let $I_S^{T_0}=(G[X, V_0\setminus S], B)$ be the instance of generalized $B$-matching, where $N=|V(G)|$ and $B\colon (X \cup (V_0\setminus S)) \rightarrow 2^{[N]_0}$ is defined as follows. For $x \in X$, we define 
	    \[
	        B(x) = [N]_0 \cap \left(\req(x)-\deg_{T_0}(x)\right),
	    \]
    and for $v \in V_0\setminus S$, we define $B(v) =\{1\}$.
\end{reduction}

Intuitively, $S$ is a set of at most $k$ vertices that are used to connect the vertices of $X$ in a spanning tree of $G$. Hence, all vertices in $V_0 \setminus S$ must have degree one in the spanning tree. In our algorithm, instead of iterating over all subsets $S$ of $V_0$, we aggregate the vertices of $V_0$ by their types, and we only guess the number of vertices of each type that are used to connect the vertices of $X$.

\begin{observation}\label{vcn:obs:reduction-properties}
    For the instance $I_S^{T_0} = (X, V_0\setminus S, B)$ of generalized $B$-matching, it holds that $|X| \leq k$, and that $B(v) = \{1\}$ for all $v \in V_0\setminus S$.
\end{observation}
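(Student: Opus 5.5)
This observation is a direct unpacking of the definitions in \cref{vcn:reduction}, so I would verify its two claims separately.

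First, $X$ is by assumption a vertex cover of $G$ of size exactly $k$. Hence $|X| = k \leq k$ holds trivially. The vertex set of the bipartite graph $G[X, V_0\setminus S]$ is $X \,\dot\cup\, (V_0\setminus S)$. This is because $V_0 = V\setminus X$ makes the two sides disjoint. So the part of the instance corresponding to $X$ has at most $k$ vertices.

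Second, the definition of $B$ in \cref{vcn:reduction} explicitly sets $B(v)=\{1\}$ for every $v\in V_0\setminus S$. For this to be a legitimate constraint set, we need $\{1\}\subseteq 2^{[N]_0}$. This holds since $N=|V(G)|\geq 1$, because $G$ contains the vertex cover $X$ together with the vertices outside it.

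One could additionally remark that this constraint is consistent with the original instance. Indeed, the hypothesis of \cref{vcn:reduction} guarantees $1\in\req(v)$ for all $v\in V_0\setminus S$. Thus a degree-one choice for these vertices is admissible in \textsc{Set of Degrees MST}. This is not needed for the observation itself, only for its later use.

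There is no real obstacle here. The only care needed is to keep straight that the instance is written as the triple $(X, V_0\setminus S, B)$, standing for the bipartite graph $G[X,V_0\setminus S]$ with its two sides made explicit.
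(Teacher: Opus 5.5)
Your proposal is correct and matches the paper, which states this as an observation following immediately from the definitions in \cref{vcn:reduction}: $|X|=k$ by the choice of $X$, and $B(v)=\{1\}$ for $v\in V_0\setminus S$ is set explicitly. One small notational slip: you mean $\{1\}\in 2^{[N]_0}$ (equivalently $1\in[N]_0$), not $\{1\}\subseteq 2^{[N]_0}$.
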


We now show the correctness of the reduction.  
\begin{lemma}\label{vcn:lem:reduction-correctness}
    The tree $T_0$ can be extended to a spanning tree of $G$ that respects $\req$ if and only if the instance $I_S^{T_0}$ of generalized $B$-matching has a solution.
\end{lemma}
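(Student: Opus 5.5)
We prove both directions directly, reading "extended" as: a spanning tree $T$ of $G$ respecting $\req$ with $E_0\subseteq E(T)$ whose remaining edges all go between $X$ and $V_0\setminus S$. This is the reading intended by the reduction, since vertices of $V_0\setminus S$ are meant to be leaves. Since $X$ is a vertex cover, every edge of $G$ not inside $G[X\cup S]$ lies in $G[X,V_0\setminus S]$.

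For the forward direction, let $T$ be such an extension and set $M\coloneqq E(T)\setminus E_0$. First we check that $M\subseteq E(G[X,V_0\setminus S])$. Every edge of $T$ inside $X\cup S$ must already lie in $E_0$: otherwise $T[X\cup S]$ would contain the spanning tree $T_0$ together with an extra edge, which closes a cycle. Hence all remaining edges of $T$ have an endpoint in $V_0\setminus S$, and therefore the other endpoint lies in $X$.

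Next we check the degree conditions of the $B$-matching. A vertex $v\in V_0\setminus S$ has $\deg_M(v)=\deg_T(v)\ge 1$ because $T$ is spanning. Its neighbours all lie in $X$, which is connected through $T_0$, so two $T$-edges at $v$ would close a cycle. Thus $\deg_M(v)=1\in B(v)$. A vertex $x\in X$ has $\deg_M(x)=\deg_T(x)-\deg_{T_0}(x)$, which lies in $\req(x)-\deg_{T_0}(x)$ and in $[N]_0$, so $\deg_M(x)\in B(x)$.

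For the backward direction, let $M$ be a solution of $I_S^{T_0}$ and set $T\coloneqq T_0\cup M$. Each $v\in V_0\setminus S$ is incident to exactly one edge of $M$, and that edge goes to $X$. So $T$ is obtained from the tree $T_0$ by attaching every vertex of $V_0\setminus S$ as a pendant leaf, and hence $T$ is a spanning tree of $G$.

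It remains to verify the degree constraints of $T$. For $x\in X$ we have $\deg_T(x)=\deg_{T_0}(x)+\deg_M(x)$. Since $\deg_M(x)\in\req(x)-\deg_{T_0}(x)$, this gives $\deg_T(x)\in\req(x)$. For $v\in S$, no edge of $M$ is incident to $v$, so $\deg_T(v)=\deg_{T_0}(v)\in\req(v)$ by the hypothesis on $T_0$. For $v\in V_0\setminus S$, $\deg_T(v)=1\in\req(v)$ by the hypothesis on $S$.

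The only delicate point is the interpretation of "extended" in the forward direction. A spanning tree of $G$ containing $E_0$ automatically has no further edges inside $X\cup S$, by the cycle argument above. However, "extended" must also force the vertices of $V_0\setminus S$ to be leaves, and the leaf argument already relies on $X$ being connected in $T_0$. So we fix the interpretation that all edges outside $E_0$ are $X$–$(V_0\setminus S)$ edges, and we note that it is in fact forced once $T\supseteq E_0$ is required.
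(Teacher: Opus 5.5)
Your proof is correct and follows the same route as the paper. In one direction you take $T=T_0\cup M$, which attaches the vertices of $V_0\setminus S$ as leaves to $X$; in the other you take $M=E(T)\setminus E_0$ and use the leaf argument, which relies on $X$ being connected by $T_0$. You are slightly more careful than the paper: you check explicitly, via the cycle argument, that $M$ lies in $G[X,V_0\setminus S]$, and that $\deg_M(x)\in[N]_0$, both of which the paper leaves implicit.
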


\begin{proof}
    Let $M$ be a solution of $I_S^{T_0}$. Let $T=(V, E_0\cup M)$. We claim that $T$ is a spanning tree of $G$ that extends $T_0$ and respects $\req$.
    Since it holds that $B(v) = \{1\}$ for all $v \in V_0\setminus S$, the graph $T$ is a spanning tree of $G$ extending $T_0$, by attaching each vertex of $V_0\setminus S$ to some vertex of $X$. We argue that $T$ respects $\req$. It holds for each $v \in S$ that $\deg_T(v) = \deg_{T_0}(v) \in \req(v)$ by the definition of $T_0$. For each $v \in V_0\setminus S$, it holds that $\deg_T(v) = 1 \in \req(v)$ by the choice of $S$. Finally, for $v \in X$, it holds that $\deg_T(v) = \deg_{T_0}(v) + \deg_M(v)$. Since $B(v) = \req(v)-\deg_{T_0}(v)$, and since $M$ is a solution of $I_S^{T_0}$, it follows that $\deg_T(v)  = \deg_M(v) + \deg_{T_0}(v) \in \req(v)$.

    Now let $T $ be a spanning tree of $G$ that extends $T_0$ and respects $\req$. We claim that $M = E(T)\setminus E_0$ is a solution of $I_S^{T_0}$. Since it holds for each $v \in V_0\setminus S$ that $N(v)\subseteq X$, and since $T_0$ already connects the vertices of $X$, it must hold that $\deg_T(v) = 1$ for all $v \in V_0\setminus S$. Hence, $\deg_M(v) = 1 \in B(v)$ for all $v \in V_0\setminus S$. Moreover, for each $v \in X$, it holds that $\deg_T(v) \in \req(v)$, and that $\deg_T(v) = \deg_{T_0}(v) + \deg_M(v)$, and hence $\deg_M(v) \in \req(v)-\deg_{T_0}(v) = B(v)$.
\end{proof}

Next, we refer to the algorithm by Gutin et al.~\cite{DBLP:journals/algorithmica/GutinKSSY12} that solves generalized $B$-matching in FPT time parameterized by the number of vertices on one side, if all vertices on the other side have singleton requirements; see \cref{vcn:thm:bmatching-fpt} below.
 We use their algorithm as a black box to check whether $I_S^{T_0}$ has a solution in FPT time parameterized by $k$.

\begin{theorem}[\cite{DBLP:journals/algorithmica/GutinKSSY12}] \label{vcn:thm:bmatching-fpt}
    Given a bipartite graph $G = (U \dot{\cup} V, E)$, $k = |V|$, $n = |E|$, and a degree list assignment $K$ with $|K(u)| \leq 1$ for all $u \in U$, we can decide whether $G$ has a general $K$-factor in time
    $2^{k^22^k+k^2}(k+1)^{k2^k+k}n^{O(1)}$.
\end{theorem}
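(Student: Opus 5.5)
Since this theorem is imported from Gutin et al.~\cite{DBLP:journals/algorithmica/GutinKSSY12}, we only sketch the argument we would use to re-derive it. The plan is to exploit the fact that $U$ is unbounded but can be compressed into few \emph{types}. Two vertices $u,u'\in U$ with $N(u)=N(u')$ and $K(u)=K(u')$ are interchangeable. Since $|V|=k$ and every feasible singleton satisfies $K(u)\subseteq[k]_0$ (vertices with $K(u)=\emptyset$ make the instance trivially negative), there are at most $2^k(k+1)$ types $\tau=(N_\tau,d_\tau)$. Let $n_\tau$ be the number of vertices of type $\tau$. A vertex of type $\tau$ chooses a $d_\tau$-subset of $N_\tau$, so a solution is fully described by the degree vector $D\in\mathbb N_0^{V}$ it induces on $V$. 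The whole question therefore becomes: which vectors $D$ are realizable, and does some realizable $D$ satisfy $D_v\in K(v)$ for all $v\in V$?

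\textbf{Step 1: characterize realizable $D$ for a single type.} For a single type, $D$ is a sum of $n_\tau$ bases of the uniform matroid $U_{d_\tau}(N_\tau)$. We would invoke the integer decomposition property of matroid base polytopes. It says the realizable vectors are exactly the integer points $y$ with $0\le y_v\le n_\tau$ on $N_\tau$, $y_v=0$ off $N_\tau$, and $\sum_v y_v=d_\tau n_\tau$. Equivalently, they are the integer bases of the polymatroid with rank function $f_\tau(W)=n_\tau\min(|W\cap N_\tau|,d_\tau)$.

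\textbf{Step 2: combine the types.} The Minkowski sum of integral polymatroid base sets is again the integral base set of the sum polymatroid. Hence $D$ is realizable if and only if
\[
D(W)\le f(W):=\sum_\tau n_\tau\min(|W\cap N_\tau|,d_\tau)\quad\text{for all } W\subseteq V,\qquad D(V)=f(V).
\]
This is equivalently a flow/transportation feasibility check, and it involves only $2^k$ inequalities.

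\textbf{Step 3: the main obstacle.} We must decide whether some $D$ with $D_v\in K(v)$ satisfies the $2^k+1$ constraints of Step 2. The lists $K(v)$ are arbitrary subsets of $[|U|]_0$, so we cannot simply guess $D$. This step drives the $(k+1)^{k2^k}$-type factor in the running time. Our first attempt is to guess the combinatorial shape of an extremal solution: a chain of tight sets $\emptyset\subsetneq W_1\subsetneq\dots\subsetneq W_m=V$ (at most $k!$ choices), together with the multiset information per type describing how each type's degree is distributed along the chain. Once this is fixed, the feasibility region decouples into independent interval constraints on the $D_v$ plus a few equalities on sums over chain levels. We would then handle each level by a subset-sum over $k$ lists with a fixed target, which is solvable in polynomial time by dynamic programming over the values that can actually occur (at most $|U|+1$ per coordinate). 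If the chain guess does not decouple cleanly, the fallback is to write the constraints of Step 2 as an ILP in the variables $x_{\tau,Y}$ (at most $2^k(k+1)\cdot 2^k$ of them). We would then replace each nonconvex constraint $D_v\in K(v)$ by guessing, for each $v$, the relevant gap of $K(v)$ relative to the guessed bounds, and solve the resulting ILP with Lenstra-type algorithms in time depending only on the number of variables. This plan explains the $2^{k^2 2^k}$ factor.
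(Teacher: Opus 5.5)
\textbf{There is a genuine gap in Step~3; Steps~1 and~2 are correct.} The paper does not prove this statement; it imports it from Gutin et al.\ and uses it as a black box, so your sketch has to stand on its own.

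Steps~1 and~2 hold: integer decomposition for uniform-matroid base polytopes plus the integral polymatroid sum theorem show that the realizable degree vectors on $V$ are exactly the integer bases of $f(W)=\sum_\tau n_\tau\min(|W\cap N_\tau|,d_\tau)$. But this only reformulates the problem. All of the difficulty is in Step~3: deciding whether some integer base $D$ of $f$ has $D_v\in K(v)$, where the lists $K(v)\subseteq[|U|]_0$ are arbitrary and may have gaps. Neither of your two routes resolves it.

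\emph{The chain guess.} Fixing tight sets $W_1\subsetneq\dots\subsetneq W_m$ confines $D$ to a face that is the product of the base polytopes of the minors $f_i(Z)=f(W_{i-1}\cup Z)-f(W_{i-1})$ on the levels $W_i\setminus W_{i-1}$. Inside each level, every constraint $D(Z)\le f_i(Z)$ remains; not being tight does not remove it. As soon as a level has four or more elements, these constraints in general include coupled ones such as $D_a+D_b\le f_i(\{a,b\})$. So nothing decouples into intervals plus level sums. Refining the chain until all levels are singletons pins $D$ to a greedy vertex of the base polytope. A list-feasible $D$ is typically an interior point, precisely because the lists have gaps. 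Finally, the ``per-type multiset information'' along the chain consists of counts up to $|U|$, which cannot be guessed in $f(k)$ ways.

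\emph{The ILP fallback.} It fails for the same reason. A constraint $D_v\in K(v)$ for an arbitrary set is not a union of boundedly many intervals. Guessing ``the relevant gap'' therefore costs up to $|K(v)|$ branches per $v$, i.e.\ $(|U|+1)^k$ in total, which is only \XP. Moreover, Lenstra-type algorithms on roughly $(k+1)4^k$ variables run in time of the form $2^{O(k^2 4^k)}$. So the claim that this plan explains the $2^{k^22^k}$ factor is unfounded.

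\textbf{What is missing} is a structural ingredient that controls how the $V$-vertices are coupled. Your polynomial-time subset-sum DP handles arbitrary lists for a single equality, and more generally whenever the $V$-vertices interact in a tree-like way. It does not handle the $2^k$ interlocking constraints of Step~2. The cited work bridges exactly this gap: by its own summary, it reduces the instance to a bounded number of \emph{acyclic} instances, which is what the $f(k)$ factor pays for, and solves each in polynomial time by dynamic programming. Without such an acyclicity lemma, or some other way of handling the coupled polymatroid constraints against lists with gaps, your sketch does not yield an \FPT\ algorithm, let alone one with the stated running time.
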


\begin{algorithm}\label{alg:vcn}
    Given the instance $(G=(V,E), \req)$ of \setmstp, we first iterate over all multisets $P'$ of $P$ of size at most $k - 1$, and for each such multiset $P'$, we let $S$ be an arbitrary subset of $V_0$ such that for every type $C \in P$, the number of vertices of type $C$ in $S$ is exactly $\#_{P'}(C)$; if no such subset exists, we skip this multiset.
    If it holds for each $v\in V_0\setminus S$ that $1 \in \req(v)$, we iterate over all spanning trees $T_0$ of $G[X \cup S]$ such that $\deg_{T_0}(v) \in \req(v)$ for all $v \in S$. For each such spanning tree, we construct the instance $I_S^{T_0}$ of generalized $B$-matching as described in \cref{vcn:reduction}, and we use \cref{vcn:thm:bmatching-fpt} to check whether $I_S^{T_0}$ has a solution. If $I_S^{T_0}$ has a solution for some choice of $P'$, and $T_0$, we return yes. We return no otherwise.  
\end{algorithm}

\begin{lemma}\label{vcn:lem:alg-correctness}
    \cref{alg:vcn} returns yes if and only if there exists a spanning tree of $G$ that respects $\req$.
\end{lemma}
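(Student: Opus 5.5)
The proof splits into the two directions, and both rest on \cref{vcn:lem:reduction-correctness}.

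\emph{Soundness.} Suppose \cref{alg:vcn} returns yes. Then for some multiset $P'$ there is a set $S\subseteq V_0$ with $1\in\req(v)$ for every $v\in V_0\setminus S$, and a spanning tree $T_0$ of $G[X\cup S]$ with $\deg_{T_0}(v)\in\req(v)$ for all $v\in S$, such that $I_S^{T_0}$ has a solution. These are exactly the hypotheses of \cref{vcn:reduction}. So \cref{vcn:lem:reduction-correctness} extends $T_0$ to a spanning tree of $G$ that respects $\req$.

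\emph{Completeness.} Let $T$ be a spanning tree of $G$ respecting $\req$. I would define $S^\ast\subseteq V_0$ as the set of vertices of $V_0$ with $\deg_T(v)\ge 2$, and then argue three things.
\begin{itemize}
\item $T[X\cup S^\ast]$ is a spanning tree of $G[X\cup S^\ast]$. Every vertex of $V_0\setminus S^\ast$ is a leaf of $T$, since $V_0$ is independent and $\deg_T\ge1$ (assuming $|V|\ge2$). Deleting leaves keeps a tree.
\item $|S^\ast|\le k-1$. $T[X\cup S^\ast]$ has $|X|+|S^\ast|-1$ edges. Each vertex of $S^\ast$ has all its neighbours in $X$ and contributes at least $2$ edges, and these edge sets are disjoint. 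Hence $2|S^\ast|\le |X|+|S^\ast|-1$, giving $|S^\ast|\le |X|-1=k-1$.
\item $1\in\req(v)$ for $v\in V_0\setminus S^\ast$, because such $v$ is a leaf of $T$. Also $\deg_{T[X\cup S^\ast]}(v)=\deg_T(v)\in\req(v)$ for $v\in S^\ast$, since all neighbours of $v$ lie in $X$.
\end{itemize}
This shows that $S^\ast$ itself would be accepted. The remaining step, and the main obstacle, is that the algorithm does not choose $S^\ast$. It takes the multiset $P'$ of types of $S^\ast$, which has size at most $k-1$, and picks an arbitrary $S$ with the same type counts.

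To handle this, I would use a type-preserving bijection $\sigma\colon V_0\to V_0$ that maps $S^\ast$ onto $S$ and fixes $X$. It swaps vertices within each equivalence class of $\sim$. Since equivalent vertices have the same neighbourhood (contained in $X$) and the same $\req$, $\sigma$ is an automorphism of $G$ preserving $\req$.

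Then $\sigma(T)$ is a spanning tree of $G$ respecting $\req$, and its set of high-degree $V_0$-vertices is exactly $S$. So every $v\in V_0\setminus S$ satisfies $1\in\req(v)$, and the algorithm does not skip $P'$. Moreover $T_0:=\sigma(T)[X\cup S]$ is one of the enumerated trees, and $\sigma(T)$ extends it. By \cref{vcn:lem:reduction-correctness}, $I_S^{T_0}$ has a solution, so the algorithm answers yes.

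For $|V|=1$ the claim is trivial.
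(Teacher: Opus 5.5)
Your proof is correct and follows essentially the same route as the paper: take the non-cover vertices of degree at least $2$ in $T$, move them by a type-preserving bijection onto the set $S$ fixed by the algorithm, and apply \cref{vcn:lem:reduction-correctness} in each direction. Extending that bijection to a full $\req$-preserving automorphism $\sigma$ of $G$ gives the paper's extended tree $T'$ in one step as $\sigma(T)$, and it also makes explicit two points the paper leaves implicit: the bound $|S^\ast|\le k-1$, and the check that $1\in\req(v)$ for all $v\in V_0\setminus S$.
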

\begin{proof}
    Assume that there exists a spanning tree $T$ of $G$ that respects $\req$. Let $S_0 = \{v\in V_0\colon \deg_T(v) \geq 2\}$, and let
     $T_0$ be the restriction of $T$ to $X\cup S_0$.
    Let $P'$ be the multiset of $P$ where $\#_{P'}(v)$ is the number of vertices of type $v$ in $S_0$. Since $S_0$ contains at most $k - 1$ vertices, it holds that $|P'| \leq k - 1$. Hence, the algorithm will process $P'$ during its iteration over $P$. Let $S$ be the set fixed by the algorithm for this choice of $P'$, and let $\pi$ be some arbitrary bijection between $S_0$ and $S$ such that $\pi(v)\sim v$ for all $v \in S_0$. Let us extend $\pi$ with the identity on $X$, and let $T_0'$ be the graph resulting from $T_0$ by replacing each vertex $v$ of $S_0$ with $\pi(v)$ and each edge $\{u, v\}$ of $T_0$ with $\{\pi(u), \pi(v)\}$. The edge $\{\pi(u), \pi(v)\}$ exists since $\pi$ preserves the type, and hence, the neighborhood of each vertex in $S_0$. Moreover, it holds that $\deg_{T_0'}(v) \in \req(v)$ for all $v \in S$, since $\pi$ preserves the degree requirement of each vertex in $S_0$ as well. Hence, our algorithm will consider $T_0'$ in one of its iterations. Let $T'$ be the extension of $T_0'$ where we attach to each vertex $v$ of $X$ the same number of neighbors of each type as in $T$. This is possible, since $S_0$ and $S$ have the same number of vertices of each type. Then $T'$ is a spanning tree of $G$ that extends $T_0'$ and respects $\req$. By \cref{vcn:lem:reduction-correctness}, it follows that the instance $I_S^{T_0'}$ of generalized $B$-matching has a solution, and hence the algorithm will output YES for these choices of $P'$, and $T_0'$.

    For the other direction, if the algorithm returns yes, then there exists a choice of $P'$, and $T_0$ such that the instance $I_S^{T_0}$ of generalized $B$-matching has a solution, where $S$ is the set fixed by the algorithm for this choice of $P'$. By \cref{vcn:lem:reduction-correctness}, it follows that $T_0$ can be extended to a spanning tree of $G$ that respects $\req$, which proves the claim.
\end{proof}

\begin{lemma}\label{vcn:lem:alg-time}
    \cref{alg:vcn} runs in time $f(k)n^{O(1)}$.
\end{lemma}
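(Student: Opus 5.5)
The plan is to bound separately the three nested loops of \cref{alg:vcn} and the cost of each inner step; the product of these bounds gives $f(k)n^{O(1)}$.

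\emph{Multisets.} By \cref{vcn:obs:types-count} we have $|P|\le 2^{2k+1}$. Hence the number of multisets $P'$ of $P$ of size at most $k-1$ is at most $(|P|+1)^{k-1}\le (2^{2k+1}+1)^{k}$, which is a function of $k$ only. For a fixed $P'$, we compute the type classes and pick $S$ (or detect that no such $S$ exists) in polynomial time, by grouping $V_0$ according to $(N(v),\req(v))$; this takes $O(n^2)$ time, since each $\req(v)\subseteq[k]_0$. The check that $1\in\req(v)$ for all $v\in V_0\setminus S$ is linear.

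\emph{Spanning trees.} The graph $G[X\cup S]$ has at most $|X|+|S|\le 2k-1$ vertices. By Cayley's formula it has at most $(2k-1)^{2k-3}$ spanning trees, which we can enumerate, for instance over all edge subsets of size $|X\cup S|-1$, in at most $2^{\binom{2k}{2}}\cdot\mathrm{poly}(k)$ time. Checking the degree condition on $S$ for each tree is trivial.

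\emph{Matching instance.} Building $I_S^{T_0}$ is polynomial in $n$. By \cref{vcn:obs:reduction-properties}, one side $X$ has at most $k$ vertices and every vertex on the other side has a singleton list $\{1\}$. So \cref{vcn:thm:bmatching-fpt} applies with $U=V_0\setminus S$ and $V=X$, running in time $2^{k^22^k+k^2}(k+1)^{k2^k+k}n^{O(1)}$.

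Multiplying the three factors gives the bound $f(k)n^{O(1)}$. The only step requiring some care, and the one I regard as the main point, is matching the orientation of \cref{vcn:thm:bmatching-fpt}: the singleton-list side must be $V_0\setminus S$, and the lists $B(x)$ on $X$ may be arbitrary subsets of $[N]_0$. This is exactly the setting of that theorem, so the argument goes through.
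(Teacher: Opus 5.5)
Your proof is correct and matches the paper's argument. Like the paper, you bound the number of multisets $P'$ via \cref{vcn:obs:types-count}, bound the number of spanning trees of the graph $G[X\cup S]$ (at most $2k-1$ vertices) by a function of $k$, charge each generalized $B$-matching call to \cref{vcn:thm:bmatching-fpt}, and multiply; your explicit bounds (for instance $(2^{2k+1}+1)^{k}$ multisets and Cayley's formula) and your check that the singleton-list side is $V_0\setminus S$ just make the paper's unspecified functions $f_0,f_1,f_3$ concrete.
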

\begin{proof}
    It holds by \cref{vcn:obs:types-count} that there are at most $f_0(k)$ multisets $P'$ of $P$ of size at most $k - 1$ for some function $f_0$. For each such multiset, there exists at most $f_1(k)$ spanning trees of $G[X \cup S]$ for some function $f_1$, since such a subgraph has at most $2k-1$ vertices.
    For each choice of $P'$ and $T_0$, we can construct the instance $I_S^{T_0}$ of generalized $B$-matching in polynomial time, and check in time $f_3(k)n^{O(1)}$ whether $I_S^{T_0}$ has a solution using \cref{vcn:thm:bmatching-fpt}. In total, the algorithm runs in time $f(k)n^{O(1)}$ for some computable function $f$.    
\end{proof}

\begin{proof}[Proof of \cref{thm:vcfpt}]
    Given an instance $(G, \req)$ of \setmstp, where $G$ has vertex cover number $k$, 
    we start by computing a vertex cover $X$ of $G$ of size $k$. This can be done in time $f_0(k)n^{O(1)}$ for some function $f_0(k)\leq1.2738^k$ due to the algorithm by Chen et al.~\cite{DBLP:conf/mfcs/ChenKX06}. Then, 
    we run \cref{alg:vcn} to check whether $G$ admits a spanning tree that respects $\req$. By \cref{vcn:lem:alg-correctness}, this algorithm is correct, and it runs in time $f(k)n^{O(1)}$ by \cref{vcn:lem:alg-time}.
\end{proof}

\section{Concluding Remarks}
\label{sec:conclusion}
While our results provide new---and arguably surprising---insights into the computation of bounded-degree spanning trees, they do not yet provide a full understanding of the complexity of these problems. Perhaps most prominently, we do not yet know whether \textsc{Set of Degrees MST} is fixed-parameter tractable w.r.t.\ the vertex cover number in the general setting of edge-weighted graphs; we believe that settling this question hinges on obtaining a fixed-parameter algorithm for generalized $B$-matching on weighted bipartite graphs~\cite{DBLP:journals/algorithmica/GutinKSSY12}.

As a first indication that the problem perhaps ought to be fixed-parameter tractable, we note that our Theorem~\ref{thm:td} can be leveraged to solve the weighted case when every vertex has at most a bounded number of admissible degrees (i.e., generalizing the setting of the lower bound in Theorem~\ref{thm:tdhard}):

\begin{proposition}
\textsc{\textup{Set of Degrees MST}} is fixed-parameter tractable w.r.t.\ the vertex cover number plus $\max_{v\in V(G)}|\req(v)|$.
\end{proposition}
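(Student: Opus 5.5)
The plan is to reduce to the ILP machinery behind Theorem~\ref{thm:td}. We first branch over the admissible degrees of the few high-degree vertices, so that every remaining set-of-degrees constraint involves only small numbers. Let $X$ be a vertex cover of size $k$, computed as in the proof of Theorem~\ref{thm:vcfpt}, let $V_0=V(G)\setminus X$ and $c=\max_v|\req(v)|$. As in Section~\ref{sec:fpt_vc}, we may assume $\max\req(v)\le\deg_G(v)$, so $\req(v)\subseteq[k]_0$ for all $v\in V_0$.

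\textbf{Step 1 (branching).} For every $x\in X$ we guess a value $d_x\in\req(x)$. This gives at most $c^k$ branches. In each branch the vertices of $X$ carry specified degrees $d_x$, while vertices of $V_0$ keep their sets $\req(v)\subseteq[k]_0$. The answer is the best solution over all branches.

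\textbf{Step 2 (elimination tree).} We take the elimination tree $R$ that is a path on $X$ in arbitrary order, with every vertex of $V_0$ attached as a leaf below the last vertex of this path. This is valid because $V_0$ is independent, and its height is at most $k+1$.

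\textbf{Step 3 (modified ILP).} We build the ILP of Section~\ref{sec:fpt_td} for $R$ unchanged, except for the degree constraints.
\begin{itemize}
\item For $x\in X$, the inequality in \eqref{eq:degree} becomes an equality with right-hand side $d_x$ (no slack variable). Since $d_x$ appears only in $b$, this adds no large coefficient to $A$.
\item For $v\in V_0$, we add binary variables $z_{v,a}$ for $a\in\req(v)$, associated with $v$ (that is, $\nu(z_{v,a})=v$), subject to $\sum_a z_{v,a}=1$. We replace \eqref{eq:degree} by the condition that its left-hand side equals $\sum_{a\in\req(v)}a\,z_{v,a}$. For a leaf $v$, this left-hand side is just $\sum_S|S|s_{v,S}$.
\end{itemize}

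\textbf{Step 4 (correctness).} Lemmas~\ref{lemma:td_solution_properties}, \ref{lemma:td_soundness} and~\ref{lemma:td_completeness} go through verbatim, since they use the degree constraints only to read off $\deg_T(v)$ via Observation~\ref{obs:owned_incident_edges}. Feasible solutions therefore correspond exactly to spanning trees $T$ with $\deg_T(x)=d_x$ for $x\in X$ and $\deg_T(v)\in\req(v)$ for $v\in V_0$, and the objective equals the tree weight.

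\textbf{Step 5 (parameter bounds).} We check the hypotheses of \cite[Theorem~6]{treedepthilp}.
\begin{itemize}
\item \emph{Coefficients.} The new coefficients are $a\le k$ and $\pm1$, so $\|A\|_\infty\le k$ still holds.
\item \emph{Dual treedepth.} Each new constraint lies in $\mathcal C_v$ and uses only variables with $\nu\in D_v^+$. Hence the same path-replacement argument bounds $\td(G_D(A))$ by a function of $k$, now also of $c$, since $|\mathcal C_v|$ grows by $O(1)$.
\end{itemize}
The running time is then $c^k\cdot f(k)\,n^{O(1)}$.

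\textbf{Main obstacle.} The key point is the one handled in Step 1. Encoding ``degree $\in\req(x)$'' directly for $x\in X$ would need coefficients as large as $\deg_G(x)$, possibly $\Theta(n)$, which breaks the bounded-$\|A\|_\infty$ requirement. Guessing $d_x$ moves these large values into the right-hand side $b$. Apart from that, I expect the only remaining care needed is verifying that adding the $z$-variables does not spoil the ``central fact'' $\nu(c)\subseteq D_v^+$ used to bound the dual treedepth.
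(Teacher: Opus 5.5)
Your proposal is correct and follows essentially the same route as the paper's proof sketch. You guess $d_x\in\req(x)$ for each cover vertex and turn its degree constraint into an equality with right-hand side $d_x$; for non-cover vertices, where $\req(v)\subseteq[k]_0$, you add one-hot choice variables and replace the bound by $\sum_{a} a\,z_{v,a}$, so the coefficients and the dual treedepth stay bounded and \cite[Theorem~6]{treedepthilp} applies. Your extra details (the explicit height-$(k+1)$ elimination tree, and moving the large degree values into $b$) are sound; the dual treedepth bound does not actually depend on $c$, since each $\mathcal C_v$ gains only one constraint, so $c$ enters solely through the $c^k$ branches.
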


\begin{proof}[Proof Sketch]
We perform a Turing reduction to a slight modification of the treedepth ILP:
Guess the degree for each vertex of the vertex cover.
For each such vertex, replace the degree constraint \eqref{eq:degree} by equality to the guessed degree.
For a non-cover vertex, we can w.l.o.g.\ assume $\req(v) \subseteq [\vcn(G)]_0$.
Create a binary variable for each choice in $\req(v)$, constrained so that the sum must be one.
Replace the $\leq \req(v)$ in the degree constraint \eqref{eq:degree} for $v$ by
$=\sum_{d\in\req(v)} d x_{v,d}$, where $x_{v,d}$ is the choice variable for $d\in\req(v)$.
Observe that the ILP-coefficients and the treedepth of the dual graph remain bounded by the parameter.
Hence, we can still solve the ILP for this branch using \cite[Theorem~6]{treedepthilp}.
\end{proof}

\begin{figure}[t]
    \centering
    \includegraphics[page=6]{ipec_figures}
   \caption{Structure preserving reduction from \textsc{Bounded Degree MST} to \textsc{Specified Degree MST}.
Fix an instance $(G, \req, \weightf)$ of the former.
We build an instance $(G', \req', \weightf')$ of \textsc{Specified Degree MST} as follows.
To every $v \in V(G)$, add $\req(v)-1$ pendant vertices.
Add a vertex $x$, connect all new pendants to $x$, and add the edge $xy$ where $y \in V(G)$.
Set $\req'(v)=\req(v)$ for $v \in V(G) \setminus \{ y \}$,
$\req'(y)=\req(y)+1$, 
$\req'(v) = 1$ for all new pendants, 
and $\req'(x)$ so that $\sum_{v \in V(G')} \req'(v) = 2 |V(G')| - 2$.
Set the weight of all new edges to zero (or any other constant) and inherit the remaining weights. The blue edges depict two corresponding solutions.}
    \label{figure:bounded_degree_to_specified_degree}
\end{figure}

Finally, a reader might be curious why the two ``simpler'' variants---\textsc{Specified Degree MST} and \textsc{Bounded Degree MST}---seem to exhibit very similar parameterized complexity behavior. In fact, there exists a reduction from the latter to the former which preserves almost all of the graph structure of the original graph (see \cref{figure:bounded_degree_to_specified_degree}); see also the footnote in Section~\ref{sec:intro} for the other direction.

\bibliography{ref}
\end{document}